\crefname{equation}{Eq.}{Eqs.} 
\crefname{section}{Section}{Sections} 
\crefname{figure}{Fig.}{Figs.} 
\crefname{table}{Table}{Tables} 
\crefname{appendix}{Appendix}{Appendices} 
\crefname{theorem}{Theorem}{Theorems}
\crefname{thm}{Theorem}{Theorems}
\crefname{defi}{Def.}{Defs.}
\crefname{conjecture}{Conjecture}{Conjectures}
\crefname{proposition}{Prop.}{Props.}
\crefname{lemma}{Lemma}{Lemmas}
\crefname{corollary}{Corollary}{Corollaries}
\newcommand{\z}{\mathbf{0}}
\newcommand{\e}{\mathbf{e}}
\tikzset{elliptic state/.style={draw,ellipse}}
\newcommand{\Z}{\mathbb{Z}}
\newcommand{\R}{\mathbb{R}}
\newcommand{\C}{\mathbb{C}}
\newcommand{\abs}[1]{\left\lvert #1 \right\rvert}          %
\newcommand{\m}{\medskip} %
\newcommand{\bra}[1]{\left\langle\,#1\,\right\rvert} %
\newcommand{\ket}[1]{\left\lvert\,#1\,\right\rangle} %
\newcommand{\ketbra}[2]{\left\lvert\,#1\,\right\rangle\!\left\langle\,#2\,\right\rvert}
\newcommand{\dg}{\dagger}
\newtheoremstyle{mystyle}
  {} %
  {} %
  {\raggedright\sffamily} %
  {} %
  {\raggedright\normalsize\bfseries\sffamily} %
  {:} %
  {1em} %
  {\thmname{#1}\thmnumber{ #2}\thmnote{ (#3)}}
\theoremstyle{mystyle}
\newtheorem{thm}{Theorem}[section]
\newtheorem{cor}{Corollary}[thm]
\newtheorem{lem}[thm]{Lemma}
\newtheorem{res}{Result}
\newtheoremstyle{propositional}%
{10pt}%
{10pt}%
{%
	\addtolength{\linewidth}{-2.0em}
	\parshape 1 2.0em \linewidth} %
{}%
{\sc}%
{:}%
{.5em}%
{}%
\theoremstyle{propositional}
\newtheoremstyle{definitive}%
{10pt}%
{10pt}%
{%
	\addtolength{\linewidth}{-2.0em}
	\parshape 1 0.0em \linewidth} %
{}%
{\raggedright\normalsize\bfseries\sffamily}%
{:}%
{.5em}%
{}%
\theoremstyle{definitive}
\newtheorem{defi}{Definition}
\newtheorem*{question}{Question}
\newtheorem*{remark}{Remark}
\newcounter{numb}
\newcounter{bean}
\newcounter{count}
\newcounter{count2}
\newcounter{count3}
\definecolor{gainsboro}{RGB}{220,220,220}
\definecolor{textclr}{RGB}{131,148,150}
\definecolor{monokai}{HTML}{272822}
\definecolor{darkb}{RGB}{0, 43, 54}
\newlength{\ointextsep}
\renewcommand\@authorsaddresses{}
\DeclareRobustCommand{\HL}[1]{\ifbool{showcomments}{{\color{orange}{#1}}}{}}
\begin{document}

\raggedbottom

\title{Imposing Constraints on Driver Hamiltonians and Mixing Operators: From Theory to Practical Implementation}

\author{Hannes Leipold}
\authornote{Correspondence: \texttt{hleipold@fujitsu.com}}
\affiliation{
  \institution{Fujitsu Research of America}
  \city{Santa Clara}
  \state{CA}
  \country{USA}
}
\affiliation{
  \institution{Information Sciences Institute, University of Southern California}
  \city{Marina del Rey}
  \state{CA}
  \country{USA}
}
\affiliation{
  \institution{Department of Computer Science, University of Southern California}
  \city{Los Angeles}
  \state{CA}
  \country{USA}
}
\affiliation{
  \institution{Quantum Artificial Intelligence Laboratory (QuAIL), NASA Ames Research Center}
  \city{Moffett Field}
  \state{CA}
  \country{USA}
}
\affiliation{
  \institution{USRA Research Institute for Advanced Computer Science (RIACS)}
  \city{Mountain View}
  \state{CA}
  \country{USA}
}

\author{Federico M. Spedalieri}
\affiliation{
  \institution{Information Sciences Institute, University of Southern California}
  \city{Marina del Rey}
  \state{CA}
  \country{USA}
}
\affiliation{
  \institution{Department of Electrical and Computer Engineering, University of Southern California}
  \city{Los Angeles}
  \state{CA}
  \country{USA}
}

\author{Stuart Hadfield}
\affiliation{
  \institution{Quantum Artificial Intelligence Laboratory (QuAIL), NASA Ames Research Center}
  \city{Moffett Field}
  \state{CA}
  \country{USA}
}
\affiliation{
  \institution{USRA Research Institute for Advanced Computer Science (RIACS)}
  \city{Mountain View}
  \state{CA}
  \country{USA}
}

\author{Eleanor Rieffel}
\affiliation{
  \institution{Quantum Artificial Intelligence Laboratory (QuAIL), NASA Ames Research Center}
  \city{Moffett Field}
  \state{CA}
  \country{USA}
}

\begin{abstract}
Driver Hamiltonians and Mixing Operators that satisfy constraints is an important part of ansatz construction for many quantum algorithms. In this manuscript, we give general algebraic expressions for finding Hamiltonian terms and analogously unitary primitives, that satisfy constraint embeddings and use these to give complexity characterizations of the related problems. We prove that knowing if operators exist that enforce classical constraints is NP-Complete in the general case, but give algorithmic procedures with worse-case polynomial runtime to find any operators with a constant locality bound; a useful result since many constraints imposed admit local operators to enforce them in practice. We then give algorithmic procedures to turn these algebraic primitives into Hamiltonian drivers and unitary mixers that can be used for Constrained Quantum Annealing (CQA) and Quantum Alternating Operator Ansatz (QAOA) constructions by tackling practical problems related to finding an appropriate set of reduced generators and defining corresponding drivers and mixers accordingly.  We consider a new QAOA approach based on the maximally disjoint subset as well as higher order constraint satisfaction terms for 1-in-3 SAT, which dramatically outperform the X-mixer.
\end{abstract}

\maketitle

\textbf{Key Words: } Quantum Optimization, Variational Quantum Algorithms, Quantum Alternating Operator Ansatz, Quantum Approximate Optimization Algorithm, Quantum Constraint Satisfaction

\begin{figure}[H]
\centering
\begin{tabular}{@{}c@{}c@{}c@{}}
\includegraphics[trim=0.2cm 0 0.2cm 0, clip, height=0.12\textheight]{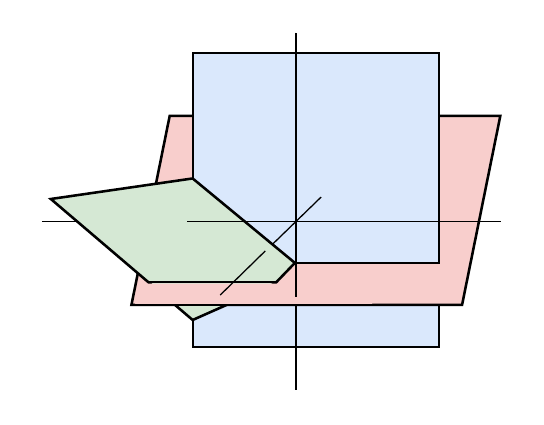} &
\includegraphics[height=0.12\textheight]{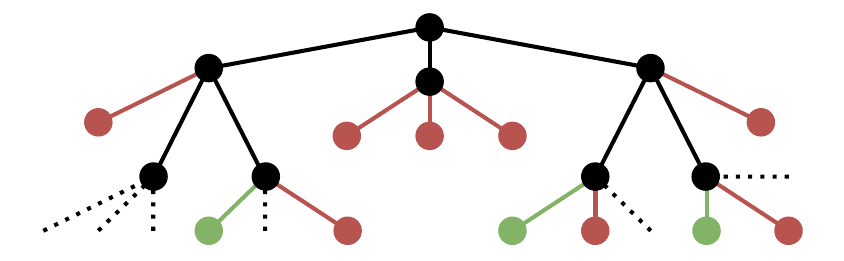} & \includegraphics[trim=0.2cm 0 0.2cm 0, clip, height=0.12\textheight]{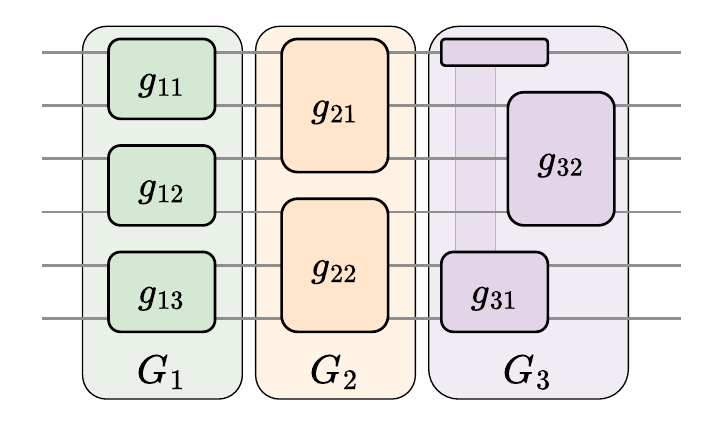} 
\end{tabular}
\caption*{}
\label{fig:bt}
\end{figure}
\pagebreak 
\tableofcontents 

\section{Introduction}

The potential for achieving quantum advantages in the broadly important but computationally challenging field of classical optimization remains a tantalizing prospect for quantum technologies~\cite{abbas2023quantum}.
Quantum annealing (QA)~\cite{kadowaki1998quantum,farhi2000quantum} and Quantum Alternating Operator Ansatz (QAOA)~\cite{farhi2014quantum,hadfield2019quantum,wang2018quantum} are general paradigms for utilizing quantum systems to solve a broad class of optimization problems~\cite{farhi2000quantum,farhi2001quantum,santoro2006optimization,farhi2014quantum,hadfield2019quantum,abbas2023quantum}, including but not limited to Noisy Intermediate-Scale Quantum (NISQ)~\cite{preskill2018quantum} devices~\cite{ronnow2014defining, venturelli2015quantum,harrigan2021quantum,job2018test,dupont2023quantum,maciejewski2023design}. 
While much effort has focused on the setting of unconstrained optimization,
hard constraints - conditions which solutions must strictly satisfy - are ubiquitous in real-world optimization problems and must be accounted for. Such constraints can be enforced in QA and QAOA by introducing penalty terms, thereby transforming the problem into an unconstrained formalism, such as unconstrained quadratic binary optimization (QUBO) problems~\cite{kochenberger2014unconstrained}, which can be mapped to Ising spin systems~\cite{santoro2006optimization,venturelli2015quantum,hadfield2021representation}. 
However, penalty-based approaches typically come with various trade-offs and challenges~\cite{lucas2014ising,hadfield2019quantum} such as significantly increased quantum resource requirements and high energy coefficients to enforce feasibility. 
As an alternative approach, recent developments in both QA and QAOA have focused on finding specialized drivers (in CQA)~\cite{hen2016driver,hen2016quantum,leipold2022quantum} and tailored ans{\"a}tze (in QAOA)~\cite{hadfield2019quantum,hadfield2021analytical,wang2020x,shaydulin2020classical,leipold2022tailored} to better exploit the underlying classical symmetries of the specific optimization problems that are being solved as well as problems in other domains~\cite{kremenetski2021quantum,cerezo2021variational,ragone2022representation}. These specialized drivers and mixers offer an alternative framework for enforcing hard constraints natively in quantum systems, thereby avoiding the use of penalty terms. They share the central feature that each commutes with an embedded constraint operator—representing the classical invariance enforced on the operator space. The terms of the driver Hamiltonian or the generators of the mixer must be selected from the operator space defined by this invariance operator.

\quad Various works have shown these specialized tools can have substantial benefits for practical quantum computers, including (1) reduced number of long-range qubit interactions~\cite{hen2016quantum, hen2016driver}, (2) enhanced noise resistance and mitigation~\cite{streif2021quantum, shaydulin2021error}, (3) improved trainability for QAOA~\cite{holmes2022connecting, shaydulin2021exploiting}, and (4) better performance on time-to-solution metrics~\cite{leipold2022tailored}. In this work we give a general formula for reasoning about enforcing several constraints on the evolution of quantum systems through the driver Hamiltonian (in CQA) or the mixing operators (in QAOA). 

\begin{res}[Necessary and Sufficient Condition for Mixers]\label{result1}
A mixer written over the overcomplete basis provided commutes with the embedded constraint operator if and only if every term in the basis satisfies an efficiently checkable algebraic condition. In particular, we can exhaustively search for local mixers using this condition; no such condition exists for other local bases such as the Pauli basis.
\end{res}

\quad Finding operators that respect classical invariances and provide sufficient expressivity to explore the associated feasibility space are pivotal to developing effective, problem-tailored quantum algorithms. Results in Ref.~\cite{leipold2021constructing} are sufficient to show that the decision problem of knowing whether a Hamiltonian (or a quantum operator) commutes with a set of constraint embedding operators (i.e. imposing the constraints on the quantum evolution through the Hamiltonian or operator) is NP-Complete for linear constraints. From this, it is clearly NP-Hard for polynomial constraints, but we show it is NP-Complete using Result~\ref{result1}.

\begin{res}[NP-Completeness of Mixer Existence (Informal)]\label{result2}
Finding a mixer that is invariant for the embedded constraint operators associated with a collection of polynomial constraints is NP-Complete.
\end{res}

However, there is an important separation for local operators, i.e. when the number of qubits each operator acts upon is bounded by a constant. We show this problem is in \texttt{P}, and provide algorithmic prescriptions to tackle it efficiently. 

\begin{res}[Efficient and Complete Polynomial Time Algorithm for Local Mixers (Informal)]\label{result3}
Finding a local mixer with a general collection of polynomial constraints is in \texttt{P}. We provide an efficient yet complete polynomial time backtracking algorithm to find local mixers.
\end{res}

\quad \cref{sec:complexity} proves that for polynomial constraints, knowing the existence of a driver Hamiltonian or mixing operator that imposes the constraints is in general NP-Complete and in \texttt{P} for local Hamiltonians or unitary operators by utilizing the algebraic framework developed in \cref{sec:algecond}. We then develop practical algorithms, that provide significant speed-up compared to the brute force approaches, for finding the symmetry terms up to a specified locality and selecting a set of generators for them as well as compiling the generators into unitaries.

\quad We then consider the question of ansatz construction for QAOA applied to random instances of 1-in-3 SAT, comparing three different ans{\"a}tze that are trained using finite difference parameter shift gradient descent on random instances of size 12 with QAOA-depth $ p = 14 $. We then juxtapose their performance on random instances of sizes between $ 12 $ and $ 22 $. The first ansatz is the X-mixer with the standard phase-separating operator. Since the constraints are local and relatively sparse, the second ansatz is based on imposing the maximum disjoint subset (MDS) of constraints on the mixer while the rest of the constraints are represented in the phase-separating operator. The third approach then considers mixers associated with each constraint and all constraints that share variables with that particular constraint. Curve fitting on random 1-in-3 SAT instances indicates improved empirical scaling for the problem-tailored ansatz constructions, consistent with an approximately quadratic improvement relative to the X-mixer baseline, with the best performance obtained using the automated ansatz construction.

\begin{res}[Empirical Scaling Advantage for Tailored Ans{\"a}tze against the X-mixer on 1-in-3 SAT]
By tailoring mixers to the structure of 1-in-3 SAT, we observe improved empirical time-to-solution scaling for QAOA with $p=14$ on instance sizes between $12$ and $22$ relative to the X-mixer baseline, consistent with an approximately quadratic improvement in the fitted exponential scaling.
\end{res}

\section{Driver Hamiltonians and Mixing Operators in Quantum Computing}\label{sec:driver_mixer}

\begin{figure} 
\centering
\includegraphics[width=0.90\textwidth]{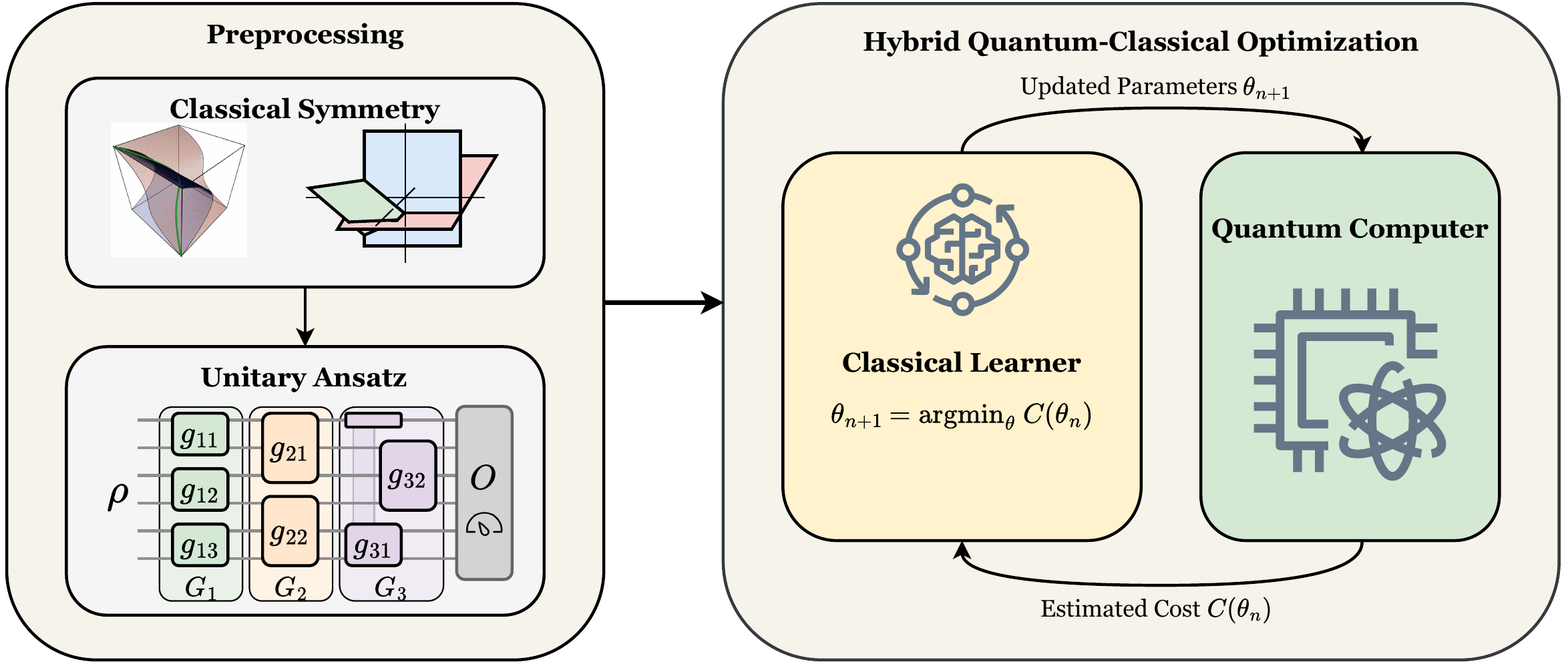}
\caption{\textbf{Symmetry Imposed Ans{\"a}tze in Training Variational Quantum Algorithms.} Control parameters of a quantum computer are trained via a classical learner to minimize a cost function. Before the training can begin, a parameterized unitary $ U(\theta) $, must be selected during the preprocessing. Classical symmetries inform the selection of an ansatz; that is, the classical invariance should be imposed on the ansatz. The ansatz determines what gates will be applied during a computation on the quantum computer and how these gates are parameterized.}
\label{fig:sym_proc_diagram}
\end{figure}

In this manuscript, we delineate the mathematical foundations for a large class of ans{\"a}tze: driver and mixers that at sufficient depth can explore a subspace constrained to a classical symmetry. Such drivers and mixers have been studied in the context of optimization problems since at least Ref.~\cite{hen2016quantum}. Determining an appropriate driver Hamiltonian~\cite{somma2012quantum,del2013shortcuts,lanting2017experimental} or circuit ansatz~\cite{wecker2015progress,cong2019quantum,wiersema2020exploring,nakaji2021expressibility,ragone2022representation} is a central task for achieving maximum utility of quantum computers. Especially in the context of variational quantum algorithms (VQAs)~\cite{cerezo2021variational,hadfield2019quantum,saleem2023approaches} and NISQ~\cite{preskill2018quantum} quantum annealing~\cite{hen2016quantum,hen2016driver,leipold2022quantum} (QA), ansatz construction has become a major focal point for modern quantum algorithms. Despite many exciting avenues and important foundational results on the value proposition of different ansatz constructions, their selection largely remains an art within a broad mathematical framework. Given a specific hardware set and problem of interest, selecting the best ansatz based on some criteria is one of the most important challenges in quantum heuristics today. \cref{fig:sym_proc_diagram} depicts classical symmetries determine the unitary ansatz of a VQA.

\quad Consider a general constrained optimization problem over binary variables $ x = (x_{1}, \ldots, x_{n} ) \in \Z_{2}^{n} $ with a set of constraints $ \mathcal{C} = \{ C_{1}, \ldots, C_{m} \} $, where $ C_{i}(x) = b_{i} $ is an equality polynomial constraint function for some $ b_{i} $. Then we desire to minimize an objective polynomial function $ f(x) $ while satisfying each constraint. The constraint optimization problem can then be written as:
\begin{align}
\min \; & f(x) \nonumber\\ 
\text{subject to } \; & C_{i}(x) = b_{i},\; \text{for} \; i \in [1,\ldots,m] 
\end{align}

\quad Inequality constraints can be handled in a number of ways through this formulation; such constraints can be turned into soft constraints through penalty terms~\cite{bottarelli2025inequality}, tracked with ancilla qubits~\cite{bucher2025penalty}, transformed into equality constraints with slack variables, enforced through other quantum effects like Zeno processes~\cite{herman2023portfolio}, or cast into high(er) order polynomial equality constraints (to name a few).

\quad 
Ref.~\cite{lucas2014ising} considers quadratic unconstrained binary optimization (QUBO) formulations for many NP-Hard problems, with inequality conditions enforced by several penalty terms~\cite{kuroiwa2021penalty}. Ref.~\cite{gabbassov2025lagrangian} considered Lagrangian-relaxed QUBO formulation obtained from the Lagrangian dual of the constrained problem. Many problems can also be described by higher order unconstrained binary optimization (HUBO) formulations~\cite{djidjev2023quantum}.

\m

\quad Quantum Alternating Operator Ansatz (QAOA) and Constrained Quantum Annealing (CQA) share a similar assumption for solving optimization problems. Each bit $ x_{i} \in \{ 0, 1 \} $ of $ x $ is mapped to $ \ket{x_{i}} \in \{ \ket{0}, \ket{1} \} $, such that $ \ket{x} = \ket{x_{1}} \cdots \ket{x_{n}} \in \{ \ket{0}, \ket{1} \}^{\otimes n} $. Then each constraint is mapped to a constraint embedded operator $ \hat{C} $ that is an observable over the computational basis such that $ C(x) = \bra{x} \hat{C} \ket{x} $ for any $ x $.

\quad Under nonrelativistic closed dynamics governed by a Hamiltonian $ H(t) $, if $ \bra{\psi(0)} \hat{C} \ket{\psi(0)} = b $ at time $ t = 0 $ and $ [ \hat{C}, H ] = 0 $, then $ \bra{\psi(t)} \hat{C} \ket{\psi(t)} = b $ for all time $ t $. Hence, $ \hat{C} $ is an invariant observable over $ \ket{\psi(t)} $ and $ H $ maintains the symmetry imposed by $ C $. CQA attempts to (approximately) minimize the optimization function $ f(x) $ \textit{within} the constraint space. Let
\begin{align}
\hat{F} = \sum_{ x \in F } \ketbra{x}{x}, \; F = \{ x \, : \, x \in \{0,1\}^{n}, \;  C_{i}(x) = b_{i} \; \forall C_{i} \in \mathcal{C} \}  
\end{align} 

represent the projection operator into the embedded feasible space for a collection of constraints $ \mathcal{C} $. If the optimization function, $ f(x) $, is a constant function, the problem is a constraint satisfaction problem. Notice that \textit{any} operator written in the diagonal computational basis - the space $ \text{span}(\{ \mathbbm{1}, \sigma^{z} \}^{\otimes n}) $ - will commute with $ \hat{C} $ since they are mutually diagonalizable in the computational basis.

\quad Standard CQA then attempts to solve a constrained optimization problem by constructing a dynamic Hamiltonian $ H(t) = A(t) \, H_{d} + B(t) \, H_{f} $, where $ H_{d} $ is a \textit{driver} Hamiltonian and $ H_{f} $ is the embedded cost operator for $ f(x) $. If $ \ket{ \psi(0) } $ is in the feasibility space, $ \bra{\psi(0)} \hat{F} \ket{\psi(0)} = 1 $, and $ [ \hat{C}_{i}, H_{d} ] = 0 $ for all $ i $, then $ \bra{\psi(t)} \hat{F} \ket{\psi(t)} = 1 $ for any time $ t $. As such, $ \ket{\psi(t)} $ will evolve within the constraint space.  If $ H_{d} $ irreducibly commutes with $ \mathcal{C} $, $ \ket{\psi(0)} $ is a (approximate) ground state of $ H_{d} $, and there is a smooth evolution from $ A(0) = A, B(0) = 0 $ to $ A(T) = 0, B(T) = B $, then the adiabatic theorem states that $ \ket{\psi(T)} $ is a (approximate) ground state of $ H_{f} $ for sufficiently large total time $ T $.

\quad An analogous discussion can be found in QAOA. Standard QAOA of $p$-depth is described by the application of two unitaries for $ p $ rounds. H-QAOA~\cite{hadfield2019quantum} refers to the closest relative to CQA, $ U(\vec{\alpha}, \vec{\beta}) = U(\beta_{p},\alpha_{p}) \cdots U(\beta_{1}, \alpha_{1}) $ where $ U(\alpha_{j}, \beta_{j}) = U_{m}(\beta_{j}) U_{c}(\alpha_{j}) $ with $ U_{c}(\alpha_{j}) = e^{-i \, \alpha_{j} \, H_{f}} $ and $ U_{m}(\beta_{j}) = e^{-i \, \beta_{j} \, H_{d}} $ for $ \vec{\alpha}, \vec{\beta} \in [0, 2\,\pi]^{p} $ (or more generally $ \vec{\alpha}, \vec{\beta} \in \R_{+}^{p} $ depending on energy scalings). If angles $ \vec{\alpha}, \vec{\beta} $ are selected small enough, this becomes a Trotterization of CQA~\cite{yi2022spectral}. QAOA more generally considers alternative formulations for what is often called the mixing operator $ U_{m} $, in place of what $ e^{-i \, \beta \, H_{d}} $ accomplishes in H-QAOA, since $ e^{-i \, \beta \, H_{d}} $ can be difficult to implement and can require high depth circuits.

\quad The phase-separation operator, which introduces phases in the computational basis based on the quality of a state written in that basis.
\begin{defi}[Phase-separation Operator]
Given a cost function $f(x)$, the phase-separation operator applies phasing based on the cost: 
\begin{align}\label{def:phase_separate_op}
U_{c}(\alpha) = e^{i \, \alpha \, H_{f}},
\end{align}
such that the phase of a computational basis state $ \ket{x} $ is updated based on its cost, $ U_{c}(\alpha) \ket{x} = e^{i \, \alpha \, f(x) } \ket{x} $. 
\end{defi}

$ H_{f} $ is composed of terms written in the diagonal of the computational basis states and therefore each term commutes with one another. Typically, as for NP-Hard optimization problems, $ H_{f} $ is made of several local body terms, $ H_{f} = \sum_{t} H_{t} $, where each $ H_{t} $ is a Hermitian operator acting on a fixed number of qubits (not dependent on $ n $) and so the phase-separation operator can be written as $ U_{c}(\alpha) = e^{i \, \alpha \, \sum_{t} H_{t} } = \prod_{t} e^{-i \, \alpha \, H_{t}} $.

\begin{defi}[Mixing Operator]
The mixing operator nontrivially \textit{mixes} states written in the computational basis with their \textit{neighbors} under the dynamics of each driver term in the mixing operator. A driver term $g$ is any Hamiltonian that is not diagonal in the computational basis. Then given the collection $G$ of, in general, noncommuting and therefore ordered driver terms, the mixing operator is
\begin{align}\label{eq:def_mixing_op}
    U_{m}(\beta) = \prod_{g \in G} e^{-i \, \beta \, g} . 
\end{align}
\end{defi}

For example, any driver Hamiltonian $H_{d}$ could be a mixer as $U_{d} = e^{i \beta H_{d}}$. However, $ H_{d} $ could be composed of many noncommuting (and typically acting on only a fixed number of qubits) terms: $ H_{d} = - \sum_{g \in G} g $ for a set $ G $ of indices or mappings. Then $ e^{i \, H_{d}} $ may be impractical to implement or require high depth circuits due to this noncommutativity. However, the set of generators $ G $ can be partitioned into $ \mathcal{G} = \{ G_{1}, \ldots, G_{k} \} $ such that for all $ g_{ij}, g_{ik} \in G_{i} $, $ [ g_{ij}, g_{ik} ] = 0 $ and the disjoint union $ \coprod_{G_{i} \in \mathcal{G}} G_{i} = G $. Then the mixer is transpiled for \cref{eq:def_mixing_op}:
\begin{align} 
U_{m}(\beta) = \prod_{g \in G} e^{-i \, \beta \, g } \equiv \prod_{G_{i} \in \mathcal{G}} \prod_{g \in G_{i}} e^{-i \, \beta \, g}, 
\end{align}
which can have a significantly lower circuit depth to implement than $ e^{i \, H_{d}} $.

\begin{defi}[Generator (of Mixing or Driving)]\label{defi:genetator_mixing_driving}
Given a unitary $ U = e^{i \, \theta \, g} $, $ g $ is the generator of the unitary $ U $. Given a Hamiltonian $ H = \sum_{i} g_{i} $, each $ g_{i} $ is a generator of $ H $. 
\end{defi}

\quad A central question in both approaches is then how to find a suitable collection of generators $ g $ such that $ [ \hat{C_{i}}, g ] = 0 $ for all $ g $ and all $ i $. 

\begin{question}[Mixer Selection for Constrained Optimization (Informal)]
What is a suitable ordered collection of generators $ \mathcal{G} = \{ g_{1}, \ldots, g_{|\mathcal{G}|} \} $ such that $[ \hat{C}_{i}, g ] = 0 $ for all constraints $C$? 
\end{question}

In the literature this problem has been handled through specialized analysis for problem domains~\cite{hadfield2019quantum,torggler2019quantum,hodson2019portfolio,chancellor2019domain}. In the case that the feasible space happens to decompose into a simple cartesian sum $ F = B_1 \times B_2 \times \cdots \times B_{r} $ with $|B_{j}| = \text{poly}(n)$, such as with disjoint constraints with each constraint over a few variables, we can build mixers through hopping terms between each entry~\cite{fuchs2022constraint} of each set $B_{j}$. For simple structured problems, preparing a superposition over the embedded feasible subspace, such as the Dicke state for cardinality constrained optimization, is possible and so a Grover diffusor can be constructed~\cite{bartschi2020grover}, such as an alternative to the popular XY mixer~\cite{wang2020x,kordonowy2026lie} for cardinality constrained optimization.

\quad We developed this into an algebraic question in \cref{sec:algecond}, developed into an algorithmic problem from a complexity theory perspective in \cref{sec:complexity} and then practical algorithms to solve this problem are provided in \cref{sec:theory_to_prac}. Typically, we interested in \textit{$k$-local} operators, that only require $k=\mathcal{O}(1)$ interactions to implement as $ n $ grows. For many optimization problems of interest, phase-separating and mixer operators that are local are sufficient.

\begin{defi}[$k$-Locality]\label{def:klocality}
A Hamiltonian $ H $ is $k$-local over $n$ qubits if $ H = \sum_{t} h_{t} $, where $h_{t}$ acts non-trivially on at most $k$ qubits. A unitary is $k$-local if there exists a $k$-local Hamiltonian $ H $ such that $ U = e^{i \, H} $. 
\end{defi}

\begin{defi}[Locality Weight]\label{def:localityweight}
The \textit{locality weight} of an operator $H$ (or $U=e^{iH}$) is the \textit{minimum} $k$ such that such $H$ is $k$-local. 
\end{defi}

Our approach yields several important conclusions. We are able to classify the hardness of this question in \cref{sec:complexity}. Local operators constructed through careful analysis of specific problems are a simple subset of those our polynomial time algorithm yields.  
Constructing hopping terms over brute force enumeration of feasible states is in general intractable, but even when tractable, can require extraordinary resources; our approach helps resolve this concern~\cite{fuchs2022constraint} by limiting the locality weight of terms. Indeed, for many practical problems of interest, from a global linear constraint~\cite{hen2016quantum,hen2016driver} to highly structured overlapping local constraints~\cite{leipold2022quantum} to graph-based quadratic constraints~\cite{hadfield2019quantum,saleem2023approaches,leipold2024train}, only a $\mathcal{O}(n)$ number mixers is necessary to be able to explore the associated embedded feasible subspace for problems of size $n$.

\section{Algebraic Conditions for Imposing Constraints}\label{sec:algecond}

\begin{table}[t]
\centering
\begin{tabular}{c|c|c|c|c|c|c|c}
 & Basis & N & S & L & H & O \\ 
\hline 
1 & $ \{ \mathbbm{1}_2, \sigma^{x}, \sigma^{y}, \sigma^{z} \}^{\otimes n} $ & False & False & True & True & False \\
2 & $ \{ \mathbbm{1}_2, \sigma^{z}, \sigma^{-}, \sigma^{+} \}^{\otimes n} $ & False & True & True & False & False \\
3 & $ \{ \sigma^{0}, \sigma^{1}, \sigma^{-}, \sigma^{+} \}^{\otimes n} $ & True & True & False & False & False \\
4 & $ \{ \mathbbm{1}_2, \sigma^{0}, \sigma^{1}, \sigma^{-}, \sigma^{+} \}^{\otimes n} $ & True & True & True & False & True %
\end{tabular}
\vspace{0.5em}
\caption{{\bf Properties of Different Qubit Basis. } Classification of the properties of different basis for complex matrices over $ \mathbbm{C}^{2^{n}} $ with the mapping (N) for necessarily zero ((S) for sufficient to show zero) if a basis term is used for commuting with an embedded constraint operator (\cref{eq:defconop}). (L) is for admitting local operators, (H) is if the basis terms are all Hermitian, and (O) is if the basis is overcomplete (the number of matrices is larger than the dimension of the matrix space). (1) is the Pauli basis. (4) is the focus of this section, which we show satisfy conditions (N), (S), and (L) that are important for algorithmic primitives used for complexity results in \cref{sec:complexity} and application in \cref{sec:theory_to_prac}. It is not (H) and it is (O). Hermiticity is enforced by requiring a term's transpose with the coefficient set as its complex conjugate. Overcompleteness is resolved by always selecting the most local representation. (3) satisfies (N) and (S), but is not (L), (H), or (O), while (2) satisfies (S) and (L), but is not (N), (H), or (O). However, (2) is (N) for linear constraints and discussed in App.~\ref{app:suffquad} for enforcing (2-local and higher) Ising spin-z Hamiltonians.} \label{tab:comparebasis}
\end{table}

In this section we derive a necessary and sufficient condition for a Hamiltonian to commute with a polynomial embedded constraint operator. This condition, \cref{thm:gencom}, is utilized in \cref{sec:complexity} to show that finding local Hamiltonians that commute with embedded constraints is a task in polynomial time and \cref{sec:theory_to_prac} discusses practical algorithms for utilizing \cref{thm:gencom} to find a basis for nontrivial operators over the embedded feasible space. \cref{tab:comparebasis} shows the properties of several basis for writing complex matrices in $ \mathbbm{C}^{2^{n} \times 2^{n}} $, with the unique property of the basis discussed in this section being that it admits a description of local operators and that commutation with an embedded constraint operator for a Hamiltonian written over the basis is zero if and only if commutation with each individual basis is zero. This locality condition and the necessity and sufficiency of the commutation condition is unique to this basis, see \cref{tab:comparebasis} for descriptions of other candidate basis.

\subsection{A Frame for Classical Symmetries of Quantum Operators} 

\quad Let $ \sigma^{x}, \sigma^{z}, \sigma^{y} $ represent the standard Pauli matrices. Then let:
\begin{align}
\sigma^{0} &= \ketbra{0}{0} = \left( \mathbbm{1} + \sigma^{z} \right) / 2,      \nonumber\\ 
\sigma^{1} &= \ketbra{1}{1} = \left( \mathbbm{1} - \sigma^{z} \right) / 2,      \nonumber\\ 
\sigma^{-} &= \ketbra{0}{1} = \left( \sigma^{x} + i \, \sigma^{y} \right) / 2,  \nonumber\\ 
\sigma^{+} &= \ketbra{1}{0} = \left( \sigma^{x} - i \, \sigma^{y} \right) / 2. 
\end{align}

Recall the standard definition of the Kronecker tensor product for matrices. 
\begin{defi}[Tensor Product for Matrices]
Given $ A \in \C^{p \times p}, B \in \C^{q \times q} $, the tensor product $ A \otimes B \in \C^{p \, q \,  \times p \, q} $ has entries given by:
\begin{align}
(A \otimes B)_{(i-1)\,q+k, (j-1) \, q + \ell} = A_{ij} B_{k \ell} \; \text{ for } 1 \leq i, j \leq p, 1 \leq k, \ell \leq q .
\end{align}
\end{defi} 

For example, given two matrices over $\mathbb{C}^{2 \times 2}$ 
\begin{align}
A = \begin{pmatrix}
a_{11} & a_{12} \\ 
a_{21} & a_{22} \\
\end{pmatrix}, \\ 
B = \begin{pmatrix}
b_{11} & b_{12} \\ 
b_{21} & b_{22}
\end{pmatrix},
\end{align}
the Kronecker product $A \otimes B $ over $\mathbb{C}^{4 \times 4}$ is
\begin{align}
A \otimes B &= \begin{pmatrix}
a_{11} B & a_{21} B \\ 
a_{21} B & a_{22} B 
\end{pmatrix} \\ 
&= \begin{pmatrix}
a_{11} b_{11} & a_{11} b_{12} & a_{12} b_{11} & a_{12} b_{12} \\ 
a_{11} b_{21} & a_{11} b_{21} & a_{12} b_{21} & a_{12} b_{22} \\ 
a_{21} b_{21} & a_{21} b_{21} & a_{22} b_{21} & a_{22} b_{22} \\ 
a_{21} b_{21} & a_{21} b_{21} & a_{22} b_{21} & a_{22} b_{22} \\ 
\end{pmatrix} . 
\end{align}

We add subscripts based on which qubit the operator is applied, with the identity operator $\mathbbm{1}_{2} = \ketbra{0}{0} + \ketbra{1}{1} $ applied to every other qubit: 
\begin{align}
\sigma_{j}^{k} = \bigotimes_{\ell=1}^{n} \left( \sigma^{k} \right)^{\delta_{\ell j}} = \underbrace{\mathbbm{1}_2 \otimes \mathbbm{1}_2 \otimes \cdots \otimes \mathbbm{1}_2}_{1:j-1} \otimes \, \sigma^{k} \, \otimes \underbrace{\mathbbm{1}_2 \otimes \mathbbm{1}_2 \otimes \cdots \otimes \mathbbm{1}_2}_{j+1:n} . 
\end{align}

\quad Left and right multiplication identities for these matrices are found in Appendix~\ref{app:single_qubit_relations} and are useful for derivations within this section. A useful definition used throughout this manuscript is for a single term $ T $ over the overcomplete operator basis $ \{ \mathbbm{1}_{2}, \sigma^{0}, \sigma^{1}, \sigma^{-}, \sigma^{+} \}^{\bigotimes n} $ (for $ \mathbbm{C}^{2^n \times 2^n} $): 
\begin{align}\label{eq:defi_basis_term}
T(\vec{x}, \vec{y}, \vec{v}, \vec{w}) &= \bigotimes_{j=1}^{n} \bigg( \sigma^{0} \bigg)^{x_j} \bigg( \sigma^{1} \bigg)^{y_{j}} \bigg( \sigma^{+} \bigg)^{v_{j}} \bigg( \sigma^{-} \bigg)^{w_{j}} \nonumber \\ 
&= \prod_{j=1}^{n} \bigg( \sigma_{j}^{0} \bigg)^{x_{j}} \bigg( \sigma_{j}^{1} \bigg)^{y_{j}} \bigg( \sigma_{j}^{+} \bigg)^{v_{j}} \bigg( \sigma_{j}^{-} \bigg)^{w_{j}}, 
\end{align}

where $ \vec{x}, \vec{y}, \vec{v}, \vec{w} \in \{0,1\}^{n} $ are assumed to be orthogonal binary vectors: $ x_{j} + y_{j} + v_{j} + w_{j} \leq 1 $ for any $ j \in \{1,\ldots,n\} $. Throughout the paper, we prefer the product form over the product form due to its utility in clean derivations. By the facts $ {\sigma^{-}}^{\dg} = \sigma^{+}$, $ {\sigma^{0}}^{\dg} = \sigma^{0} $, and ${\sigma^{1}}^{\dg} = \sigma^{1} $ as well as the distribution of the adjoint operator over tensor products, we have: 
\begin{align}\label{eq:conjsymcond}
T^{\dg}(\vec{x}, \vec{y}, \vec{v}, \vec{w}) = T(\vec{x}, \vec{y}, \vec{w}, \vec{v}) . 
\end{align} 

For example, given $\z = (0,0,0,0), \e_{1} = (1,0,0,0), \e_{3} = (0,0,1,0), \e_{4} = (0,0,0,1) $ we have:
\begin{align}
    T(e_{1}, \z, e_{3},e_{4}) = \underbrace{\sigma^{0}}_{1} \otimes \underbrace{\mathbbm{1}_{2}}_{2} \otimes \underbrace{\sigma^{+}}_{3} \otimes \underbrace{\sigma^{-}}_{4} = \sigma_1^0 \, \sigma_3^+ \, \sigma_4^- 
\end{align}

This example term is non-Hermitian, but as described in Eq.\ref{eq:conjsymcond} the following is Hermitian:
\begin{align}
T(e_{1}, \z, e_{3},e_{4}) + T(e_{1}, \z, e_{4},e_{3}) &= \sigma_1^0 \, \sigma_3^+ \, \sigma_4^- + \sigma_1^0 \, \sigma_3^- \, \sigma_4^+ .
\end{align}

\subsection{A Single Constraint, its Feasible Subspaces, and their Embeddings} 

Any polynomial constraint over binary variables $ x \in \{0,1\}^{n} $ can be written over selections of $x_{i}$ and $(1-x_{i})$:
\begin{defi}[Polynomial Constraint]
Over binary variables $x\in \{0,1\}^n$, a polynomial constraint has the form:
\begin{align}\label{eq:def_constraint}
C(x) = \sum_{ (\vec{a_{J}}, \vec{b_{J}}) \in \mathcal{J} } \beta_{J} \, \prod_{k=1}^{n} (1 - x_{k})^{a_{Jk}} \, x_{k}^{b_{Jk}},
\end{align}
where each $ (\vec{a_{J}}, \vec{b_{J}}) \in \mathcal{J} $ is a pair of orthogonal binary vectors satisfying $a_{Ji} + b_{Ji} \leq 1 $ for any $ i \in \{1, \ldots, n\} $ and each $ \beta_{J} $ is assumed nonzero.
\end{defi}

Under the standard bit to qubit mapping $ \ket{x} = \ket{ x_{1} } \ldots \ket{ x_{n} } \in \{ \ket{0}, \ket{1} \}^{\otimes n} $, we consider the natural embedding to be $ \hat{C} $ such that $ \bra{x} \hat{C} \ket{x} = C(x) = b $. 
\begin{defi}[Embedded Constraint Operator]
Given a constraint written as \cref{eq:def_constraint}, the embedded constraint operator is:
\begin{align}\label{eq:defconop}
\hat{C} &= \sum_{ (\vec{a_{J}}, \vec{b_{J}}) \in \mathcal{J} } \beta_{J} \, \prod_{k=1}^{n} \left(\sigma_{k}^{0}\right)^{a_{Jk}} \left(\sigma_{k}^{1}\right)^{b_{Jk}} \nonumber \\
&= \sum_{ (\vec{a_{J}}, \vec{b_{J}}) \in \mathcal{J} } \beta_{J} \, T(\vec{a_{J}}, \vec{b_{J}}, \z, \z), 
\end{align}
where every $ (\vec{a_{J}}, \vec{b_{J}}) \in \mathcal{J} $ are orthogonal binary vectors and each $ \beta_{J} $ is assumed nonzero.
\end{defi}

In particular, the embedded constraint operator has the constraint value for computational basis states as the constraint it embeds.
\begin{thm}[Eigenvalue Correspondence for Constraints]\label{thm:eig_cor_for_con}
Given a constraint $C(x)$ written as \cref{eq:def_constraint} and an embedded constraint operator $\hat{C}$ written as \cref{eq:defconop}, for any computational basis state $\ket{x}$ for $x \in \{0,1\}^n$, $ \bra{x} \hat{C} \ket{x} = C(x) $.
\end{thm}
\begin{proof}
\begin{align}
\bra{x} \hat{C} \ket{x} &= \sum_{ (\vec{a_{J}}, \vec{b_{J}}) \in \mathcal{J} } \beta_{J} \, \prod_{k=1}^{n} \bra{x} \left(\sigma_{k}^{0}\right)^{a_{Jk}} \left(\sigma_{k}^{1}\right)^{b_{Jk}} \ket{x} \nonumber \\
&= \sum_{ (\vec{a_{J}}, \vec{b_{J}}) \in \mathcal{J} } \beta_{J} \, \prod_{k=1}^{n} \bra{x_{k}} \left( \sigma^{0} \right)^{a_{Jk}} \left( \sigma^{1} \right)^{b_{Jk}} \ket{x_{k}} \nonumber \\
&= \sum_{ (\vec{a_{J}}, \vec{b_{J}}) \in \mathcal{J} } \beta_{J} \, \prod_{k=1}^{n} \Theta(1 - x_{k})^{a_{Jk}} \Theta(x_{k})^{b_{Jk}} \nonumber \\
&= \sum_{ (\vec{a_{J}}, \vec{b_{J}}) \in \mathcal{J} } \beta_{J} \, \prod_{k=1}^{n} (1 - x_{k})^{a_{Jk}} x_{k}^{b_{Jk}},
\end{align}
\end{proof}

where $ \Theta(x) $ is the Heaviside step function such that $ \Theta(x) = 1 $ if $ x > 0 $ and $ \Theta(x) = 0 $ if $ x \leq 0 $. 

\quad For any polynomial constraint, $ C(x) $, we can find the corresponding eigenspace decomposition for different equality values $ b $ representing the associated feasibility space of the constraint problem with $ C(x) $ fixed to the value $ b $.
\begin{defi}[Feasible Subspace (single equality constraint and constraint value)]
Given a constraint $ C(x) $ and a constant $ b $, the associated feasibility space is
\begin{align}\label{eq:deffeasible}
F_{b} = \{ x \, : \, C(x) = b, \, x \in \{0,1\}^{n} \}.    
\end{align}
\end{defi}

Then the associated projection operator of the embedded eigenspace (embedded feasibility subspace) assuming $ \hat{C} $ has the constraint value as the eigenvalue for each eigenspace:
\begin{defi}[Feasible Subspace Projector]
Given a feasible subspace $F_{b}$ written as \cref{eq:deffeasible}, the feasible subspace projector is:
\begin{align}\label{eq:defembedfeasible}
\hat{F}_{b} &= \sum_{x \in F_{b}} \ketbra{x}{x} .
\end{align} 
\end{defi}

\begin{figure}
    \centering
    \begin{tabular}{c c}
    \includegraphics[width=0.48\linewidth]{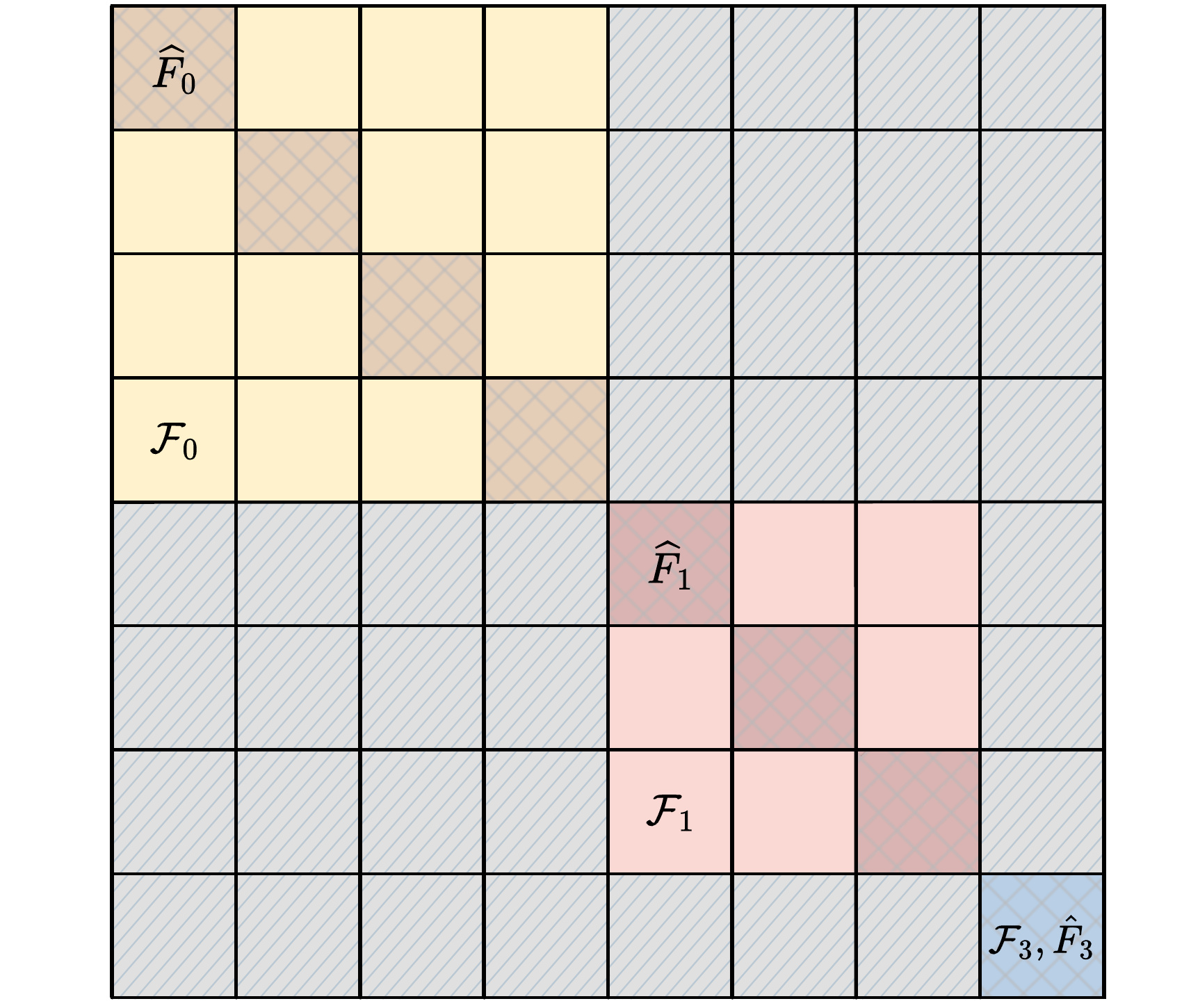} & \includegraphics[width=0.48\linewidth]{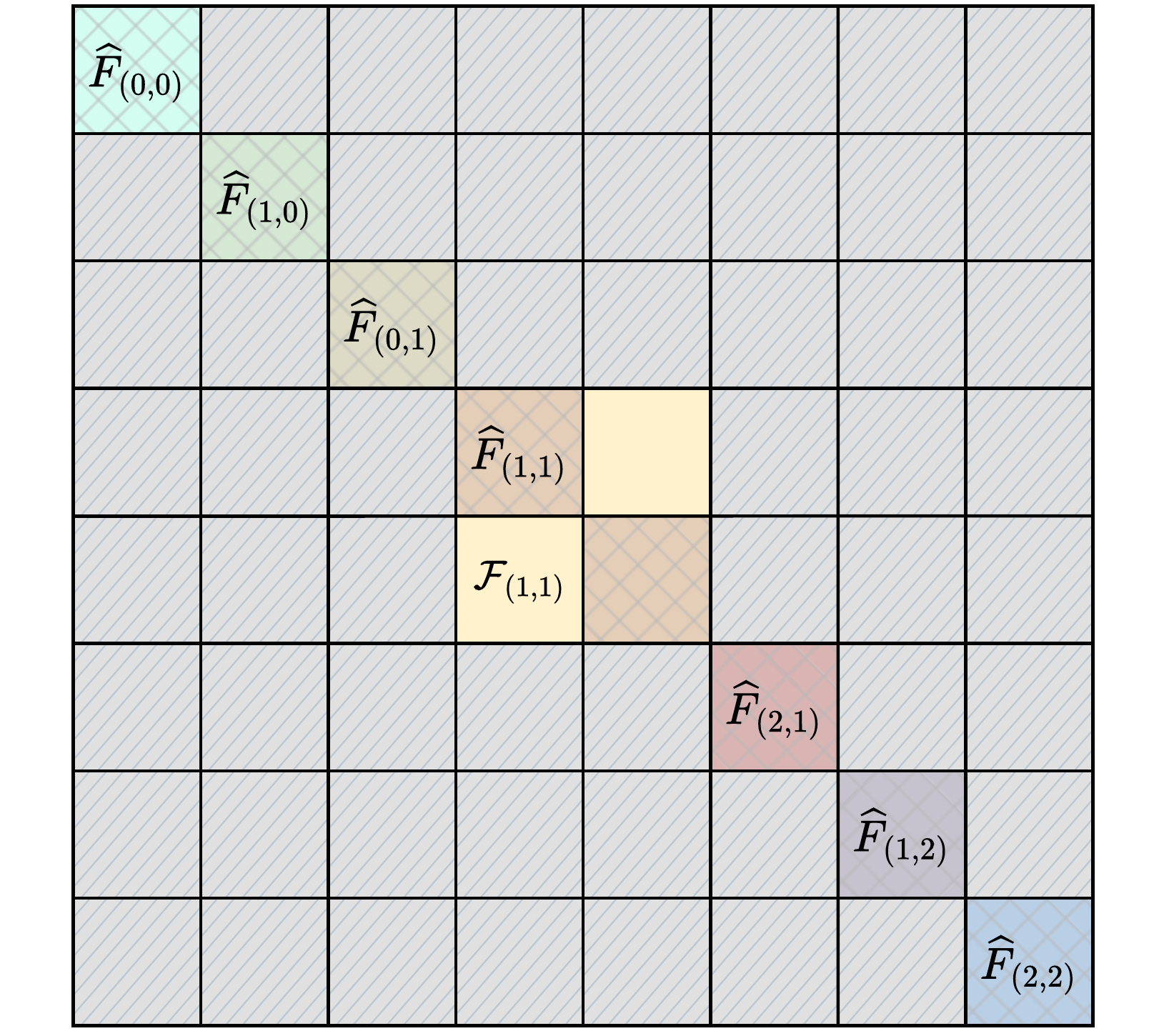} \\ 
    (a) & (b) 
    \end{tabular}
    \caption{\textbf{Constraint's Feasibility Decomposition and Commutation Space.} Two example decompositions, each visualizing the embedded constraint operator down the diagonal and the associated commutation space. (a) visualizes for a constraint $C = x_{1} x_{2} + x_{2} x_{3} + x_{1} x_{3}$ leading to $ \hat{C} = \sigma_{1}^{1} \sigma_{2}^{1} + \sigma_{2}^{1} \sigma_{3}^{1} + \sigma_{1}^{1} \sigma_{3}^{1} = 0 \, \hat{F}_{0} + 1 \, \hat{F}_{1} + 3 \, \hat{F}_{3} $. Then the associated commutation space is $\mathcal{F}_{C} = \mathcal{F}_{0} \oplus \mathcal{F}_{1} \oplus \mathcal{F}_{3} $. (b) visualizes for two constraints $C_1 = x_1 + x_2$, $C_2 = x_2 + x_3$ with $ \hat{C}_{1} = \sigma_{1}^{1} + \sigma_{2}^{1} = 0 \, \hat{F}_{0}^{1} + 1 \, \hat{F}_{1}^{1} + 2 \, \hat{F}_{2}^{1} $, $ \hat{C}_{2} = \sigma_{2}^{1} + \sigma_{3}^{1} = 0 \, \hat{F}_{0}^{2} + 1 \, \hat{F}_{1}^{2} + 2 \, \hat{F}_{2}^{2} $ that has the joint spectrum $ (0,0) \, \hat{F}_{(0,0)} + (0,1) \, \hat{F}_{(0,1)} + (1,0) \, \hat{F}_{(1,0)} + (1,1) \, \hat{F}_{(1,1)} + (2,1) \, \hat{F}_{(2,1)} + (1,2) \, \hat{F}_{(1,2)} + (2,2) \, \hat{F}_{(2,2)} $ associated with the commutation space $\mathcal{F}_{C_1,C_2} = \mathcal{F}_{(0,0)} \oplus \mathcal{F}_{(1,0)} \oplus \mathcal{F}_{(0,1)} \oplus \mathcal{F}_{(1,1)} \oplus \mathcal{F}_{(2,1)} \oplus \mathcal{F}_{(1,2)} \oplus \mathcal{F}_{(2,2)} $. For readability we suppress $\mathcal{F}_{(b_1,b_2)} $ for (b) when it is equivalent to $\text{span}\left( \hat{F}_{(b_1,b_2)} \right)$.} 
    \label{fig:constraint_decomp}
\end{figure}

\cref{eq:defconop} is the natural embedding since the correspondents between \cref{eq:deffeasible} and \cref{eq:defembedfeasible} is that $ \bra{x} \hat{C} \ket{x} = b $ for $ x $ in $ F_{b} $ precisely for this operator. While this is the natural embedding, other embeddings have been studied more in part due to the prevalence of Ising spin formalisms~\cite{lucas2014ising,hen2016driver,hen2016quantum}.

\quad A prototypical, widely studied case is that of a single global constraint $ C(x) = \sum_{i}^{n} x_{i} = b $ for some $ b $. This constraint is utilized in graph partitioning and portfolio optimization among others. Then the natural embedding constraint operator is $ \hat{C} = \sum_{i}^{n} \sigma_{i}^{1} = b $. Other embeddings would be $ \hat{C} = \sum_{i}^{n} \sigma_{i}^{0} = n - b $ and $ \hat{C} = \sum_{i}^{n} \sigma_{i}^{0} - \sigma_{i}^{1} = \sum_{i}^{n} \sigma_{i}^{z} = n - 2 \, b $. Such embeddings are equivalent in their eigenspaces, but not their corresponding eigenvalues, so finding commutative terms for one is associated with the same (eigenspace) invariance as the other embedding, but the expected value of the associated observables differ.

\quad An example of this type of mapping with linear constraints is discussed in \cref{sec:qaoa_1in3} and appears in the early Ref.~\cite{farhi2000quantum} (among many other papers) in the context of quantum annealing. We assume that $ \hat{C} $ is always simplified to use the minimum number of non-zero entries (e.g. simplify cases where an identity can replace two like terms that differ only by having $ \sigma^{0} $ versus $ \sigma^{1} $ on some qubit shared on an entry).

\begin{thm}[Spectral Decomposition of Embedded Constraint over Feasible Subspaces]\label{thm:spectrum_embedded_constraint}
Given an embedded constraint operator $ \hat{C}(x) $ as written in \cref{eq:defconop} with coefficients $\beta \in \R^{|\mathcal{J}|}$, it can be spectrally decomposed as:
\begin{align}
\hat{C} = \sum_{b = -|\beta|}^{|\beta|} b \, \hat{F}_{b}.
\end{align}
We often drop the limits for $b$, noting that for typical problems the range of $b$ will be more limited, and we assume null operators (associated with the empty set as a feasible subspace) are pruned.  
\end{thm}

\begin{proof}
Suppose $ x \in F_{b} $, then $ \bra{x} \hat{C} \ket{x} = C(x) = b $. Since the embedded constraint operator is diagonal over the computational basis:
\begin{align}
\hat{C} &= \sum_{x \in \{0,1\}^n} C(x) \ketbra{x}{x} \nonumber \\
&= \sum_{b=-\infty}^{\infty} \sum_{x \in F_{b}} b \ketbra{x}{x} \nonumber \\ 
&= \sum_{b = -|\beta|}^{|\beta|} b \, \hat{F}_{b}.
\end{align}
In the last step, we recognized that there are no feasible solutions beyond the (generic) lower bound $-|\beta|$ and upper bound $|\beta| = \sum_{J} | \beta_{J} | $ and so $ F_{b} = \{\} $ for $ b < |\beta| $ and $ b > |\beta| $. 
\end{proof}

\subsection{From Singular Constraints to Collection of Constraints}

In the case of multiple constraints, we can utilize the fact that each embedded constraint operator commutes to formulate a joint measurement with the associated energy being a tuple of each constraint energy based on the product of feasible subspace projectors.

\begin{defi}[Feasibility Space (Multiple Constraints)]
$ F_{(b_1,\ldots,b_m)} = \left\{ x : \bigwedge_{j=1}^{m} C_{j}(x) = b_{j} \right\} = \bigcap_{j=1}^{m} F_{b_{j}}^{j} . $
\end{defi}

\begin{thm}[Feasibility Subspace Projector (Multiple Constraints)]
$ \hat{F}_{(b_1,\ldots,b_m)} = \prod_{j=1}^{m} \hat{F}_{b_{j}}^{j} . $
\end{thm}
\begin{proof}
\begin{align}
\hat{F}_{(b_1,\ldots,b_m)} &= \prod_{k=1}^{m} \hat{F}_{b_{k}}^{k}  \nonumber \\ 
&= \prod_{k=1}^{m} \sum_{x \in F_{b_{k}^{k}}} \ketbra{x}{x} \nonumber \\ 
&= \left( \sum_{x \in F_{b_{1}}^{1}, y \in F_{b_{2}^{2}}} \ketbra{x}{x} \ketbra{y}{y} \right) \prod_{k=3}^{m} \hat{F}_{b_{k}}^{k} \nonumber \\ 
&= \left( \sum_{x \in F_{b_{1}^{1}} \cap F_{b_{2}^{2}}} \ketbra{x}{x} \right) \prod_{k=3}^{m} \hat{F}_{b_{k}}^{k} \nonumber \\ 
&\vdots \nonumber \\ 
&= \sum_{x \in F_{(b_{1}, \ldots, b_{m})}} \ketbra{x}{x} .
\end{align}
\end{proof}

\begin{defi}[Joint Spectral Decomposition of Multiple Constraints Operator]
Let $ \mathcal{C} = \{ C_{1}, \ldots, C_{m} \} $, and $ \hat{C}_{i} = \sum_{b_{i}} b_{i} \, \hat{F}_{b_{i}}^{i} $, then we have the following \textit{joint} spectral decomposition:
\begin{align}
\hat{\mathcal{C}} = \sum_{b_1,\ldots,b_m} (b_1,\ldots, b_m) \hat{F}_{b_1}^{1} \hat{F}_{b_2}^{2} \cdots \hat{F}_{b_m}^{m} .
\end{align}
Notice that $\hat{\mathcal{C}}$ has a vector valued observable, since each $\hat{C}_{i}$ commutes (each is diagonal in the computational basis), and so this would represent a \textit{joint} (sequential) measurement. 
\end{defi}

\subsection{The Commutation Space of Constraints}

Linear operators that commute with $ \hat{C} $ are operators over the matrix span we call the \textit{commutation space}, which is a subspace of linear operators over $ \mathbbm{C}^{2^{n} \times 2^{n}} $.
\begin{defi}[Commutation Space of Feasible Subspace]
\begin{align}
\mathcal{F}_{b} = \text{span} \left( \left\{ \ketbra{x}{y} \, : \, x, y \in F_{b} \right\} \right),
\end{align}
\end{defi}

\begin{thm}\label{thm:in_one_comspace}
$ H \in \mathcal{F}_{b} $ iff $ \hat{F}_{b} H \hat{F}_{b} = H $. 
\end{thm}

\begin{proof}
Suppose $ H \in \mathcal{F}_{b} $, then $ H = \sum_{xy} c_{xy} \, \ketbra{x}{y} $ and so
\begin{align} 
\hat{F}_{b} \, H \, \hat{F}_{b} &= \sum_{xy} c_{xy} \, \left( \sum_{x} \ketbra{x}{x} \right) \ketbra{x}{y} \left( \sum_{y} \ketbra{y}{y} \right) \nonumber \\ 
&= \sum_{xy} \delta_{xx} \, \delta_{yy} \, c_{xy} \, \ketbra{x}{y} \nonumber \\
&= H. 
\end{align}
And if not, then clearly the same steps show $ \hat{F}_{b} \, H \, \hat{F}_{b} \neq H $.
\end{proof}

\begin{defi}[Commutation Space of a Single Constraint]
\begin{align}
\mathcal{F}_{C} = \bigoplus_{b} \mathcal{F}_{b} . 
\end{align}
\end{defi}

To decompose Hamiltonians over the commutation space of constraints, we utilize that left-right multiplication of projectors will project matrices into corresponding matrix subspaces.

\begin{lem}[Matrix Orthogonality over Projectors]\label{lem:mat_orth}
$ \text{Tr}\left( \left( \hat{F}_{a} \, H \, \hat{F}_{b} \right)^\dg \hat{F}_{c} \, H \, \hat{F}_{d} \right) = 0 $ unless $a=c$ and $b=d$. 
\end{lem}

\begin{proof}
\begin{align}
\text{Tr}\left( \left(\hat{F}_{a} \, H \, \hat{F}_{b} \right)^\dg \hat{F}_{c} \, H \, \hat{F}_{d} \right) &= \text{Tr}\left( \hat{F}_{b}^\dg \, H^{\dg} \, \hat{F}_{a}^{\dg} \, \hat{F}_{c} \, H \, \hat{F}_{d} \right) \nonumber \\
&= \text{Tr}\left( \hat{F}_{b} \, H^{\dg} \, \hat{F}_{a} \, \hat{F}_{c} \, H \, \hat{F}_{d} \right) \nonumber \\
&= \text{Tr}\left( \hat{F}_{d} \, \hat{F}_{b} \, H^{\dg} \, \hat{F}_{a} \, \hat{F}_{c} \, H \right) \nonumber \\
&= \delta_{a,c} \, \delta_{b,d} \, \text{Tr}\left( \hat{F}_{b} \, H^{\dg} \, \hat{F}_{a} \, H \right),
\end{align}
due to distributivity of the conjugate transpose, Hermiticity of projectors, and orthogonality of projectors. 
\end{proof}

A matrix is in the commutation space of a constraint if it commutes with the embedded constraint operator.

\begin{thm}[Invariance under a Single Constraint]\label{thm:invar_single_con}
$ \left[H, \hat{C}\right] = 0 $ iff $ H = \sum_{b} \hat{F}_{b} \, H \, \hat{F}_{b} $ iff $ H \in \mathcal{F}_{C} $. 
\end{thm}
\begin{proof}
Suppose there exists $\hat{F}_{a} \, H \, \hat{F}_{b} \neq 0 $ for some $a \neq b$. Then:
\begin{align}
\left[\hat{F}_{a} \, H \, \hat{F}_{b}, \hat{C}\right] &= \sum_{c} \left[\hat{F}_{a} \, H \, \hat{F}_{b}, c \, \hat{F}_{c}\right]  \nonumber \\ 
&= \sum_{c} c \left(\hat{F}_{a} \, H \, \hat{F}_{b} \, \hat{F}_{c} - \hat{F}_{c} \, \hat{F}_{a} \, H \, \hat{F}_{b} \right) \nonumber \\ 
&= (b-a) \, \hat{F}_{a} \, H \, \hat{F}_{b} \nonumber \\ 
&\neq 0 . 
\end{align}
Due to orthogonality of $ \hat{F}_{a} \, H \, \hat{F}_{b} $ with other possible terms in $ H $ (from \cref{lem:mat_orth}), $ H $ cannot be the null matrix. By the same logic, if no such $a \neq b$ exists, then $H = \sum_{b} \hat{F}_{b} \, H \, \hat{F}_{b} $. 
Then for each $ b $, $\hat{F}_{b} \, H \, \hat{F}_{b} \in \mathcal{F}_{b}$ from \cref{thm:in_one_comspace} and so $ H \in \mathcal{F}_{C} $. 
\end{proof}

The commutation space for multiple constraints is then defined as the direct sum of the commutation space of joint feasible subspaces.

\begin{defi}[Commutation Space of Joint Feasible Subspaces]
\begin{align}
\mathcal{F}_{b_{1},\ldots,b_{m}} = \text{span}\left( \left\{ \ketbra{x}{y} \, :  x, y \in \bigcap_{j=1}^{m} F_{b_j}^{j} \right\} \right)  
\end{align}
\end{defi}

\begin{defi}[Commutation Space of Multiple Constraints]
\begin{align}
\mathcal{F}_{C_1,\ldots,C_m} = \bigoplus_{(b_1,\ldots,b_m)} \mathcal{F}_{b_1,\ldots,b_m} . 
\end{align}
\end{defi}

Then a matrix belongs to the commutation space if it commutes with each constraint.

\begin{thm}[Invariance under Multiple Constraints]\label{thm:invar_multi_con}
$ \left[H, \hat{C}_{j}\right] = 0 $ for each $ j \in [1,m] $ iff $ H = \sum_{(b_1,\ldots,b_m)} \hat{F}_{(b_1,\ldots,b_m)} \, H \, \hat{F}_{(b_1,\ldots,b_m)} $ iff $ H \in \mathcal{F}_{C_1,\ldots,C_m} $. 
\end{thm}

\begin{proof}
Suppose there exists $\vec{a} = (a_1, \ldots, a_m), \vec{b} = (b_1, \ldots, b_m)$ such that $ a_{j} \neq b_{j} $ for some index $j$ and $ \prod_{k=1}^{m} \hat{F}_{a_{k}}^{k} \, H \, \prod_{l=1}^{m} \hat{F}_{b_{l}}^{l} \neq 0 $. From \cref{thm:spectrum_embedded_constraint}, $ \hat{C}_{j} = \sum_{c_{jr}} c_{jr} \, \hat{F}_{c_{jr}}^{j} $, and so: 
\begin{align}
\left[\prod_{k=1}^{m} \hat{F}_{a_{k}}^{k} \, H \, \prod_{l=1}^{m} \hat{F}_{b_{l}}^{l}, \hat{C}_{j}\right] &= \sum_{c_{jr}} \left[\prod_{k=1}^{m} \hat{F}_{a_{k}}^{k} \, H \, \prod_{l=1}^{m} \hat{F}_{b_{l}}^{l}, c_{jr} \, \hat{F}_{c_{jr}}^{j} \right] \nonumber \\
&= \sum_{c_{jr}} c_{jr} \left( \prod_{k=1}^{m} \hat{F}_{a_{k}}^{k} \, H \, \prod_{l=1}^{m} \hat{F}_{b_{l}}^{l} \hat{F}_{c_{jr}}^{j} - \hat{F}_{c_{jr}}^{j} \prod_{k=1}^{m} \hat{F}_{a_{k}}^{k} \, H \, \prod_{l=1}^{m} \hat{F}_{b_{l}}^{l}
\right) \nonumber \\
&= c_{jr} (b_j - a_j) \, \prod_{k=1}^{m} \hat{F}_{a_{k}}^{k} \, H \, \prod_{l=1}^{m} \hat{F}_{b_{l}}^{l} \nonumber \\ 
&\neq 0. 
\end{align}
Note that we can change the order of the projectors since they commute because they are all diagonal in the computational basis. From the same logic if no such $\vec{a}, \vec{b}$ exist, $ H $ commutes with each $ \hat{C}_{j} $. Then each $ \hat{F}_{(b_1,\ldots, b_m)} \, H  \, \hat{F}_{(b_1,\ldots, b_m)} \in \mathcal{F}_{(b_1,\ldots, b_m)} $ and so $H \in \mathcal{F}_{C_1,\ldots,C_m} $ and vice versa. 
\end{proof}

\quad This includes all diagonal matrices, i.e. $ \left[H, \hat{C}\right] = 0 $ if $ H $ has no nonzero off-diagonal element over the computational basis and so we say $ H $ commutes nontrivially with $ \hat{C} $ if $ H $ has at least one off-diagonal term. 

\begin{defi}[Nontrivial Commutation]
We say a Hamiltonian $ H $ commutes nontrivially with a collection of embedded constraint operators $ \hat{C}_{i} $ if $ [H,\hat{C}_{i}] = 0 $ \textit{and} $ H $ has at least one off-diagonal term. 
\end{defi}

See \cref{fig:constraint_decomp} for two visual examples of constraint embeddings and their associated commutation spaces. 

\subsection{An Algebraic Condition for Commuting with an Embedded Constraint Operator}

Any Hermitian matrix $ H $ over $ \mathbbm{C}^{2^{n}} $ can be written over the overcomplete basis $ \{ \mathbbm{1}, \sigma^{0}, \sigma^{1}, \sigma^{+}, \sigma^{-} \}^{\bigotimes n} $ as:
\begin{align}\label{eq:defgenham}
H &= \sum_{ ( \vec{x_{j}}, \vec{y_{j}}, \vec{v_j}, \vec{w_j} ) \in \Delta( \mathscr{X}, \; \mathscr{Y}, \; \mathscr{V}, \; \mathscr{W} ) } \alpha_{j} \; T(\vec{x_j}, \vec{y_{j}}, \vec{v_{j}}, \vec{w_{j}}) + \alpha_{j}^{\dg} \; T(\vec{x_j}, \vec{y_{j}}, \vec{w_{j}}, \vec{v_{j}}),
\end{align}

following a similar convention as Ref.~\cite{leipold2021constructing}, where $ \Delta(\mathscr{X}, \mathscr{Y}, \mathscr{V}, \mathscr{W}) $ represents the set of index-wise confederated tuples. On a matrix $ H $, $ H^{\dg} $ is the complex conjugate transpose of $ H $ and so in \cref{eq:defgenham} $, \alpha_{j}^{\dg} $ is the complex conjugate of a scalar $ \alpha \in \mathbbm{C} $. By utilizing \cref{eq:conjsymcond}, notice that $ H $ is indeed Hermitian:
\begin{align}
H^{\dg} &= \sum_{ ( \vec{x_{j}}, \vec{y_{j}}, \vec{v_j}, \vec{w_j} ) \in \Delta( \mathscr{X}, \; \mathscr{Y}, \; \mathscr{V}, \; \mathscr{W} ) } \alpha_{j}^{\dg} \; T^{\dg}(\vec{x_j}, \vec{y_{j}}, \vec{v_{j}}, \vec{w_{j}}) + \alpha_{j} \; T^{\dg}(\vec{x_j}, \vec{y_{j}}, \vec{w_{j}}, \vec{v_{j}}) \nonumber \\ 
&= \sum_{ ( \vec{x_{j}}, \vec{y_{j}}, \vec{v_j}, \vec{w_j} ) \in \Delta( \mathscr{X}, \; \mathscr{Y}, \; \mathscr{V}, \; \mathscr{W} ) } \alpha_{j}^{\dg} \; T(\vec{x_j}, \vec{y_{j}}, \vec{w_{j}}, \vec{v_{j}}) + \alpha_{j} \; T(\vec{x_j}, \vec{y_{j}}, \vec{v_{j}}, \vec{w_{j}}) \nonumber \\ 
&= H. 
\end{align}

As an example of this formalism, consider the Hermitian matrix:
\begin{align} 
\sigma_{1}^{0} \sigma_{2}^{1} \left( \sigma_{3}^{+} \sigma_{4}^{-} + \sigma_{3}^{-} \sigma_{4}^{+} \right) 
&= T((1,0,0,0),(0,1,0,0),(0,0,1,0),(0,0,0,1)) \nonumber \\ 
&\phantom{= } + T((1,0,0,0),(0,1,0,0),(0,0,0,1),(0,0,1,0)), 
\end{align} 
with ordered sets $ \mathscr{X} = \{ (1,0,0,0) \} $, $ \mathscr{Y} = \{ (0,1,0,0) \} $, $ \mathscr{V} = \{ (0,0,1,0) \} $, $ \mathscr{W} = \{ (0,0,0,1) \} $. See \cref{fig:basis_depict} for how embedded constraint operators and their commutators are represented by the overcomplete basis $\{ \mathbbm{1}_{2}, \sigma^{0}, \sigma^1, \sigma^{+}, \sigma^{-} \} $. Further simple examples that are helpful for understanding this formalism are found at the end of \cref{sec:jordan} and the beginning of \cref{sec:alg_driver_com}.

\quad Although somewhat tedious, this formalism representing the choice function on each qubit for each term over the chosen (overcomplete) basis is of practical usefulness for deriving algebraic forms to reason about imposing symmetries. In the form we will consider throughout the manuscript, redundant forms are removed or reduced before consideration (a similar discussion appears in Ref.~\cite{leipold2021constructing}) by removing any terms with $\alpha_{j} = \alpha_{j}^{\dg} = 0 $ and reducing $ \mathscr{X}, \mathscr{Y} $ as much as possible if the identity term can be utilized.

\begin{figure}[t]
\centering
\includegraphics[width=1.0\textwidth]{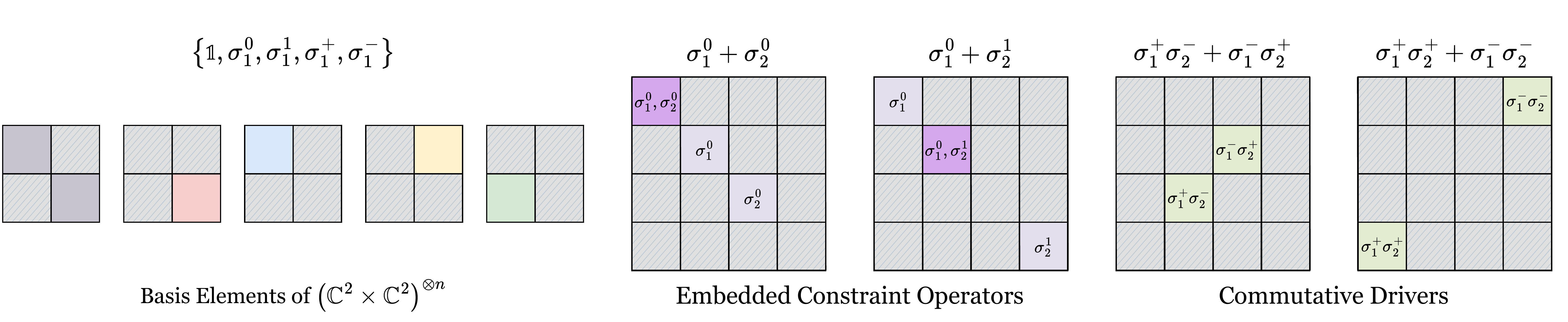}
\caption{\textbf{Embedded Constraint Operators and their Commutators.} Utilizing the frame $\{\mathbbm{1}, \sigma^{0}, \sigma^{1}, \sigma^{+}, \sigma^{-}\}^{\otimes n}$, depicted on the left, we can embed any polynomial constraint as an embedded constraint operator, depicted in the middle, and then valid driving or mixing generators commute with these constraints, depicted on the right. On the right, we depicted the unique commutator up to coefficient, with $ \left[ \alpha \, \sigma_{1}^{+} \, \sigma_{2}^{-} + \alpha^{\dg} \sigma_{1}^{-} \, \sigma_{2}^{+}, \sigma_{1}^{0} + \sigma_{2}^{0} \right] = 0 $ and $ \left[ \alpha \, \sigma_{1}^{+} \, \sigma_{2}^{+} + \alpha^{\dg} \, \sigma_{1}^{-} \, \sigma_{2}^{-}, \sigma_{1}^{0} + \sigma_{2}^{1} \right] = 0 $ for any $\alpha \in \C$.}
\label{fig:basis_depict}
\end{figure}

\quad Using this basis, we state a theorem for recognizing if a Hermitian matrix commutes with a polynomial constraint.

\begin{minipage}{\textwidth}
\begin{thm}[Algebraic Condition for Commutativity]\label{thm:gencom}
A general Hermitian matrix $ H(\mathscr{X}, \mathscr{Y}, \mathscr{V}, \mathscr{W}) $ commutes with a general constraint $ \hat{C}(\mathcal{J}) $ iff, for all $ \vec{x}, \vec{y}, \vec{v}, \vec{w} \in \Delta(\mathscr{X}, \mathscr{Y}, \mathscr{V}, \mathscr{W}) $: 
\begin{adjustwidth}{-2.5em}{-2.5em}
\begin{align}
&\sum_{(\vec{a_{J}}, \vec{b_{J}}) \in \mathcal{J}} \beta_{J} \bigg( \left( 1 - \Theta((\vec{y_{j}}+\vec{w_{j}})\cdot \vec{a_{J}} + (\vec{x_{j}}+\vec{v_{j}})\cdot \vec{b_{J}}) \right) \nonumber \\  
&\phantom{\sum} - \left( 1 - \Theta((\vec{y_{j}}+\vec{v_{j}})\cdot \vec{a_{J}} + (\vec{x_{j}}+\vec{w_{j}})\cdot \vec{b_{J}}) \right) \bigg) T(\vec{x_{j}} + (\mathbf{1}-\vec{v_{j}}-\vec{x_j}) \circ \vec{a_{J}}, \vec{y_{j}} + (\mathbf{1}-\vec{w_{j}}-\vec{y_{j}}) \circ \vec{b_{J}}, \vec{v_{j}}, \vec{w_{j}}) \nonumber \\ 
&+ \sum_{(\vec{a_{J}}, \vec{b_{J}}) \in \mathcal{J}} \beta_{J} \bigg( \left( 1 - \Theta((\vec{y_{j}}+\vec{v_{j}})\cdot \vec{a_{J}} + (\vec{x_{j}}+\vec{w_{j}})\cdot \vec{b_{J}}) \right) \nonumber \\ 
&\phantom{=} - \left( 1 - \Theta((\vec{y_{j}}+\vec{v_{j}})\cdot \vec{a_{J}} + (\vec{x_{j}}+\vec{v_{j}})\cdot \vec{b_{J}}) \right) \bigg) T(\vec{x_{j}} + (\mathbf{1}-\vec{w_{j}}-\vec{x_{j}}) \circ \vec{a_{J}}, \vec{y_{j}} + (\mathbf{1}-\vec{v_{j}}-\vec{y_{j}}) \circ \vec{b_{J}}, \vec{w_{j}}, \vec{v_{j}}) = 0, \label{eq:thmcom}
\end{align}
\end{adjustwidth}
where $ \Theta $ is the Heaviside step function, $ \mathbf{1} $ is the all ones vector $ (1,1, \ldots, 1) $ and $ \circ $ is the element-wise (Hadamard) product between vectors: $ \vec{a} \circ \vec{b} = (a_{1} b_{1}, \ldots, a_{n} b_{n} ) $. 
\end{thm}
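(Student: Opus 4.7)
The plan is to directly expand the commutator $[\hat{C}, H] = \hat{C} H - H \hat{C}$ using the decompositions of $\hat{C}$ from Eq.~\ref{eq:defconop} and $H$ from Eq.~\ref{eq:defgenham}, re-express the result in the overcomplete basis $\{\mathbbm{1}, \sigma^{0}, \sigma^{1}, \sigma^{+}, \sigma^{-}\}^{\otimes n}$, and invoke linear independence of the basis (in the reduced form described after Eq.~\ref{eq:defgenham}) to conclude that commutation is equivalent to the vanishing of each basis coefficient. Both directions of the stated iff then reduce to the same coefficient identity.

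The first computational step would be to evaluate the products $T(\vec{a_{J}}, \vec{b_{J}}, \mathbf{0}, \mathbf{0}) \, T(\vec{x}, \vec{y}, \vec{v}, \vec{w})$ and $T(\vec{x}, \vec{y}, \vec{v}, \vec{w}) \, T(\vec{a_{J}}, \vec{b_{J}}, \mathbf{0}, \mathbf{0})$ qubit-by-qubit using the single-qubit multiplication identities from Appendix~\ref{app:single_qubit_relations} (in particular $\sigma^{0}\sigma^{+} = 0$, $\sigma^{0}\sigma^{-} = \sigma^{-}$, $\sigma^{1}\sigma^{+} = \sigma^{+}$, $\sigma^{1}\sigma^{-} = 0$ and their reverses). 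A product survives iff no killing combination occurs at any qubit, which translates into the statement that the integer $(\vec{y}+\vec{w}) \cdot \vec{a_{J}} + (\vec{x}+\vec{v}) \cdot \vec{b_{J}}$ (for right multiplication) or $(\vec{y}+\vec{v}) \cdot \vec{a_{J}} + (\vec{x}+\vec{w}) \cdot \vec{b_{J}}$ (for left multiplication) equals zero. These are precisely the arguments of the $1 - \Theta(\cdot)$ factors in Eq.~\ref{eq:thmcom}.

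The next step is to identify the surviving output term. By case analysis on $(x_{k}, y_{k}, v_{k}, w_{k}) \in \{0,1\}^{4}$ at each qubit, where orthogonality forces at most one indicator to be $1$, right multiplication shifts $\vec{x}$ to $\vec{x} + (\mathbf{1} - \vec{v} - \vec{x}) \circ \vec{a_{J}}$ and $\vec{y}$ to $\vec{y} + (\mathbf{1} - \vec{w} - \vec{y}) \circ \vec{b_{J}}$ while preserving $\vec{v}$ and $\vec{w}$, and left multiplication yields the analogous shift with $\vec{v}$ and $\vec{w}$ swapped. Performing the same computation for the Hermitian-conjugate summand $\alpha_{j}^{\dg} T(\vec{x_{j}}, \vec{y_{j}}, \vec{w_{j}}, \vec{v_{j}})$ of $H$ produces the second outer sum of Eq.~\ref{eq:thmcom}, obtained from the first by the swap $\vec{v_{j}} \leftrightarrow \vec{w_{j}}$.

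The final step is to take the difference $\hat{C} H - H \hat{C}$, factor the scalars $\alpha_{j}$ and $\alpha_{j}^{\dg}$, and collect terms indexed by $(\vec{x_{j}}, \vec{y_{j}}, \vec{v_{j}}, \vec{w_{j}}) \in \Delta(\mathscr{X}, \mathscr{Y}, \mathscr{V}, \mathscr{W})$; the resulting coefficient-vanishing requirement is Eq.~\ref{eq:thmcom}. The main obstacle will be bookkeeping: keeping the Heaviside conditions aligned with the correct direction of multiplication, tracking the overall sign inherited from the commutator order, and verifying that the shifted tuples remain orthogonal in precisely those cases where the associated Heaviside factor evaluates to $1$, so that every formal argument of $T(\cdot, \cdot, \cdot, \cdot)$ in the final sum actually names a genuine basis element rather than an inconsistent tuple.
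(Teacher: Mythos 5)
Your proposal follows essentially the same route as the paper's proof: expand $[H,\hat{C}]$ term by term using the single-qubit left/right multiplication identities of Appendix~\ref{app:single_qubit_relations}, read off the survival conditions as the Heaviside arguments and the shifted tuples as the new arguments of $T$, and conclude via orthogonality/linear independence of the (reduced) basis that the commutator vanishes iff each collected coefficient does. The only differences are cosmetic --- the paper additionally phrases necessity through the observation that the kernel of the commutation map is spanned by $\ketbra{x}{y}$ terms expressible in the formalism, and your closing caveat about orthogonality of the shifted tuples is exactly the bookkeeping the paper's $L_{jK}$, $R_{jK}$ computation carries out.
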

\end{minipage}

\begin{proof}
\quad \cref{thm:gencom} is built on a commutation relationship of general interest:
\begin{adjustwidth}{-4em}{-3em}
\centering 
\begin{align}\label{eq:hamconcom}
    \left[ H, \hat{C} \right] &= \left[ \sum_{ ( \vec{x_{j}}, \vec{y_{j}}, \vec{v_j}, \vec{w_j} ) \in \Delta( \mathscr{X}, \; \mathscr{Y}, \; \mathscr{V}, \; \mathscr{W} ) } \alpha_{j} \; T(\vec{x_j}, \vec{y_{j}}, \vec{v_{j}}, \vec{w_{j}}) + \alpha_{j}^{\dg} \; T(\vec{x_j}, \vec{y_{j}}, \vec{w_{j}}, \vec{v_{j}}), \sum_{ (a_{J}, b_{J}) \in \mathcal{J} } \beta_{J} \, T(\vec{a_{J}}, \vec{b_{J}}, 0, 0) \right] \nonumber \\ 
    &= \sum_{} \sum_{} \alpha_{j} \, \beta_{J} \,  \bigg( \left( 1 - \Theta((\vec{y_{j}}+\vec{w_{j}})\cdot \vec{a_{J}} + (\vec{x_{j}}+\vec{v_{j}})\cdot \vec{b_{J}}) \right) \nonumber \\ 
    &\phantom{= \sum} - \left( 1 - \Theta((\vec{y_{j}}+\vec{v_{j}})\cdot \vec{a_{J}} + (\vec{x_{j}}+\vec{w_{j}})\cdot \vec{b_{J}}) \right) \bigg) T(\vec{x_{j}} + (\mathbf{1}-\vec{v_{j}} - \vec{x_{j}}) \circ \vec{a_{J}}, \vec{y_{j}} + (\mathbf{1} - \vec{w_{j}} - \vec{y_{j}}) \circ \vec{b_{J}}, \vec{v_{j}}, \vec{w_{j}}) \nonumber \\
    &\phantom{= } + \sum_{} \sum_{} \alpha_{j}^{\dg} \, \beta_{J} \, \bigg( \left( 1 - \Theta((\vec{y_{j}}+\vec{v_{j}})\cdot \vec{a_{J}} + (\vec{x_{j}}+\vec{w_{j}})\cdot \vec{b_{J}}) \right) \nonumber \\ 
    &\phantom{= \sum} - \left( 1 - \Theta((\vec{y_{j}}+\vec{w_{j}})\cdot \vec{a_{J}} + (\vec{x_{j}}+\vec{v_{j}})\cdot \vec{b_{J}}) \right) \bigg) T(\vec{x_{j}} + (\mathbf{1} - \vec{w_{j}} - \vec{x_{j}}) \circ \vec{a_{J}}, \vec{y_{j}} + (\mathbf{1} - \vec{v_{j}} - \vec{y_{j}}) \circ \vec{b_{J}}, \vec{w_{j}}, \vec{v_{j}}).
\end{align}
\end{adjustwidth}

\quad By recognizing the orthogonality of the basis (if $ H $ has nonorthogonal terms, they can be reduced to be orthogonal) and setting \cref{eq:hamconcom} $ = 0 $, the sufficient condition follows immediately. We show necessity first and then derive \cref{eq:hamconcom}.

\quad Right (left) commutation by a given matrix, $ \hat{C} $, is a linear transformation over matrices, since $ [ M_{1} + M_{2}, \hat{C} ] = [ M_{1}, \hat{C} ] + [ M_{2}, \hat{C} ] $ for any matrices $ M_{1}, M_{2} \in \mathbbm{C}^{2^{n} \times 2^{n}} $. For necessity of the condition, we show that the formalism can express the span of the kernel. Using the feasibility decomposition $ \hat{C} = \sum b \hat{F}_{b} $ from \cref{eq:defembedfeasible}, the kernel of right commutation with $ \hat{C} $ is spanned by operators that map feasible states to other feasible states. Let $ x, y \in F_{b} $, such that $ \bra{x} \hat{F}_{b} \ket{x} = \bra{y} \hat{F}_{b} \ket{y} = 1 $ for a specific $ b $, then:
\begin{align}
\ketbra{x}{y} &= \bigotimes_{i=1}^{n}  \big( \ketbra{x_{i}}{y_{i}} \big)  \nonumber \\
&= \prod_{i=1}^{n} \left( \sigma_{i}^{0} \right)^{(1-x_{i}) (1-y_{i})} \left( \sigma_{i}^{1} \right)^{x_{i} y_{i}} \left( \sigma_{i}^{+} \right)^{x_{i}(1-y_{i})} \left( \sigma_{i}^{-} \right)^{(1-x_{i}) y_{i}}. 
\end{align} 

\quad Then, $ \ketbra{x}{y} $ is a term in our formalism and the entire kernel (i.e. commutation space) can be spanned by such terms. In particular, there are $ | F_{b} |^2 $ such terms for a specific $ b $, by restricting to nontrivial commutators $ | F_{b} |^{2} - | F_{b} | $. This shows necessity.

\quad By using the overcomplete operator basis, we allow for lower locality weight terms by recognizing commonalities between different mappings (including between different constraint spaces). For example, suppose we consider the constraint $ C \equiv x_{1} = 1 $ over 2 bits $ \{0,1\}^{2} $ so that $ \hat{C} = \sigma_{1}^{1} $ over $ 2 $ qubits, such that $ F_{0} = \{ (0,0), (0,1) \} $ and $ F_{1} = \{ (1,0), (1,1) \} $ and therefore $ \hat{F}_{0} = \ketbra{00}{00} + \ketbra{01}{01} $ and $ \hat{F}_{1} = \ketbra{10}{10} + \ketbra{11}{11} $. Then $ \{ \sigma_{1}^{0} \sigma_{2}^{+}, \sigma_{1}^{0} \sigma_{2}^{-} \} $ are a basis for the nontrivial commutators restricted to $ F_{0} $ while $ \{ \sigma_{1}^{1} \sigma_{2}^{+}, \sigma_{1}^{1} \sigma_{2}^{-} \} $ are a basis for the nontrivial commutators restricted to $ F_{1} $. However, with the overcomplete basis that includes $ \mathbbm{1} $, we can utilize the more local forms $ \{ \sigma_{2}^{+}, \sigma_{2}^{-} \} $. As such, both $ \alpha \, \sigma_{2}^{+} + \alpha^{\dg} \, \sigma_{2}^{-} $ and $ \alpha \, \sigma_{1}^{1} \sigma_{2}^{+} + \alpha^{\dg} \, \sigma_{1}^{1} \sigma_{2}^{-} $ end up having the same relevant action for the feasibility subspace of interest (for any fixed $ \alpha \in \mathbbm{C} $).

\quad For sufficency, we show how to derive \cref{eq:hamconcom}. Beginning with the definitions \cref{eq:defconop} and \cref{eq:defgenham}:
\begin{adjustwidth}{-3em}{-3em}
\centering
\begin{align}
\left[ H, \hat{C} \right] &= \sum_{ ( \vec{x_{j}}, \vec{y_{j}}, \vec{v_j}, \vec{w_j} ) \in \Delta( \mathscr{X}, \; \mathscr{Y}, \; \mathscr{V}, \; \mathscr{W} ) } \left[ \alpha_{j} \, \prod_{i = 1}^{n} \left( \sigma_{i}^{0} \right)^{x_{ji}} \left( \sigma_{i}^{1} \right)^{y_{ji}} \left( \sigma_{i}^{+} \right)^{v_{ji}} \left( \sigma_{i}^{-} \right)^{w_{ji}}, \sum_{ (a_{J}, b_{J}) \in \mathcal{J} } \beta_{J} \, \prod_{k=1}^{n} \left(\sigma_{k}^{0}\right)^{a_{Jk}} \left(\sigma_{k}^{1}\right)^{b_{Jk}} \right] \nonumber \\ 
&\phantom{= } + \sum_{ ( \vec{x_{j}}, \vec{y_{j}}, \vec{v_j}, \vec{w_j} ) \in \Delta( \mathscr{X}, \;\mathscr{Y}, \; \mathscr{V}, \; \mathscr{W} ) }  \left[ \alpha_{j}^{\dg} \, \prod_{i = 1}^{n} \left( \sigma_{i}^{0} \right)^{x_{ji}} \left( \sigma_{i}^{1} \right)^{y_{ji}} \left( \sigma_{i}^{+} \right)^{w_{ji}} \left( \sigma_{i}^{-} \right)^{v_{ji}}, \sum_{ ( a_{J}, b_{J}) \in \mathcal{J} } \beta_{J} \, \prod_{k=1}^{n} \left(\sigma_{k}^{0}\right)^{a_{Jk}} \left(\sigma_{k}^{1}\right)^{b_{Jk}} \right] \nonumber \\
&= \sum_{ ( \vec{x_{j}}, \vec{y_{j}}, \vec{v_j}, \vec{w_j} ) \in \Delta( \mathscr{X}, \; \mathscr{Y}, \; \mathscr{V}, \; \mathscr{W} ) } \alpha_{j} \, \sum_{( a_{J}, b_{J}) \in \mathcal{J}} \beta_{J} \, L_{jK}(\vec{a_{J}}, \vec{b_{J}}, \vec{x_{j}},\vec{y_{j}},\vec{v_{j}},\vec{w_{j}}) \nonumber \\ 
&\phantom{= } + \sum_{ ( \vec{x_{j}}, \vec{y_{j}}, \vec{v_j}, \vec{w_j} ) \in \Delta( \mathscr{X}, \; \mathscr{Y}, \; \mathscr{V}, \; \mathscr{W} ) } \alpha_{j}^{\dg} \, \sum_{( a_{J}, b_{J}) \in \mathcal{J}} \beta_{J} \, R_{jK}(\vec{a_{J}}, \vec{b_{J}}, \vec{x_{j}},\vec{y_{j}},\vec{v_{j}},\vec{w_{j}})
\end{align}
\end{adjustwidth}

where:
\begin{adjustwidth}{-4em}{-4em}
\centering
\begin{align}
L_{jK}(\vec{a}, \vec{b}, \vec{x}, \vec{y}, \vec{v}, \vec{w})
&=  \prod_{k=1}^{n} \left( \sigma_{k}^{0} \right)^{x_{jk}} \left( \sigma_{k}^{1} \right)^{y_{jk}} \left( \sigma_{k}^{+} \right)^{v_{jk}} \left( \sigma_{k}^{-} \right)^{w_{jk}} \left( \sigma_{k}^{0} \right)^{a_{jk}} \left( \sigma_{k}^{1} \right)^{b_{Jk}}  \nonumber \\ 
&\phantom{= } - \prod_{k=1}^{n} \left( \sigma_{k}^{0} \right)^{a_{jk}} \left( \sigma_{k}^{1} \right)^{b_{Jk}} \left( \sigma_{k}^{0} \right)^{x_{jk}} \left( \sigma_{k}^{1} \right)^{y_{jk}} \left( \sigma_{k}^{+} \right)^{v_{jk}} \left( \sigma_{k}^{-} \right)^{w_{jk}} \nonumber \\ 
&=  \left( 1 - \Theta((\vec{y}+\vec{v})\cdot \vec{a} + (\vec{x}+\vec{w})\cdot \vec{b}) \right) T(\vec{x} + (\mathbf{1} - \vec{w} - \vec{x}) \circ \vec{a}, \vec{y} + (\mathbf{1} - \vec{v} - \vec{y}) \circ \vec{b}, \vec{v}, \vec{w}) \nonumber \\ 
&\phantom{= } - \left( 1 - \Theta((\vec{y}+\vec{w})\cdot \vec{a} + (\vec{x}+\vec{v})\cdot \vec{b}) \right) T(\vec{x} + (\mathbf{1} - \vec{v}-\vec{x}) \circ \vec{a}, \vec{y} + (\mathbf{1} - \vec{w} - \vec{y}) \circ \vec{b}, \vec{v}, \vec{w}) 
\end{align}
\end{adjustwidth}

and:
\begin{adjustwidth}{-4em}{-4em}
\centering
\begin{align}
    R_{jK}(\vec{a}, \vec{b}, \vec{x}, \vec{y}, \vec{v}, \vec{w}) &=  \prod_{k=1}^{n} \left( \sigma_{k}^{0} \right)^{x_{jk}} \left( \sigma_{k}^{1} \right)^{y_{jk}} \left( \sigma_{k}^{+} \right)^{w_{jk}} \left( \sigma_{k}^{-} \right)^{v_{jk}} \left( \sigma_{k}^{0} \right)^{a_{jk}} \left( \sigma_{k}^{1} \right)^{b_{Jk}}  \nonumber \\ 
    &\phantom{= } - \prod_{k=1}^{n} \left( \sigma_{k}^{0} \right)^{a_{jk}} \left( \sigma_{k}^{1} \right)^{b_{Jk}} \left( \sigma_{k}^{0} \right)^{x_{jk}} \left( \sigma_{k}^{1} \right)^{y_{jk}} \left( \sigma_{k}^{+} \right)^{w_{jk}} \left( \sigma_{k}^{-} \right)^{v_{jk}} \nonumber \\ 
    &=  \left( 1 - \Theta((\vec{y}+\vec{w})\cdot \vec{a} + (\vec{x}+\vec{v}) \cdot \vec{b}) \right) 
    T(\vec{x} + (\mathbf{1} - \vec{v} - \vec{x}) \circ \vec{a}, \vec{y} + (\mathbf{1} - \vec{w} - \vec{y}) \circ \vec{b}, \vec{w}, \vec{v}) \nonumber \\ 
    &\phantom{= } - \left(1 - \Theta((\vec{y}+\vec{v}) \cdot \vec{a} + (\vec{x}+\vec{w}) \cdot \vec{b}) \right) 
    T(\vec{x} + (\mathbf{1} - \vec{w} - \vec{x}) \circ \vec{a}, \vec{y} + (\mathbf{1} - \vec{v} - \vec{y}) \circ \vec{b}, \vec{w}, \vec{v}), 
\end{align}
\end{adjustwidth}

with the $ \Theta(x) $ as the Heaviside step function, $\circ$ as the elementwise product, and $\mathbf{1} = (1,\ldots,1)$.

\end{proof}

\quad As a result, we immediately establish $ H $ is in the commutation space $\mathcal{F}_{C} $ of $ C $ from \cref{thm:invar_single_con} if and only if \cref{thm:gencom} holds.

\begin{cor}[Membership of Commutation Space (Single Constraint)]
$H \in \mathcal{F}_{C} $ iff \cref{thm:gencom} holds.
\end{cor}

For a collection of constraints $ \mathcal{C} $, if and only if \cref{thm:gencom} holds for each, $ H $ is in the commutation space of $ \mathcal{C} $ from \cref{thm:invar_multi_con}. 

\begin{cor}[Membership of Commutation Space (Multiple Constraints)]
$ H \in \mathcal{F}_{C_1,\ldots,C_m} $ iff \cref{thm:gencom} holds for every $ C_{i} \in \mathcal{C} $.
\end{cor}

The locality hierarchy of commutators with a constraint collection can then be stated.

\begin{defi}[k-local Commutators of Multiple Constraints]\label{def:local_comterms}
Let 
\begin{align} 
\mathcal{T}_{k} = \{ T(\vec{x},\vec{y},\vec{v},\vec{w}) : [T, C_{j} ] = 0 \text{ for all } j \text{ and } |\vec{x}| + |\vec{y}| + |\vec{v}| + |\vec{w}| \leq k \}
\end{align}
be all the commutators up to locality $k$ in the basis \cref{eq:defi_basis_term} then the space of $k$-local commutators is $ \text{span}(\mathcal{T}_{k}) $.
\end{defi}

Most importantly, if we exhaustively find all such commutators up to locality $n$, the span saturates the commutation space - a direct corollary from \cref{thm:gencom}.

\begin{cor}[Local Commutator Hierarchy]
$ \mathcal{F}_{C_1,\ldots,C_m} = \text{span}\left( \mathcal{T}_{n} \right) $ and $ \text{span}\left( \mathcal{T}_{k} \right) \subseteq \text{span}\left( \mathcal{T}_{k+1} \right) $.
\end{cor}

For several constraints of interest that are higher order than linear, a simple strategy is to match parts of $\vec{y}$ and parts of $\vec{x}$ to overlap with the given $\vec{a_{J}}$ and $\vec{b_{J}}$. For example, for the maximum independent set (MIS) problem, the popular ansatz given in Ref.~\cite{hadfield2019quantum} has this form. 

\subsection{Example: Maximum Independent Set}

The maximum independent set problem can be defined as a maximization problem with linear inequality constraints associated with a graph $ G = (V, E) $ over $ X = \{ x_{1}, \ldots, x_{n} \} $ and $ x_{i} \in \{ 0, 1 \} $ with $ x_{i} $ associated with vertex $ v_{i} \in V $ and $ |V| = n $:
\begin{align}
\text{maximize} &\; \sum_{i} x_{i} \nonumber \\ 
\text{subject to} &\; x_{i} + x_{j} \leq 1 \; \text{for} \; (v_{i}, v_{j}) \in E 
\end{align}

Alternatively, it can be cast as a polynomial equality constraint problem and a minimization problem.
\begin{defi}[Maximum Independent Set]
\begin{align}
\text{minimize} &\; \sum_{i}^{n} (1 - x_{i}) \nonumber \\ 
\text{subject to} &\; \sum_{(v_{i}, v_{j}) \in E} x_{i} x_{j} = 0   
\end{align}
\end{defi}

Then the embedded constraint operator can be written:
\begin{align}
\hat{C}_{\text{MIS}} = \sum_{(v_{i}, v_{j}) \in E} \sigma_{i}^{1} \sigma_{j}^{1} = 0,
\end{align}

while the cost Hamiltonian can be:
\begin{align}
H_{f,\text{MIS}} = \sum_{i=1}^{n} \sigma_{i}^{0}.
\end{align}

Define $ N(v_{i}) = \{ (v_{i}, v_{j}) \, : (v_{i}, v_{j}) \in E \} $ as the neighborhood edges of $ v_{i} $. Then a term 
\begin{align}
T_{i, \text{MIS}} &= \sigma_{i}^{+} \prod_{(v_{i},v_{j}) \in N(v_{i})} \sigma_{j}^{0} + \sigma_{i}^{-} \prod_{(v_{i},v_{j}) \in N(v_{i})} \sigma_{j}^{0} \nonumber \\ 
&= \sigma_{i}^{x} \prod_{(v_{i},v_{j}) \in N(v_{i})} \sigma_{j}^{0},
\end{align}
commutes with the embedded equality constraint operators: 
\begin{align}
\left[T_{i, \text{MIS}}, \hat{C}_{\text{MIS}}\right] &= \sum_{(v_{k}, v_{l}) \in E} \left[T_{i, \text{MIS}}, \sigma_{k}^{1} \sigma_{l}^{1}\right] \nonumber \\
&= \sum_{(v_{k}, v_{l}) \in E} \left[\sigma_{i}^{+} \prod_{(v_{i},v_{j}) \in N(v_{i})} \sigma_{j}^{0}, \sigma_{k}^{1} \sigma_{l}^{1}\right] + \left[\sigma_{i}^{-} \prod_{(v_{i},v_{j}) \in N(v_{i})} \sigma_{j}^{0}, \sigma_{k}^{1} \sigma_{l}^{1}\right] \nonumber \\ 
&= \sum_{(v_{k}, v_{l}) \in N(v_{i})} \left[\sigma_{i}^{+} \prod_{(v_{i},v_{j}) \in N(v_{i})} \sigma_{j}^{0}, \sigma_{k}^{1} \sigma_{l}^{1}\right] + \left[\sigma_{i}^{-} \prod_{(v_{i},v_{j}) \in N(v_{i})} \sigma_{j}^{0}, \sigma_{k}^{1} \sigma_{l}^{1}\right] \nonumber \\ 
&= 0.
\end{align}

Placing $ \sigma_{j}^{0} $ for each neighbor $ v_{j} $ of a node $ v_{i} $ for the term $ T_{i, \text{MIS}} $ means that $ \left[T_{i, \text{MIS}}, \sigma_{i}^{1} \sigma_{j}^{1}\right] = 0 $ because $ \sigma_{j}^{0} \sigma_{j}^{1} = \sigma_{j}^{1} \sigma_{j}^{0} =  0 $. In fact, $ \sum_{i=1}^{n} T_{i, \text{MIS}} $ is a standard driver for exploiting this symmetry of MIS. Moreover, this driver and its associated mixer provide irreducible commutation (see \cref{sec:complexity}). This example illustrates a simple means to satisfying higher order constraints: placing a single $ \sigma^{0} $ ($ \sigma^{1} $) term on a qubit where there is a term $ \sigma^{1} $ ($ \sigma^{0} $) in the constraint (provided at least one $ \sigma^{+} $ or $ \sigma^{-} $ has been placed to ensure off-diagonality). This motivates the formulation of a backtracking approach like Alg.~\ref{alg:find_com_terms} in \cref{sec:theory_to_prac}.

\quad Cast as a minimization problem, the cost Hamiltonian $ H_{f,\text{min}} $ leads to the phase-separation operator $ e^{-i \, \alpha \, \left( \sum_{i} \sigma^{0} \right) } = \prod_{i=1}^{n} e^{-i \, \alpha \, \sigma^{0} } $. As an initial state inside the constraint space, we can use the embedding of the empty independent set $ \ket{\psi(0)} = \ket{0} \ldots \ket{0} $ or an embedding of any valid independent set. \cref{fig:indset_exam} visualizes this for a simple triangle graph $ E = \{ (1,2), (2,3), (1,3) \} $.

\begin{figure}
\includegraphics[width=1.0\textwidth]{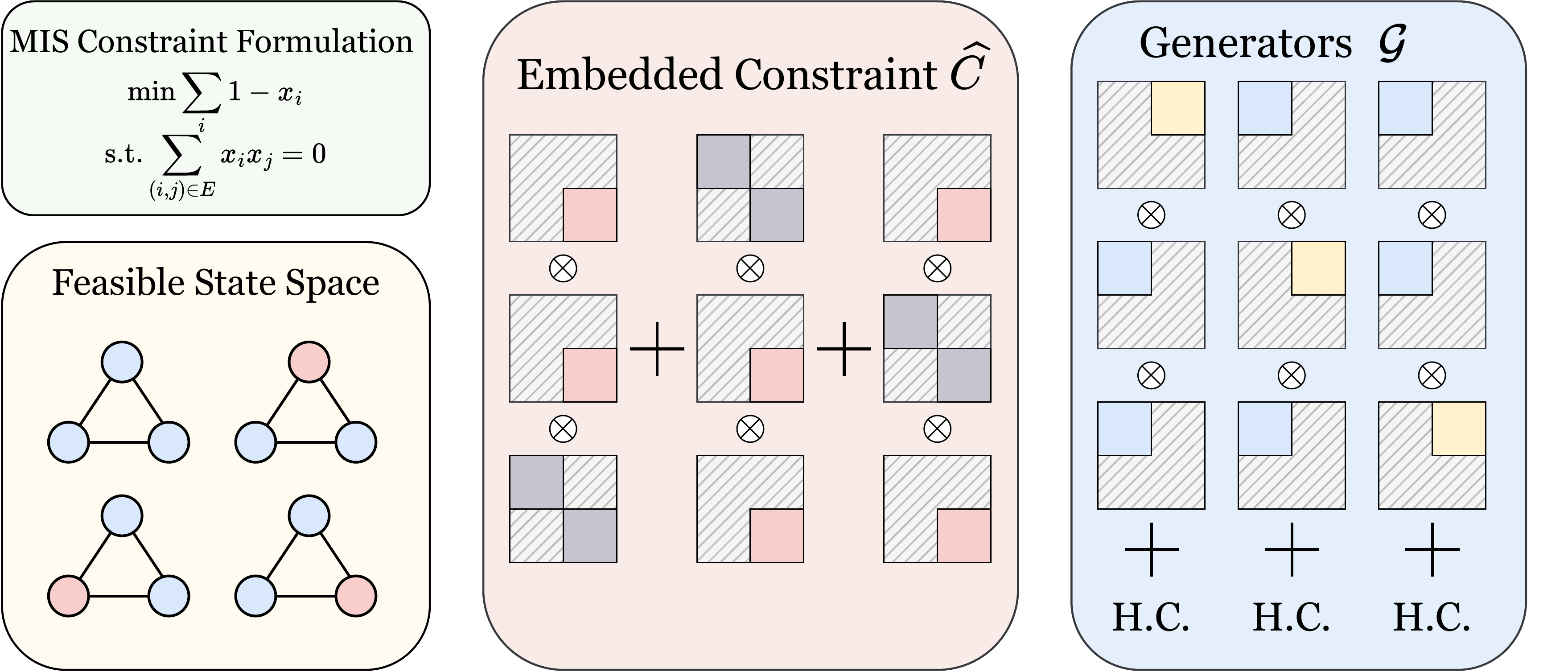}
\caption{\textbf{Imposing Set Independence.} By constructing drivers that include (or revert inclusion) of a node in the set based on no neighbors being in the set, we can explore the entire feasible state space of valid independent sets of a graph. Such drivers satisfy the algebraic condition in \cref{thm:gencom} and are the lowest locality drivers for the problem.}
\label{fig:indset_exam}
\end{figure}

\subsection{Wrap Up: Simplification for Linear Constraints and Sufficient Conditions for Ising Spin-z Constraints}

Note that the identity $ [AB, C] = A [B, C] + [A, C] B $ implies that commutation is preserved under multiplication and the identity $ [A + B, C] = [A, C] + [B, C] $ implies commutation is preserved under addition. Then finding a collection of terms $ \mathcal{T} $, each commutative with the collection of constraints of interest, can be recognized as the Jordan algebra of constraint preserving drivers (developed in \cref{sec:jordan}). In particular, we develop algorithms to find such a set $ \mathcal{T} $ that satisfies \cref{thm:gencom} for ansatz generation in \cref{sec:theory_to_prac}.

\quad In comparison to the expression derived in Appendix~\ref{app:suffquad}, the inclusion of $ \sigma^{0} $ and $ \sigma^{1} $ leads to a more inclusive way to find terms that commute to constraints, while the expressions of Appendix~\ref{app:suffquad} give a straight forward sufficient condition that generalizes the expression found in Ref.~\cite{leipold2021constructing} and are more useful in situations where the basis $ \{ \mathbbm{1}, \sigma^{z} \}^{\bigotimes n} $ is used, for example to express quantum Hamiltonians in the study of quantum materials or chemistry, such as mixers for VQE to handle Jordan-Wigner or Bravyi-Kitaev transformed operators to constrain the number of particles and symmetry in spin observables~\cite{ryabinkin2018constrained}.

\quad For completeness, we show how to recognize the main algebraic result in Ref.~\cite{leipold2021constructing} within this formalism. Consider a linear constraint $ C(x) = \sum_{j=1}^{n} c_{j} \, x_{j} $ that was embedded as $ \hat{C'} = \sum_{j=1}^{n} c_{j} \, \sigma_{j}^{z} = \sum_{j=1}^{n} c_{j} \, (2 \sigma_{j}^{0} - \mathbbm{1}) = \left( 2 \,  \sum_{j=1}^{n} \sigma_{j}^{0} \right) -\left( \sum_{j=1}^{n} c_{j} \right) \mathbbm{1} $. Since the identity operator is in the kernel of any commutation, we would use the embedded operator $ \hat{C} = \sum_{j=1}^{n} c_{j} \, \sigma_{j}^{0} $, and commutation with $ \hat{C} $ implies commutation with $ \hat{C'} $ and vice versa. Same follows for embeddings utilizing $ \sigma^{1} $ rather than $ \sigma^{0} $ (as either choice is valid) as $ \hat{C}'' = \sum_{j=1}^{n} c_{j} \, \sigma_{j}^{1} = \sum_{j=1}^{n} c_{j} \, \left( \mathbbm{1} - \sigma_{j}^{0} \right) = \left( \sum_{j=1}^{n} c_{j} \right) \mathbbm{1} - \left( \sum_{j=1}^{n} c_{j} \, \sigma_{j}^{0} \right) $. Then \cref{eq:thmcom} can be simplified as:
\begin{align}\label{eq:thmlincom}
\sum_{j=1}^{n} c_{j} \left( \Theta(v_{j}) - \Theta(w_{j}) \right) = \vec{c} \cdot \left( \vec{v} - \vec{w} \right) = 0,
\end{align}

since placing $ x_{j} \neq 0 $ or $ y_{j} \neq 0 $ is irrelevant for the commutation for any location and recognizing $ \mathcal{J} = \{ (\mathbf{e}_{1}, \z), \ldots, (\mathbf{e}_{n}, \z) \} $, which are one hot vectors for each index (assuming each $ c_{j} $ is non-zero, although this assumption can be dropped in the expression's final form). This allows us to state the algebraic condition for linear constraints found in Ref.~\cite{leipold2021constructing}:
\begin{thm}\label{thm:linconcom}
A Hermitian matrix $ H(\mathscr{X}, \mathscr{Y}, \mathscr{V}, \mathscr{W}) $ commutes with an embedded constraint operator $ \hat{C} = \sum_{j=1}^{n} c_{j} \, \sigma_{j}^{0} = \vec{c} \cdot \vec{\sigma^{0}} $ iff $ \vec{c} \cdot \left( \vec{v} - \vec{w} \right) = 0 $ for all $ (\vec{v}, \vec{w}) \in \Delta(\mathscr{V}, \mathscr{W}) $.
\end{thm}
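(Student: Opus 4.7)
The plan is to instantiate Theorem~\ref{thm:gencom} at the linear constraint $\hat{C} = \sum_j c_j \sigma_j^0$ and verify that the general algebraic condition collapses exactly to $\vec{c}\cdot(\vec{v}-\vec{w}) = 0$. First I would set $\mathcal{J} = \{(\mathbf{e}_j, \mathbf{0})\}_{j=1}^{n}$ with $\beta_J = c_j$, so that each inner summand in Eq.~\ref{eq:thmcom} has $\vec{a}_J = \mathbf{e}_j$ and $\vec{b}_J = \mathbf{0}$. With $\vec{b}_J = \mathbf{0}$, the shift in the second tensor slot vanishes identically, and the shift in the first slot becomes $(\mathbf{1}-\vec{v}-\vec{x})\circ \mathbf{e}_j$, which is nonzero only at position $j$ and only when both $x_j = 0$ and $v_j = 0$ (respectively $w_j = 0$ for the second sum).

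Next I would use the orthogonality of $(\vec{x}_j,\vec{y}_j,\vec{v}_j,\vec{w}_j) \in \Delta(\mathscr{X},\mathscr{Y},\mathscr{V},\mathscr{W})$ as binary indicator vectors to analyze the Heaviside difference in the first sum: $(1 - \Theta(y_j + w_j)) - (1 - \Theta(y_j + v_j)) = \Theta(y_j + v_j) - \Theta(y_j + w_j)$. By orthogonality, if $y_j = 1$ then $v_j = w_j = 0$ and the difference vanishes; otherwise the difference is exactly $v_j - w_j \in \{-1,0,1\}$. In the nonzero cases one further checks, using the single-qubit identities from Appendix~\ref{app:single_qubit_relations} (e.g.\ $\sigma^+ \sigma^0 = \sigma^+$, $\sigma^- \sigma^0 = 0$ and their mirrors), that the shifted tensor $T(\vec{x}+(\mathbf{1}-\vec{v}-\vec{x})\circ \mathbf{e}_j,\vec{y},\vec{v},\vec{w})$ evaluates back to the unmodified $T(\vec{x},\vec{y},\vec{v},\vec{w})$.

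Assembling these reductions, the first sum in Eq.~\ref{eq:thmcom} becomes $\sum_j c_j (v_j - w_j)\, T(\vec{x},\vec{y},\vec{v},\vec{w}) = \vec{c}\cdot(\vec{v}-\vec{w})\, T(\vec{x},\vec{y},\vec{v},\vec{w})$, and a symmetric computation on the second sum gives $\vec{c}\cdot(\vec{w}-\vec{v})\, T(\vec{x},\vec{y},\vec{w},\vec{v})$. Because the reduced basis terms $T$ are linearly independent across distinct orthogonal tuples, the vanishing of the full commutator is equivalent to the scalar coefficient vanishing on every tuple, yielding the stated iff condition on $\Delta(\mathscr{V},\mathscr{W})$. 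The equivalence with the $\sigma^z$-based embedding of Ref.~\cite{leipold2021constructing} mentioned preceding the theorem is then immediate, since $\sigma^z_j = 2\sigma_j^0 - \mathbbm{1}$ and any multiple of the identity lies in the kernel of $[\,\cdot\,, H]$.

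The main obstacle will be the careful bookkeeping of the tensor-argument shift: showing that the apparent promotion of the $j$-th slot to $\sigma^0$ in the output leaves the term strictly invariant precisely in the nonzero-coefficient cases $v_j = 1$ or $w_j = 1$. This relies on combining the orthogonality of the tuples (which forces $x_j = 0$ in these cases) with the absorption identities $\sigma^0\sigma^+ = 0$ and $\sigma^+\sigma^0 = \sigma^+$ from the appendix. Once that collapse is verified, Theorem~\ref{thm:linconcom} is just the one-dimensional specialization of the machinery already developed, and no additional structural argument is needed.
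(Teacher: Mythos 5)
Your proposal is correct and follows essentially the same route as the paper: the paper likewise obtains Theorem~\ref{thm:linconcom} by specializing Theorem~\ref{thm:gencom} to $\mathcal{J} = \{(\mathbf{e}_{1},\mathbf{0}),\ldots,(\mathbf{e}_{n},\mathbf{0})\}$ and observing that the $\vec{x},\vec{y}$ placements are irrelevant, so that Eq.~\ref{eq:thmcom} collapses to $\vec{c}\cdot(\vec{v}-\vec{w})=0$ as in Eq.~\ref{eq:thmlincom}. Your write-up is in fact more explicit than the paper's one-line simplification, since you verify via the orthogonality of the tuples and the single-qubit absorption identities that the shifted tensor arguments collapse back to the unmodified basis terms whenever the Heaviside coefficient is nonzero.
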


\quad Striking in the case of linear constraints is that only $ \vec{v}, \vec{w} $ play any role in determining if a Hermitian matrix will commute with the associated embedded constraint operator. That is, $ \vec{x}, \vec{y} $ may be nonzero vectors for each entry of $ \Delta(\mathscr{X}, \mathscr{Y}, \mathscr{V}, \mathscr{W}) $, but have no impact on the resulting commutation. As in the general case, it is also striking that the implication holds for any $ \alpha \in \mathbbm{C} $. 
\section{The Complexity of Imposing Constraints on Quantum Operators}\label{sec:complexity}

\begin{figure}[t]
\includegraphics[width=0.6\textwidth]{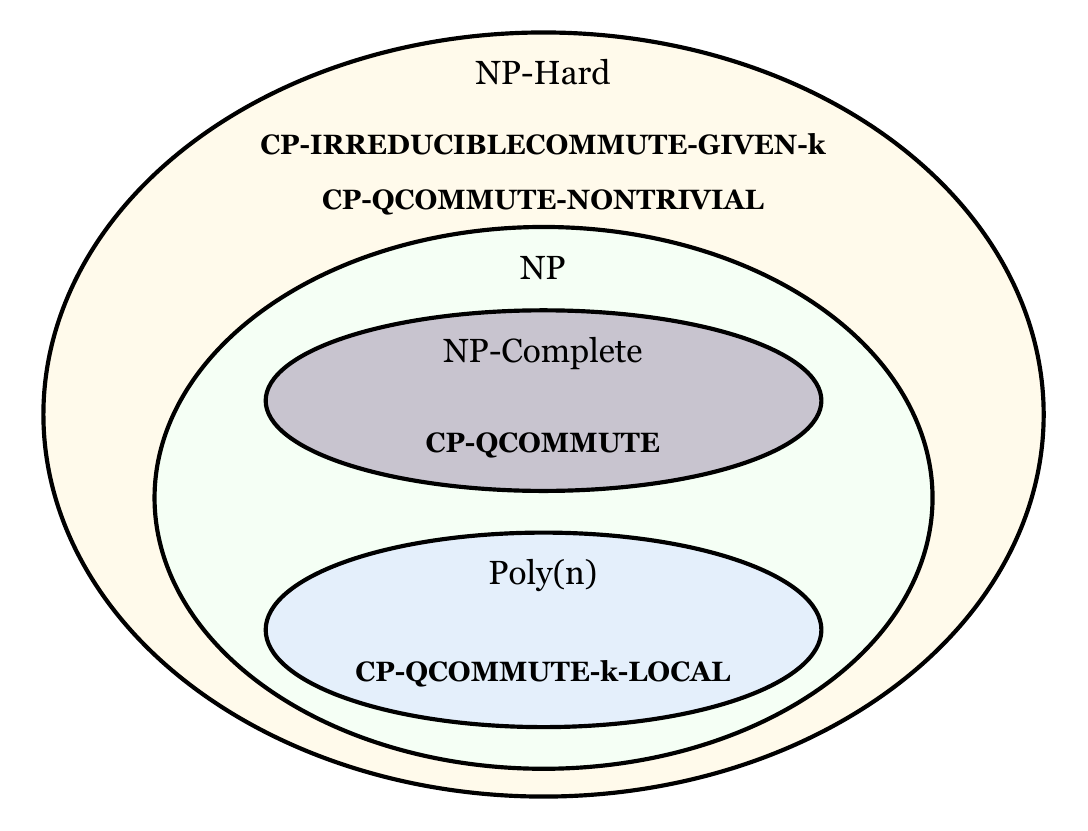}
\caption{\textbf{The Complexity of Imposing Constraints on Quantum Operators.} A Venn diagram showing the classification of important problems associated with imposing constraints, under the assumption that $NP\neq P$ and \texttt{CP-IRREDUCIBLECOMMUTE-GIVEN-k} is strictly higher in the polynomial hierarchy (which does not collapse).}
\label{fig:complex}
\end{figure}

In this section we turn our attention to the related computational tasks associated with finding a Hamiltonian that commutes with an arbitrary set of constraints. The definitions follow those discussed in Ref.~\cite{leipold2021constructing} and use the notation introduced in \cref{sec:algecond}. 

\begin{defi}[CP-QCOMMUTE] 
	Given a set $ \mathcal{C} = \{ C_{1}, \ldots, C_{m} \} $ of constraints such that $ \hat{C_{i}} = \sum_{(\vec{a_{J}}, \vec{b_{J}}) \in \mathcal{J}} \beta_{J} \prod_{k=1}^{n} \left( \sigma_{k}^{0} \right)^{a_{Jk}} \left( \sigma_{k}^{1} \right)^{b_{Jk}} $, over a space $ \mathbb{C}^{2^{n}} $ with $ \beta_{J} \in \Z $ and $ |\mathcal{J}| = \mathcal{O}\left(\text{poly}(n)\right) $, is there a Hermitian Matrix $ H $, with $ \mathcal{O} \left( \text{poly}( n) \right) $ nonzero coefficients over a basis $ \{ \chi_1, \chi_2, \chi_3, \chi_4 \}^{\bigotimes n} $, such that $ \left[ H, \hat{C_{i}} \right] = 0 $ for all $ \hat{C_{i}} $ and $ H $ has at least one off-diagonal term in the spin-z basis?
\end{defi}

\quad In the case where all $ \vec{a}_{J}, \vec{b}_{J} $ are one-hot vectors (such that they only have one nonzero entry), the constraints are linear and the corresponding version of \texttt{CP-QCOMMUTE} is called \texttt{ILP-QCOMMUTE}~\cite{leipold2021constructing}. Notice that $ \chi_{i} $ need not form a Hermitian basis with coefficients restricted to the reals, but $ H $ should be Hermitian over this general linear operator basis, which can be computed in polynomial time $ \texttt{P} $ given that $ H $ has $ \mathcal{O}(\text{poly}(n)) $ terms to conjugate transpose. 

\begin{thm}\label{thm:qcommute_is_nphard}
    \texttt{CP-QCOMMUTE} is NP-Hard.
\end{thm}

\begin{proof}
    Follows immediately from the hardness of \texttt{ILP-QCOMMUTE}.
\end{proof}

\begin{thm}\label{thm:qcommute_is_npcomplete}
    \texttt{CP-QCOMMUTE} is NP-Complete.
\end{thm}

\begin{proof}
    To prove this, we rely on the algebraic conditions derived in \cref{sec:algecond}, specifically \cref{eq:thmcom} and App.~\ref{app:suffquad}. Checking commutation can be done in polynomial time through these formulas. Note it is possible to verify if a solution $ H $ has at least one off-diagonal basis term in polynomial time. Thus a solution can be verified in polynomial time. 
\end{proof}

\begin{defi}[CP-QCOMMUTE-k-LOCAL] 
	Given a set $ \mathcal{C} = \{ C_{1}, \ldots, C_{m} \} $ of constraints such that $ \hat{C_{i}} = \sum_{(\vec{a_{J}}, \vec{b_{J}}) \in \mathcal{J}} \beta_{J} \prod_{k=1}^{n} \left( \sigma_{k}^{0} \right)^{a_{Jk}} \left( \sigma_{k}^{1} \right)^{b_{Jk}} $, over a space $ \mathbb{C}^{2^{n}} $ with $ \beta_{J} \in \Z $, is there a Hermitian Matrix $ H $, with $ \mathscr{O} \left( \text{poly}( n) \right) $ nonzero coefficients over a basis $ \{ \chi_1, \chi_2, \chi_3, \chi_4 \}^{\bigotimes n} $, such that no term acts on more than $ k $ qubits, that $ \left[ H, \hat{C_{i}} \right] = 0 $ for all $ \hat{C_{i}} $, and $ H $ has at least one off-diagonal term in the spin-z basis?
\end{defi}

\begin{thm}\label{thm:klocal_is_poly}
    \texttt{CP-QCOMMUTE-k-LOCAL} is in polynomial time, \texttt{P}, for bounded locality weight $k$ in $ \mathcal{O}(1) $.
\end{thm}

\begin{proof}
    We present an algorithm utilizing \cref{thm:gencom}. At least one term must be offdiagonal and we have two choices, $ \sigma^{+} $ or $ \sigma^{-} $. Then we have $ 2^{k} \sum_{j=1}^{k-1} \binom{n}{j} 4^{j} = \mathcal{O}\left( 4^{k} n^{k} \right) $ possible terms to consider. In \cref{sec:theory_to_prac}, we consider an exact backtracking algorithm that can achieve an effective runtime significantly lower than this upper bound, depending on the instance.
\end{proof}

\quad Notice that, as in Ref.~\cite{leipold2021constructing}, the relationship between unitary operators and Hamiltonians means that these results translate directly to the task of designing mixing operators for Quantum Alternating Operator Ansatz algorithms.

\quad However, finding a Hamiltonian $ H $ by solving \texttt{CP-QCOMMUTE} does not mean the action of $ H $ is off-diagonal in a specific commutation space $ \mathcal{F}_{(b_1,\ldots,b_m)} $. In that particular subspace, $ H $ may act trivially or only introduce phasing. A more stringent condition requires finding $ H $ such that $ H $ is off-diagonal in a specified commutation space $ \mathcal{F}_{b_{1}, \ldots, b_{m}} $.

\begin{defi}[CP-QCOMMUTE-NONTRIVIAL]
    Given a set $ \mathcal{C} = \{ C_{1}, \ldots, C_{m} \} $ of polynomial constraints and constraint values $ b = \{ b_{1}, \ldots, b_{m} \} $ such that $ \hat{C}_{i} = \sum_{(\vec{a_{J}}, \vec{b_{J}}) \in \mathcal{J}} \beta_{J} \prod_{k=1}^{n} \left( \sigma_{k}^{0} \right)^{a_{Jk}} \left( \sigma_{k}^{1} \right)^{b_{Jk}} $, over a space $ \mathbb{C}^{2^{n}} $ with $ \beta_{J} \in \Z $, is there a Hermitian Matrix $ H $, with $ \mathscr{O} \left( \text{poly}( n) \right) $ nonzero coefficients over a basis $ \{ \chi_1, \chi_2, \chi_3, \chi_4 \}^{\bigotimes n} $, such that $ \left[ H, \hat{C}_{i} \right] = 0 $ for all $ \hat{C}_{i} $ and $ \hat{F}_{ b_{1} }^{1} \, \cdots \, \hat{F}_{b_{m}}^{m} H \hat{F}_{b_{m}}^{m} \, \cdots \, \hat{F}_{b_{1}}^{1} $ has at least one off-diagonal term in the spin-z basis?
\end{defi}

\quad The hardness of this problem follows directly from Ref.~\cite{leipold2021constructing}.

\begin{thm}\label{thm:nontrivial_is_nphard}
    \texttt{CP-QCOMMUTE-NONTRIVIAL} is NP-Hard.
\end{thm}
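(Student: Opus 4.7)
The plan is to prove hardness by a trivial reduction from \texttt{ILP-QCOMMUTE-NONTRIVIAL}, whose NP-hardness is established in Ref.~\cite{leipold2021constructing}. An instance of \texttt{ILP-QCOMMUTE-NONTRIVIAL} is exactly the special case of \texttt{CP-QCOMMUTE-NONTRIVIAL} in which every pair $(\vec{a}_J,\vec{b}_J)$ appearing in each $\hat{C}_i$ is one-hot, so the identity map provides a polynomial-time (in fact, trivial) many-one reduction once we verify that the two formalisms describe the same problem. The witness $H$ expected by the verifier is identical in both formulations, and the size of any polynomial-length $H$ is preserved.

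The bulk of the argument is then to check three compatibility points. First, the embedding conventions must agree: the linear case of Eq.~\ref{eq:defconop} uses $\hat{C}=\sum_j c_j\,\sigma_j^0$, while Ref.~\cite{leipold2021constructing} may use $\sum_j c_j\,\sigma_j^z$ or $\sum_j c_j\,\sigma_j^1$; as noted in the discussion preceding Thm.~\ref{thm:linconcom}, these differ only by multiples of the identity, and commutation with $\hat{C}$ is invariant under adding identity shifts. Consequently each feasibility subspace $F_b$ and its projector $\hat{F}_b$ are the same object in both encodings. Second, the nontriviality predicate transfers: the projector $\hat{F}_1^{b_1}\cdots\hat{F}_m^{b_m}H\hat{F}_m^{b_m}\cdots\hat{F}_1^{b_1}$ used in \texttt{CP-QCOMMUTE-NONTRIVIAL} coincides for linear $\mathcal{C}$ with the projector onto the joint feasibility subspace used in the linear hardness proof, because the intersection of the eigenspaces of commuting diagonal operators is itself a diagonal projector onto the joint solution set.

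The main (minor) obstacle I anticipate is the basis-change step. Ref.~\cite{leipold2021constructing} works over a basis on $\{\chi_1,\chi_2,\chi_3,\chi_4\}^{\otimes n}$ that need not literally match $\{\mathbbm{1},\sigma^0,\sigma^1,\sigma^-,\sigma^+\}^{\otimes n}$; to close the reduction one observes that each element of either natural basis (Pauli, or the overcomplete basis used here) expands into $\mathcal{O}(1)$ elements of the other on a per-qubit basis, so a polynomial-sparsity witness in one basis remains polynomial-sparsity in the other, and the ``at least one off-diagonal term in the spin-z basis'' condition is preserved. Combining these three observations, any algorithm solving \texttt{CP-QCOMMUTE-NONTRIVIAL} in polynomial time would solve \texttt{ILP-QCOMMUTE-NONTRIVIAL}, establishing NP-hardness.
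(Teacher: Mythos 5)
Your proposal is correct and matches the paper's proof, which simply observes that the hardness follows directly from \texttt{ILP-QCOMMUTE-NONTRIVIAL} since the linear case is a special case of the polynomial one. The compatibility checks you spell out (embedding conventions, the nontriviality projector, and basis sparsity) are left implicit in the paper but are exactly the right details to verify.
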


\begin{proof}
    Follows directly from the hardness of \texttt{ILP-QCOMMUTE-NONTRIVIAL}.
\end{proof}

\quad A related question is whether a collection of terms $ \mathcal{G} $ connects the space of feasible states, such that $ \bra{x} \left( \sum_{g \in \mathcal{G}} g \right)^{r} \ket{y} \neq 0 $ for some $ r $ given that $ x, y \in F_{(b_{1},\ldots,b_{m})} $ (check \cref{eq:defembedfeasible}). \texttt{CP-QCOMMUTE} is enough to guarantee a term is not harmful, i.e. for two states $ \ket{x}, \ket{y} $ such that $ x \in F_{(b_{1},\ldots, b_{m})} $ and $ y \in F_{(a_{1},\ldots, a_{m})} $ where $ \vec{b} \neq \vec{a} $, the terms found satisfy $ \bra{x} \left( \sum_{g \in \mathcal{G}} g \right)^{r} \ket{y} = 0 $ for all $ r $. Satisfying both conditions defines irreducible commutation \footnote{The resulting eigendecomposition of any Hamiltonian over the basis $ \mathcal{G} $ does not include any span of embedded subspaces of the feasibility space.}.
\begin{defi}[CP-QIRREDUCIBLECOMMUTE-GIVEN-k]
Given a set $ \mathcal{C} = \{ C_{1}, \ldots, C_{m} \} $ of linear constraints and constraint values $ b = \{ b_{1}, \ldots, b_{m} \} $ such that $ \hat{C}_{i} = \sum_{(\vec{a_{J}}, \vec{b_{J}}) \in \mathcal{J}} \beta_{J} \prod_{k=1}^{n} \left( \sigma_{k}^{0} \right)^{a_{Jk}} \left( \sigma_{k}^{1} \right)^{b_{Jk}} $, over a space $ \mathbb{C}^{2^{n}} $ with $ \beta_{J} \in \Z $, and a set of basis terms $ \mathcal{G} = \{ \hat{G}_{1}, \ldots, \hat{G}_{k} \} $ such that $ \hat{G}_{i} \in \{ \chi_1, \chi_2, \chi_3, \chi_4 \}^{\bigotimes n} $ and $\hat{G}_{i} $ commutes with each embedded constraint, does $ \mathcal{G} $ connect the entire nonzero eigenspace of the operator $ \hat{F}_{b_{1}}^{1} \cdots \hat{F}_{b_{m}}^{m} $?
\end{defi}

\begin{thm}\label{thm:givenk_is_nphard}
    \texttt{CP-QIRREDUCIBLECOMMUTE-GIVEN-k} is NP-Hard. 
\end{thm}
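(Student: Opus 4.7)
The plan is to reduce from \texttt{ILP-QIRREDUCIBLECOMMUTE-GIVEN-k}, whose NP-hardness is established in Ref.~\cite{leipold2021constructing}. The definition above explicitly restricts $\mathcal{C}$ to linear constraints, so every ILP instance $(\mathcal{C}, b, \mathcal{G})$ is already syntactically an instance of \texttt{CP-QIRREDUCIBLECOMMUTE-GIVEN-k}. The reduction I would use is simply the identity map on instances, and the work is to verify that it preserves the ``connects-the-nonzero-eigenspace'' predicate between the two formulations.

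First, I would check that for linear constraints the natural polynomial embedding $\hat{C}_i = \sum_j c_{ij}\,\sigma_j^0$ used here yields the same feasibility subspaces $F_{b_i}$ as the embedding used in Ref.~\cite{leipold2021constructing}. This follows from $\bra{x}\sigma_j^0\ket{x} = 1 - x_j$, so that $\bra{x}\hat{C}_i\ket{x} = b_i$ iff $C_i(x) = b_i$, and, as noted in Sec.~\ref{sec:algecond} via the reduction discussion leading into Thm.~\ref{thm:linconcom}, any alternative linear embedding (in terms of $\sigma^z$, $\sigma^1$, etc.) differs only by a multiple of the identity and therefore defines the same eigenspaces and the same projectors $\hat{F}_i^{b_i}$. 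Consequently, the product projector $\hat{F}_1^{b_1} \cdots \hat{F}_m^{b_m}$ in the CP problem is exactly the feasibility projector used in the ILP problem.

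Second, I would observe that the connectivity predicate is purely a statement about off-diagonal matrix elements $\bra{x}\hat{G}_l\ket{y}$ between computational basis states $\ket{x}, \ket{y}$ in the common feasibility space, together with the reachability relation generated by these matrix elements. Since the generator basis $\{\chi_1, \chi_2, \chi_3, \chi_4\}^{\bigotimes n}$ is the same in both problems and each $\hat{G}_l$ has the same action on computational basis states regardless of which constraint formalism we sit inside, $\mathcal{G}$ connects the nonzero eigenspace of $\hat{F}_1^{b_1}\cdots\hat{F}_m^{b_m}$ in the CP instance if and only if it does so in the ILP instance. The identity map is trivially polynomial-time, yielding the desired hardness.

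The only delicate step — the potential main obstacle — is confirming that the embedding convention assumed implicitly by the ILP hardness proof in Ref.~\cite{leipold2021constructing} matches the $\sigma^0$-based one adopted here; this is dispatched by the identity $\sigma^z = 2\sigma^0 - \mathbbm{1}$ used in the paragraph preceding Thm.~\ref{thm:linconcom}, which shows that the two conventions have identical feasibility projectors and hence identical connectivity structure under any fixed $\mathcal{G}$. With this aligned, hardness transfers immediately.
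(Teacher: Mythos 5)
Your proposal is correct and takes essentially the same route as the paper, which simply cites hardness of the restricted linear case \texttt{ILP-QIRREDUCIBLECOMMUTE-GIVEN-k} from Ref.~\cite{leipold2021constructing}; your identity-map reduction and the verification that the $\sigma^{0}$-based embedding yields the same feasibility projectors is exactly the (unstated) content behind that one-line citation.
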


\begin{proof}
    Follows directly from the restricted case over linear constraints, \texttt{ILP-QIRREDUCIBLECOMMUTE-GIVEN-k}~\cite{leipold2021constructing}. 
\end{proof}

\quad In practice, we are interested in driver Hamiltonians that have bounded locality weight, so that they can be written as a sum of terms with each term acting on a constant or less number of qubits. In that case, solving the corresponding problems of finding the most terms possible that commute and act on a specific subspace is in \texttt{P} just as \texttt{CP-QCOMMUTE-k-LOCAL}. Assuming that local operators are sufficient to achieve irreducible commutation, as has been found for many practical symmetries of interest, the corresponding local version of \texttt{CP-QIRREDUCIBLECOMMUTE-GIVEN-k} is resolved since we know that all terms up to a fixed locality weight $ l $ can be found in \texttt{P}. \cref{fig:complex} depicts the complexity classifications discussed in this section.  
\begin{figure}[t]
\includegraphics[width=0.8\textwidth]{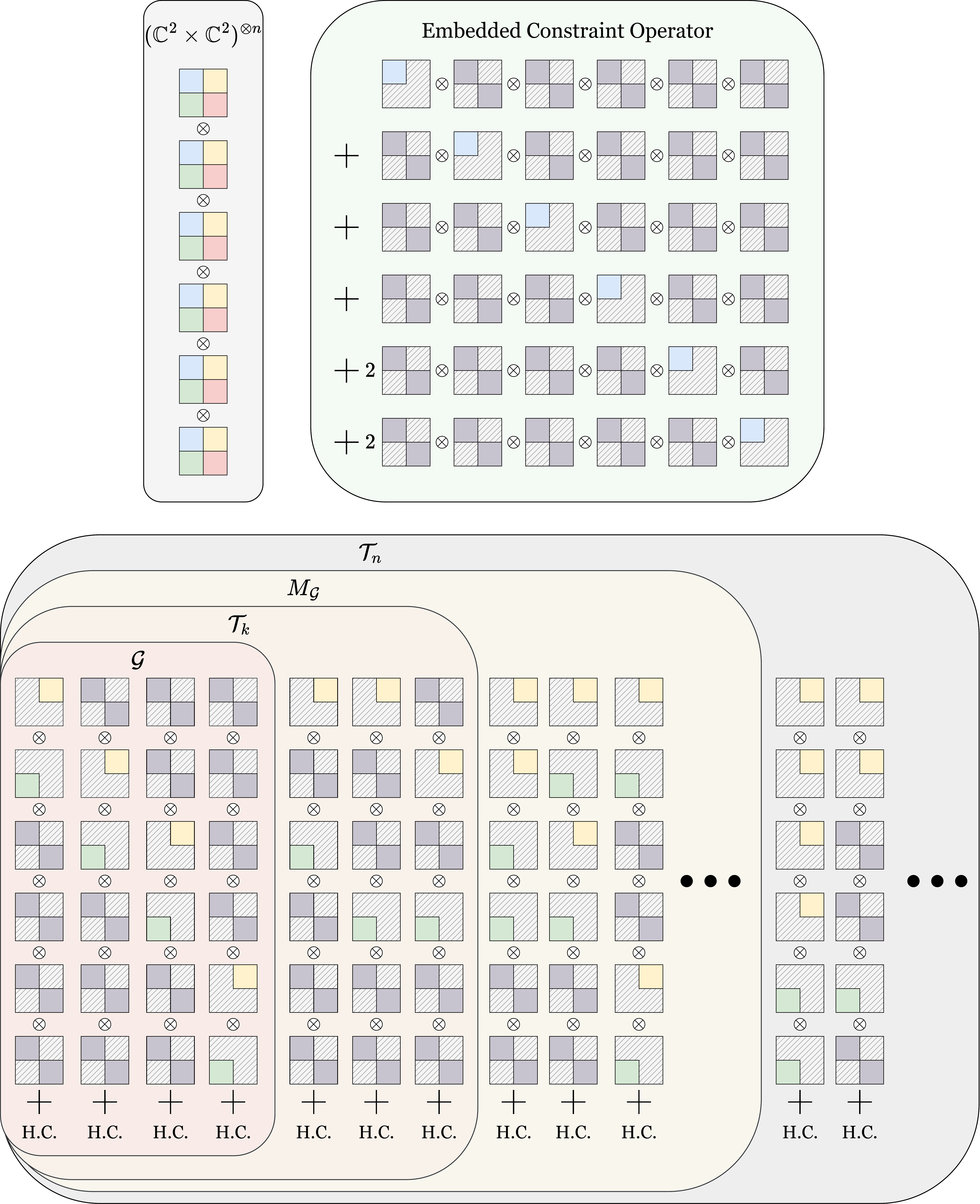}
\caption{\textbf{Hierarchy of Constraint Preserving Operators.} A visualization of an example embedded constraint operator $ \hat{C} = \sigma_{1}^{0} + \sigma_{2}^{0} + \sigma_{3}^{0} + \sigma_{4}^{0} + 2 \, \sigma_{5}^{0} + 2 \, \sigma_{6}^{0} $ and the associated collections of important commutator sets. Here $\mathcal{T}_2$ are all nontrivial commutators with $\hat{C}$ up to locality $k=2$. The generator set $\mathcal{G}$ is a subset of $\mathcal{T}_2$ that is sufficient to generate the Jordan algebra $M_{\mathcal{G}}$ which has $\mathcal{T}_2$ as a subset. In \cref{sec:theory_to_prac}, we develop an efficient algorithm Alg.~\ref{alg:find_com_terms} to find a set $\mathcal{T}_k$, established to be polynomial time by \cref{thm:klocal_is_poly}, and then an algorithm Alg.~\ref{alg:gen_com_unis} to find an approximately minimal set $\mathcal{G}$ that generates the Jordan algebra $M_{\mathcal{G}}$ such that $T_k$ is a subset. As established in \cref{thm:qcommute_is_nphard}, it is intractable to establish that $\mathcal{T}_n$ is a subset of $M_{\mathcal{G}}$.}
\label{fig:commute_hierarchy}
\end{figure}

\section{The Jordan Algebra of Constraint Preserving Drivers and Their Generators}\label{sec:jordan}

A natural question, given a collection of generators, is whether this collection is enough to span the corresponding commutation space $ \mathcal{F}_{b_1, \ldots, b_m} $ for a collection of embedded constraint operators. Irreducible commutation (see \cref{sec:algecond}, \cref{sec:complexity} for context) stipulates that $ \mathcal{G} $ can connect the whole feasible subspace if for every $ x, y \in F_{b_1,\ldots, b_m} $, we have $ \bra{x} \left( \sum_{g \in \mathcal{G}} g \right)^{r} \ket{y} \neq 0 $. Satisfying the irreducible commutation condition means that the associated Hamiltonian driver has full reachability within the feasible subspace $ F_{b_1,\ldots,b_m} $ of interest, or equivalently that $ e^{i \left( \sum_{j} g_{j} \right) } = \sum_{k=0}^{\infty} \left( i \sum_{j} g_{j} \right)^{k} / k! $ has nonzero support on every basis term in the commutation subspace $ \mathcal{F}_{b_1,\ldots,b_m} $.

\quad Given a final Hamiltonian $ H_{f} $ that specifies a solution subspace within the feasible subspace, a Hamiltonian driver $ H_{d} $ is sufficient (for any given $ H_{f} $) to explore the feasible space to find a solution if and only if $ H_{d} $ irreducibly commutes over the feasible subspace. In particular, given an initial state $ \ket{\psi(0)} $ such that $ \bra{\psi(0)} \hat{F}_{b} \ket{\psi(0)} = 1 $ and an initial Hamiltonian $ H_{in} $ such that $ \ket{\psi(0)} $ is the unique ground state such that $ [ H_{in}, H_{d} ] = 0 $, $ H_{d} $ is universal for exploring the feasible space such that a solution state to $ H_{f} $ can be found in the long time limit, for example through an annealing schedule such as $ (1-s) \, H_{in} + (1-s) \, s \, H_{d} + s \, H_{f} $ with $ s = t / T $, where $ T $ is sufficiently large. The corresponding mixer $ U_{d}(\beta) $ is then universal for solving such problems within QAOA given $ U_{in}(\gamma) = e^{ i \, \gamma H_{in} } $ and $ U_{f} = e^{ i \, \alpha H_{f} } $ through Trotterization in the high $ p $ limit. In practice, given a feasible state $ x $, $ H_{in} $ can be made of one-local spin-z terms for the state $ \ket{\psi(0)} = \ket{x} $ (for example, see Ref.~\cite{leipold2022quantum} for a corresponding discussion in the case of circuit fault diagnostics).

\quad Indeed, as shown in \cref{sec:complexity}, it is NP-Hard to know if a collection of terms is enough to irreducibly commute. In particular, given a set of embedded constraint operators $ \mathcal{C} $ and a set of commutative terms $ \mathcal{T} = \{ T_{1}, \ldots, T_{|\mathcal{T}|} \} $, deciding whether there is a term $ g $ such that $ g $ is not in the set $ \mathcal{T} $ but impacts the reachability of the driver Hamiltonian for $ \mathcal{C} $ is NP-Hard. This limitation must either be overcome through other information, as known ans{\"a}tze for many studied constrained optimization problems have, or simply ignored. For example, we find a set of commutative terms without the explicit guarantee of irreducible commutation and solve the corresponding optimization problem within the effective constraint space. An important question for practitioners is then: given a set of constraints, is it plausible that a collection of local commutator terms is sufficient to irreducibly commute over the corresponding feasible subspace? If so, either designed mixers through careful analysis or automated search as will be discussed in \cref{sec:theory_to_prac} can be used to construct mixers for the constraints considered.  

\subsection{The Jordan Algebra of Constraint Preserving Drivers}

Any term can be a driving generator and thereby the driver Hamiltonian $ H = \sum_{j} g_{j} $ has a corresponding unitary
\begin{align}\label{eq:expexpansion}
e^{i \, H} = \sum_{k=0}^{\infty} \frac{1}{k!} \left( i \sum_{j} g_{j} \right)^{k} , 
\end{align} 

which has terms associated with both multiplication and addition of the individual generators. Therefore each term in the matrix ring is generated by this set of generators under addition and multiplication. However, $ g_{j} g_{k} $ is not Hermitian unless $ [ g_{j},  g_{k} ] = 0 $, meaning that many terms are not Hermitian in the resulting matrix ring. As such, Hermiticity is a \textit{desired} feature for generators.

\quad Instead, utilizing addition and anticommutation allows for expressing the same exponential while preserving Hermiticity. The resulting structure, a set of Hermitian operators closed under addition and the anticommutator $ \{ a, b \} = a \, b + b \, a $, is a \textbf{Jordan algebra}.\footnote{This is a \textit{special} Jordan algebra, since it sits inside an associative algebra (complex matrices over $ n $ qubits, $ \mathbb{C}^{2^{n} \times 2^{n}} $, under multiplication and addition) as a subspace closed under the anticommutator. Every Jordan algebra arising in this work is special in this sense, so we drop the qualifier for brevity.} In our setting we are interested in the Jordan algebra generated by Hermitian operators $ \mathcal{G} $ that each commute with every embedded constraint $ \hat{C} $. The resulting algebra consists of Hermitian operators each of which also commutes with $ \hat{C} $ for every $ C \in \mathcal{C} $ (formalized in Lemma~\ref{lem:anticom_commutes} below). Anticommutation is not associative, i.e. $ \{ a, \{ b, c \} \} \neq \{ \{ a, b \}, c \} $, but distributes (left/right) over addition, i.e. $ \{ a, b + c \} = a \, ( b + c ) + ( b + c ) \, a = a \, b + b \, a + a \, c + c \, a = \{ a, b \} + \{ a, c \} $ and $ \{ a + b, c \} = \{ c, a + b \} $.
\begin{lem}[Anticommutator of Commutators Commutes]\label{lem:anticom_commutes}
If $ [ a, \hat{C} ] = 0 $ and $ [b, \hat{C} ] = 0 $ then $[ \{ a, b \}, \hat{C} ] = 0 $.
\end{lem}
\begin{proof}
\begin{align}
[ \{ a, b \} , \hat{C} ] &= (a \, b + b \, a) \, \hat{C} - \hat{C} \, (a \, b + b \, a)  \nonumber \\
&= a \, b \, \hat{C} + b \, a \, \hat{C} - \hat{C} \, a \, b - \hat{C} \, b \, a \nonumber \\
&= (a \, b \, \hat{C} - \hat{C} \, a \, b) + (b \, a \, \hat{C} - \hat{C} \, b \, a) \nonumber \\
&= 0. 
\end{align}
The last step follows from $ [ a, \hat{C} ] = [ b, \hat{C} ] = 0 $ since it implies $ \hat{C} \, a \, b = a \, \hat{C} \, b =  a \, b \hat{C} $ and $ b \, a \, \hat{C} = b \, \hat{C} \, a = \hat{C} \, b \, a $. 
\end{proof}

\quad We show that terms of the exponent expansion \cref{eq:expexpansion} are sums of nested anticommutators, and therefore addition and anticommutation are enough to express the resulting commutation matrix space over which $ e^{i H} $ is nonzero. Define 
\begin{align}
E_{k} = \left( \sum_{j} g_{j} \right)^{k}.
\end{align} 
Note that since $ \sum_{j} g_{j} $ is Hermitian, so is $ E_{k} $ for every $ k $. Then:
\begin{align}
E_{2} &= \left( \sum_{j} g_{j} \right) \left( \sum_{k} g_{k} \right) = \sum_{j,k} g_{j} g_{k} = \sum_{j,k} \frac{1}{2} \{ g_{j}, g_{k} \}
\end{align}

Recognize that $ E_{2} $ is a sum of anticommutators of the generators and also Hermitian. Furthermore:
\begin{align}
E_{k} &= \left( \sum_{j} g_{j} \right)^{k} = \frac{ 1 }{ 2 } \left( E_{k-1} \left( \sum_{j} g_{j} \right) + \left( \sum_{j} g_{j} \right) E_{k-1} \right) \nonumber \\ 
&= \sum_{j} \frac{1}{2} \left\{ g_{j}, E_{k-1} \right\} .
\end{align}

\quad As such, $ e^{ i \left( \sum_{j} g_{j} \right) } = \sum_{k=0}^{\infty} i^{k} E_{k} / k! $ is made of Hermitian terms $ E_{k} $ generated through nested anticommutation and addition.

\begin{defi}[Jordan Algebra of Constraint Preserving Drivers]
Given a set of \textit{generators} $ \mathcal{G} = \{ g_{1}, \ldots, g_{k} \} $ such that $ [ g_{i}, \hat{C} ] = 0 $, the Jordan algebra of the generators is generated through summation of nested anticommutators
\begin{align} 
M_{\mathcal{G}} = \text{Alg}\left(\{ \R, \mathcal{G}, +, \{ \cdot, \cdot \} \}\right) = \text{span}\left( \bigcup_{k=1}^{\infty} \mathcal{G}_{k}^{\{\cdot,\cdot\}} \right),
\end{align}
with $ \{ \cdot, \cdot \} $ as the bilinear anticommutator and $ X \times_{\{ \cdot, \cdot \}} Y = (\{X_{i}, Y_{j}\})_{(i-1)|Y|+j} $ is the list (or vector) of anticommutators between each element in the two sets. For example, let $X = ( X_1, X_2 ), Y = ( Y_1, Y_2 ) $, then:
\begin{align}
X \times_{\{ \cdot, \cdot \}} Y &= (\{ X_1, Y_1 \}, \{ X_1, Y_2 \}, \{ X_2, Y_1 \}, \{ X_2, Y_2 \}). 
\end{align}
Then $\mathcal{G}_{k}^{\{\cdot,\cdot\}}$ is defined by the hierarchy 
\begin{align}
\mathcal{G}_{1}^{\{\cdot,\cdot\}} &= \mathcal{G}, \label{eq:g_com_base}\\ 
\mathcal{G}_{k}^{\{\cdot,\cdot\}} &= \mathcal{G} \times_{\{\cdot,\cdot\}} \mathcal{G}_{k-1}^{\{\cdot,\cdot\}} . 
\end{align}
\end{defi}

Up unto this point, we have relied only on the Kronecker (tensor) product for matrices, but hereforth it is useful to use the Kronecker product for vectors.

\begin{defi}[Kronecker Product for Vectors]
Given $ \beta = (\beta_1, \ldots, \beta_{p}), \gamma = (\gamma_1, \ldots, \gamma_{q}) $, the Kronecker product for the two vectors is:
\begin{align} 
(\beta \, \otimes \, \gamma)_{p (i-1) + j} = \beta_i \, \gamma_j \in \R^{p \times q}, \; 1 \leq i \leq p, 1 \leq j \leq q 
\end{align}
\end{defi}

For example, given $\beta = (\beta_1, \beta_2) $ and $\gamma = (\gamma_1, \gamma_2)$ the Kronecker product is $\beta \otimes \gamma = (\beta_1 \gamma_1, \beta_1 \gamma_2, \beta_2 \gamma_1, \beta_2 \gamma_2) $. While we have previously used the element-wise product $ A \circ B $ for binary vectors, here we extend it to operate between vectors of scalars and vectors of operators.

\begin{defi}[Element-wise product of vectors (of scalars and operators)]
Given a vector of scalars $(a_1,\ldots,a_n)$ and a vector of operators $(X_1,\ldots,X_n)$, the elementwise product is $\sum_{j=1}^{n} a_{j} X_{j}$. 
\end{defi}

In fact, anticommutation of two element-wise products of scalars and operators follows a simple distributivity law involving the Kronecker product of vectors and the direct product of anticommutation for the operators. 

\begin{lem}[Coefficient–Anticommutator Distributivity]\label{lem:trans_anticom_coef}
$ \{ \beta \circ X, \gamma \circ Y \} = (\beta \otimes \gamma) \circ X \times_{\{\cdot,\cdot\}} Y $
\end{lem}
\begin{proof}
The proof follows from algebraic manipulation:
\begin{align}
\{ \beta \circ X, \gamma \circ Y \} &= \left\{ \sum_{j} \beta_{j} \, X_{j}, \sum_{k} \gamma_{k} \, Y_{k} \right\} \nonumber \\ 
&= \sum_{jk} \beta_{j} \, \gamma_{k} \, \{ X_{j}, Y_{k} \} \nonumber \\
&= \sum_{jk} (\beta \otimes \gamma)_{(j-1)|X| + k} \, (X \times_{\{\cdot,\cdot\}} Y)_{(j-1)|X| + k} \nonumber \\ 
&= (\beta \otimes \gamma) \circ X \times_{\{\cdot,\cdot\}} Y .
\end{align}
\end{proof}

\begin{thm}[Anticommutation Expansion within the Jordan Algebra of Constraint Preserving Drivers]
Given a set of \textit{generators} $ \mathcal{G} = \{ g_{1}, \ldots, g_{k} \} $ such that $ [ g_{i}, \hat{C} ] = 0 $, then 
\begin{align}
U_{d}(\beta) = e^{i \, \sum_{j} \beta_{j} \, g_{j} } = \mathbbm{1} + \sum_{k=1}^{\infty} i^{k} \, \beta^{\otimes k} \circ \mathcal{G}_{k}^{\{\cdot,\cdot\}} / (2^{k-1} k!),
\end{align} 
and so $ U_{d}(\beta) \in M_{\mathcal{G}} $. 
\end{thm}

\begin{proof}
The proof follows from generalizing the anticommutation expansion to include coefficients. 

We show the $k$ term expansion matches the summation of element-wise products:
\begin{align}
\beta^{\otimes k} \circ \mathcal{G}_{k}^{\{\cdot,\cdot\}} / 2^{k-1} = \left(\sum_j \beta_j g_{j} \right)^{k}. 
\end{align}

For $k=1$, $\beta \circ \mathcal{G}_{1}^{\{\cdot,\cdot\}} = \sum_{j} \beta_{j} g_{j} $ directly from \cref{eq:g_com_base}. Assume it is true for $E_{k-1}$ such that $ \beta^{\otimes k-1} \circ \mathcal{G}_{k-1}^{\{\cdot,\cdot\}} / 2^{k-2} = E_{k-1} $, then we show the result holds for $E_{k}$:
\begin{align}
E_{k} &= \left( \sum_{j} \beta_{j} g_{j} \right)^{k} \nonumber \\  
&= \frac{ 1 }{ 2 } \left( \left( \beta^{\otimes k-1} \circ \mathcal{G}_{k-1}^{\{\cdot,\cdot\}} / 2^{k-2} \right) \left( \sum_{j} \beta_{j} g_{j} \right) + \left( \sum_{j} \beta_{j} g_{j} \right) \left( \beta^{\otimes k-1} \circ \mathcal{G}_{k-1}^{\{\cdot,\cdot\}} / 2^{k-2} \right) \right) \nonumber \\ 
&= \frac{1}{2^{k-1}} \left\{ \beta \circ \mathcal{G}, \beta^{\otimes k-1} \circ \mathcal{G}_{k-1}^{\{\cdot,\cdot\}} \right\} \nonumber \\ 
&= \frac{1}{2} (\beta \otimes \beta^{\otimes k-1}) \circ \left( \mathcal{G} \times_{\{\cdot,\cdot\}} \mathcal{G}_{k-1}^{\{\cdot,\cdot\}} \right) / 2^{k-2} \;\; \text{(from Lemma~\ref{lem:trans_anticom_coef})} \nonumber\\
&= \beta^{\otimes k} \circ \mathcal{G}_{k}^{\{\cdot,\cdot\}} / 2^{k-1} .
\end{align}
Then we have the expansion for exponentiation in terms of the anticommutators of the Jordan algebra:
\begin{align}
e^{i \sum_{j} \beta_{j} g_{j}} &= \mathbbm{1} + \sum_{k=1}^{\infty} i^{k} \frac{1}{k!} \left( \sum_{j} \beta_{j} g_{j} \right)^{k} \\ 
&= \mathbbm{1} + \sum_{k=1}^{\infty} i^{k} \frac{\beta^{\otimes k} \circ \mathcal{G}_{k}^{\{\cdot,\cdot\}}}{2^{k-1} k!},
\end{align}
and the Hermiticity of terms yields:
\begin{align}
\cos\left( \sum_{j} \beta_{j} g_{j} \right) &= \mathbbm{1} + \sum_{r=1}^{\infty} (-1)^{r} \frac{\beta^{\otimes 2r} \circ \mathcal{G}_{2r}^{\{\cdot,\cdot\}}}{2^{2r-1} (2r)!}, \\ 
\sin\left( \sum_{j} \beta_{j} g_{j} \right) &= \sum_{r=0}^{\infty} (-1)^{r} \frac{\beta^{\otimes 2r+1} \circ \mathcal{G}_{2r+1}^{\{\cdot,\cdot\}}}{2^{2r} (2r+1)!}.
\end{align}
\end{proof}

\begin{thm}[Separability under Generators]\label{thm:separability_gens}
A set of generators $ \mathcal{G} = \{ g_{j} \} $ imposes each $ C \in \mathcal{C} $ such that $ [ g_{j}, \hat{C} ] = 0 $ iff $ M_{\mathcal{G}} \subseteq \mathcal{F}_{C_{1},\ldots,C_{m}} $.
\end{thm}

\begin{proof}
The backward direction holds immediately by recognizing that $ g_{j} \in M_{\mathcal{G}} $ and so, if $ g_{j} \in \mathcal{F}_{C_{1},\ldots,C_{m}} $ by assumption then $[ g_{j}, \hat{C} ] = 0$ by \cref{thm:invar_multi_con}. 

In the forward direction, we use the closure over addition and anticommutation. Suppose $h \in M_{\mathcal{G}} $ and $[ h, \hat{C}_{j} ] \neq 0 $. Then $ h = \sum_{kl} \alpha_{kl} h_{k,l} $ where each $ h_{k,l} \in \bigcup_{k} \mathcal{G}_{k}^{\{\cdot,\cdot\}} $. By induction on Lemma~\ref{lem:anticom_commutes}, each Hermitian operator $ q \in \mathcal{G}_{k}^{\{\cdot,\cdot\}} $ commutes with $ \hat{C}_{j} $. Then from (bi)linearity $ [h, \hat{C}_{j} ] = \sum_{kl} \alpha_{kl} [ h_{k,l}, \hat{C}_{j} ] = 0 $.
\end{proof}

\begin{thm}[Reachability under Generators]\label{thm:reachability_gens}
A set of generators $ \mathcal{G} $ \textit{irreducibly commutes} with $\hat{F}_{(b_1,\ldots,b_m)}$ iff $ \mathcal{F}_{(b_1,\ldots,b_m)} \subseteq M_{\mathcal{G}} $ and $M_{\mathcal{G}} \subseteq \mathcal{F}_{C_{1},\ldots,C_{m}}$.
\end{thm}

\begin{proof}
From \cref{thm:separability_gens}, we know that $\mathcal{G}$ does not mix states in separate feasibility subspaces. We must only show the forward and backward implications for $\mathcal{F}_{(b_{1}, \ldots,b_{m})} \subseteq M_{\mathcal{G}}$.

Suppose the forward direction fails. Then there exists $ \ketbra{x}{y} $ such that $\ketbra{x}{y} \notin M_{\mathcal{G}} $ with $ x\neq y$. Then with $\mathbf{1} = (1,\ldots,1) \in \R^{|\mathcal{G}|}$:
\begin{align}
\bra{x} \, e^{i H} \ket{y} &= \text{Tr}\left( \ketbra{y}{x} \, e^{i \sum_{j} g_{j}} \right) \nonumber \\ 
&= \sum_{k=1}^{\infty} i^{k} \, \text{Tr}\left( \ketbra{y}{x}\frac{\mathbf{1}^{\otimes k} \circ \mathcal{G}_{k}^{\{\cdot,\cdot\}}}{2^{k-1} k!} \right) \; \; \nonumber \\ 
&= \sum_{k=1}^{\infty} i^{k} \, \frac{1}{2^{k-1} k!} \sum_{g \in \mathcal{G}_{k}^{\{\cdot,\cdot\}}} \text{Tr}\left( \ketbra{y}{x} \, g \right) \nonumber \\ 
&= 0. 
\end{align}

Suppose backward direction fails. Then an identical argument would show $ \bra{x} e^{i H} \ket{y} \neq 0 $. Note that this derivation for driving generators extends to mixing generators, see Def.~\ref{defi:genetator_mixing_driving}, through the Lie-Trotter formula for sufficiently high depth circuits:
\begin{align}
e^{i \sum_{j} g_{j}} = \lim_{n \rightarrow \infty} \left( \prod_{j} e^{i g_{j}/n} \right)^{n} . 
\end{align}
\end{proof}

From \cref{sec:algecond}, if we find all terms for $\mathcal{T} $ with no locality constraint, we construct a span for $\mathcal{F}_{b} $. In practice, however, a set of local generators $ \mathcal{G} \subseteq \mathcal{T}_{k} $ for some locality $k$ is sufficient such that $ \mathcal{F}_{b} \subseteq M_{\mathcal{G}} $. 

\subsection{Illustrative Example}\label{subsec:jordan_exam}

\cref{fig:commute_hierarchy} shows an example for $k=2$, where $\mathcal{G} \subset \mathcal{T}_{k}$ and $\text{span}(\mathcal{T}_{k}) \subset M_{\mathcal{G}} \subset \text{span}(\mathcal{T}_{n}) $. The example in the figure considers a single constraint 
\begin{align} 
C = (1-x_1) + (1-x_2) + (1-x_3) + (1-x_4) + 2 \, (1-x_5) + 2\, (1-x_6) 
\end{align}
that has the embedding
\begin{align}
\hat{C} = \sigma_{1}^{0} + \sigma_{2}^{0} + \sigma_{3}^{0} + \sigma_{4}^{0} + 2 \, \sigma_{5}^{0} + 2 \, \sigma_{6}^{0} .
\end{align}
Then suppose we have the generator set
\begin{align}
\mathcal{G} &= \left\{ \sigma_{1}^{+} \sigma_{2}^{-} + \sigma_{1}^{-} \sigma_{2}^{+}, \sigma_{2}^{+} \sigma_{3}^{-} + \sigma_{2}^{-} \sigma_{3}^{+}, \sigma_{3}^{+} \sigma_{4}^{-} + \sigma_{3}^{-} \sigma_{4}^{+}, \sigma_{5}^{+} \sigma_{6}^{-} + \sigma_{5}^{-} \sigma_{6}^{+} \right. \nonumber \\ 
&\phantom{===} i(\sigma_{1}^{+} \sigma_{2}^{-} - \sigma_{1}^{-} \sigma_{2}^{+}), i(\sigma_{2}^{+} \sigma_{3}^{-} - \sigma_{2}^{-} \sigma_{3}^{+}), i(\sigma_{3}^{+} \sigma_{4}^{-} - \sigma_{3}^{-} \sigma_{4}^{+}), i(\sigma_{5}^{+} \sigma_{6}^{-} - \sigma_{5}^{-} \sigma_{6}^{+}) \left. \right\} . 
\end{align}

The set is enough to generate the missing elements of $\mathcal{T}_{2}$; recall $\sigma^{+} \sigma^{-} = \sigma^{1}, \sigma^{-} \sigma^{+} = \sigma^{0}, \sigma^{0} + \sigma^{1} = \mathbbm{1}_{2} $ such that
\begin{align}
\{ \sigma_{1}^{+} \sigma_{2}^{-} + \sigma_{1}^{-} \sigma_{2}^{+} , \sigma_{2}^{+} \sigma_{3}^{-} + \sigma_{2}^{-} \sigma_{3}^{+} \} &= (\sigma_{1}^{+} \sigma_{2}^{-} + \sigma_{1}^{-} \sigma_{2}^{+})(\sigma_{2}^{+} \sigma_{3}^{-} + \sigma_{2}^{-} \sigma_{3}^{+}) + (\sigma_{2}^{+} \sigma_{3}^{-} + \sigma_{2}^{-} \sigma_{3}^{+}) (\sigma_{1}^{+} \sigma_{2}^{-} + \sigma_{1}^{-} \sigma_{2}^{+}) \nonumber \\ 
&= (\sigma_{1}^{+} \sigma_{2}^{0} \sigma_{3}^{-} + 0 + \sigma_{1}^{-} \sigma_{2}^{1} \sigma_{3}^{+} + 0) + (\sigma_{1}^{+} \sigma_{2}^{1} \sigma_{3}^{-} + 0 + 0 + \sigma_{1}^{-} \sigma_{2}^{0} \sigma_{3}^{+}) \nonumber \\ 
&= \sigma_{1}^{+} \sigma_{3}^{-} + \sigma_{1}^{-} \sigma_{3}^{+} \in M_{\mathcal{G}}, \\ 
\{ i(\sigma_{1}^{+} \sigma_{2}^{-} - \sigma_{1}^{-} \sigma_{2}^{+}) , \sigma_{2}^{+} \sigma_{3}^{-} + \sigma_{2}^{-} \sigma_{3}^{+} \} &= i(\sigma_{1}^{+} \sigma_{2}^{-} - \sigma_{1}^{-} \sigma_{2}^{+})(\sigma_{2}^{+} \sigma_{3}^{-} + \sigma_{2}^{-} \sigma_{3}^{+}) + i (\sigma_{2}^{+} \sigma_{3}^{-} + \sigma_{2}^{-} \sigma_{3}^{+}) (\sigma_{1}^{+} \sigma_{2}^{-} - \sigma_{1}^{-} \sigma_{2}^{+}) \nonumber \\ 
&= i (\sigma_{1}^{+} \sigma_{2}^{0} \sigma_{3}^{-} + 0 - \sigma_{1}^{-} \sigma_{2}^{1} \sigma_{3}^{+} + 0) + i (\sigma_{1}^{+} \sigma_{2}^{1} \sigma_{3}^{-} + 0 - \sigma_{1}^{-} \sigma_{2}^{0} \sigma_{3}^{+} + 0) \nonumber \\ 
&= i (\sigma_{1}^{+} \sigma_{3}^{-} - \sigma_{1}^{-} \sigma_{3}^{+}) \in M_{\mathcal{G}},
\end{align}
and then similarly:
\begin{align}
\{ \sigma_{1}^{+} \sigma_{3}^{-} + \sigma_{1}^{-} \sigma_{3}^{+} , \sigma_{3}^{+} \sigma_{4}^{-} + \sigma_{3}^{-} \sigma_{4}^{+} \} &= \sigma_{1}^{+} \sigma_{4}^{-} + \sigma_{1}^{-} \sigma_{4}^{+} \in M_{\mathcal{G}} , & 
\{ i(\sigma_{1}^{+} \sigma_{3}^{-} - \sigma_{1}^{-} \sigma_{3}^{+}), \sigma_{3}^{+} \sigma_{4}^{-} + \sigma_{3}^{-} \sigma_{4}^{+} \} &= i(\sigma_{1}^{+} \sigma_{4}^{-} - \sigma_{1}^{-} \sigma_{4}^{+}) \in M_{\mathcal{G}} , \nonumber \\ 
\{ \sigma_{2}^{+} \sigma_{3}^{-} + \sigma_{2}^{-} \sigma_{3}^{+} , \sigma_{3}^{+} \sigma_{4}^{-} + \sigma_{3}^{-} \sigma_{4}^{+} \} &= \sigma_{2}^{+} \sigma_{4}^{-} + \sigma_{2}^{-} \sigma_{4}^{+} \in M_{\mathcal{G}} , &
\{ i(\sigma_{2}^{+} \sigma_{3}^{-} - \sigma_{2}^{-} \sigma_{3}^{+}) , \sigma_{3}^{+} \sigma_{4}^{-} + \sigma_{3}^{-} \sigma_{4}^{+} \} &= i(\sigma_{2}^{+} \sigma_{4}^{-} - \sigma_{2}^{-} \sigma_{4}^{+}) \in M_{\mathcal{G}} . \nonumber 
\end{align}

There are also many more higher order terms, as an example
\begin{align}
M_{\mathcal{G}} &\ni (\sigma_{1}^{+} \sigma_{3}^{-} + \sigma_{1}^{-} \sigma_{3}^{+})(\sigma_{2}^{+} \sigma_{4}^{-} + \sigma_{2}^{-} \sigma_{4}^{+}) - i^2 (\sigma_{1}^{+} \sigma_{3}^{-} - \sigma_{1}^{-} \sigma_{3}^{+})(\sigma_{2}^{+} \sigma_{4}^{-} - \sigma_{2}^{-} \sigma_{4}^{+}) \nonumber \\ 
&\phantom{\ni==} = (\sigma_{1}^{+} \sigma_{3}^{-} + \sigma_{1}^{-} \sigma_{3}^{+})(\sigma_{2}^{+} \sigma_{4}^{-} + \sigma_{2}^{-} \sigma_{4}^{+}) + (\sigma_{1}^{+} \sigma_{3}^{-} - \sigma_{1}^{-} \sigma_{3}^{+})(\sigma_{2}^{+} \sigma_{4}^{-} - \sigma_{2}^{-} \sigma_{4}^{+}) \nonumber \\
&\phantom{\ni==} = \sigma_{1}^{+} \sigma_{2}^{+} \sigma_{3}^{-} \sigma_{4}^{-} + \sigma_{1}^{+} \sigma_{2}^{-} \sigma_{3}^{-} \sigma_{4}^{+} + \sigma_{1}^{-} \sigma_{2}^{+} \sigma_{3}^{+} \sigma_{4}^{-} + \sigma_{1}^{-} \sigma_{2}^{-} \sigma_{3}^{+} \sigma_{4}^{+} \nonumber \\ 
&\phantom{\ni===} + \sigma_{1}^{+} \sigma_{2}^{+} \sigma_{3}^{-} \sigma_{4}^{-} - \sigma_{1}^{+} \sigma_{2}^{-} \sigma_{3}^{-} \sigma_{4}^{+} - \sigma_{1}^{-} \sigma_{2}^{+} \sigma_{3}^{+} \sigma_{4}^{-} + \sigma_{1}^{-} \sigma_{2}^{-} \sigma_{3}^{+} \sigma_{4}^{+} \nonumber \\
&\phantom{\ni==} = \sigma_{1}^{+} \sigma_{2}^{+} \sigma_{3}^{-} \sigma_{4}^{-} + \sigma_{1}^{-} \sigma_{2}^{-} \sigma_{3}^{+} \sigma_{4}^{+} .
\end{align}

But some terms in $\mathcal{F}_{C} = \text{span}\left( \mathcal{T}_{n} \right) $ are not in $M_{\mathcal{G}} \cong \mathcal{T}_{2}$, some illustrative examples
\begin{align}
\sigma_{1}^{+} \sigma_{2}^{+} \sigma_{5}^{-} + \sigma_{1}^{-} \sigma_{2}^{-} \sigma_{5}^{+} &\in \mathcal{T}_{3} , \\ 
\sigma_{1}^{+} \sigma_{4}^{+} \sigma_{6}^{-} + \sigma_{1}^{-} \sigma_{4}^{-} \sigma_{6}^{+} &\in \mathcal{T}_{3} , \\ 
\sigma_{1}^{+} \sigma_{2}^{+} \sigma_{3}^{+} \sigma_{4}^{+} \sigma_{5}^{-} \sigma_{6}^{-} + \sigma_{1}^{-} \sigma_{2}^{-} \sigma_{3}^{-} \sigma_{4}^{-} \sigma_{5}^{+} \sigma_{6}^{+} &\in \mathcal{T}_{6} .
\end{align}

We derive multiplication, commutation, and anti-commutation identities for linear constraint terms such that these can be used in \cref{sec:theory_to_prac} to find generators for the set $ \mathcal{T}_{k} $ in \cref{sec:alg_driver_com}.

\section{From Theory to Practice: Tractably Finding Commutators, Generator Selection, and Sequencing into Mixers}\label{sec:theory_to_prac}

\newcommand{\alglincom}{\texttt{bt\_find\_linear\_commutators}} %
\newcommand{\algcom}{\texttt{bt\_find\_commutators}}
\newcommand{\algcheckgen}{\texttt{check\_linear\_in\_ring}}
\newcommand{\algcheckcon}{\texttt{check\_constraint\_status}}
\newcommand{\algconcond}{\texttt{constraint\_condition}}

\pagestyle{plain}
\begin{figure}[H]
\centering
\includegraphics[trim=0.7cm 0 0.0cm 0, clip, width=1.0\textwidth]{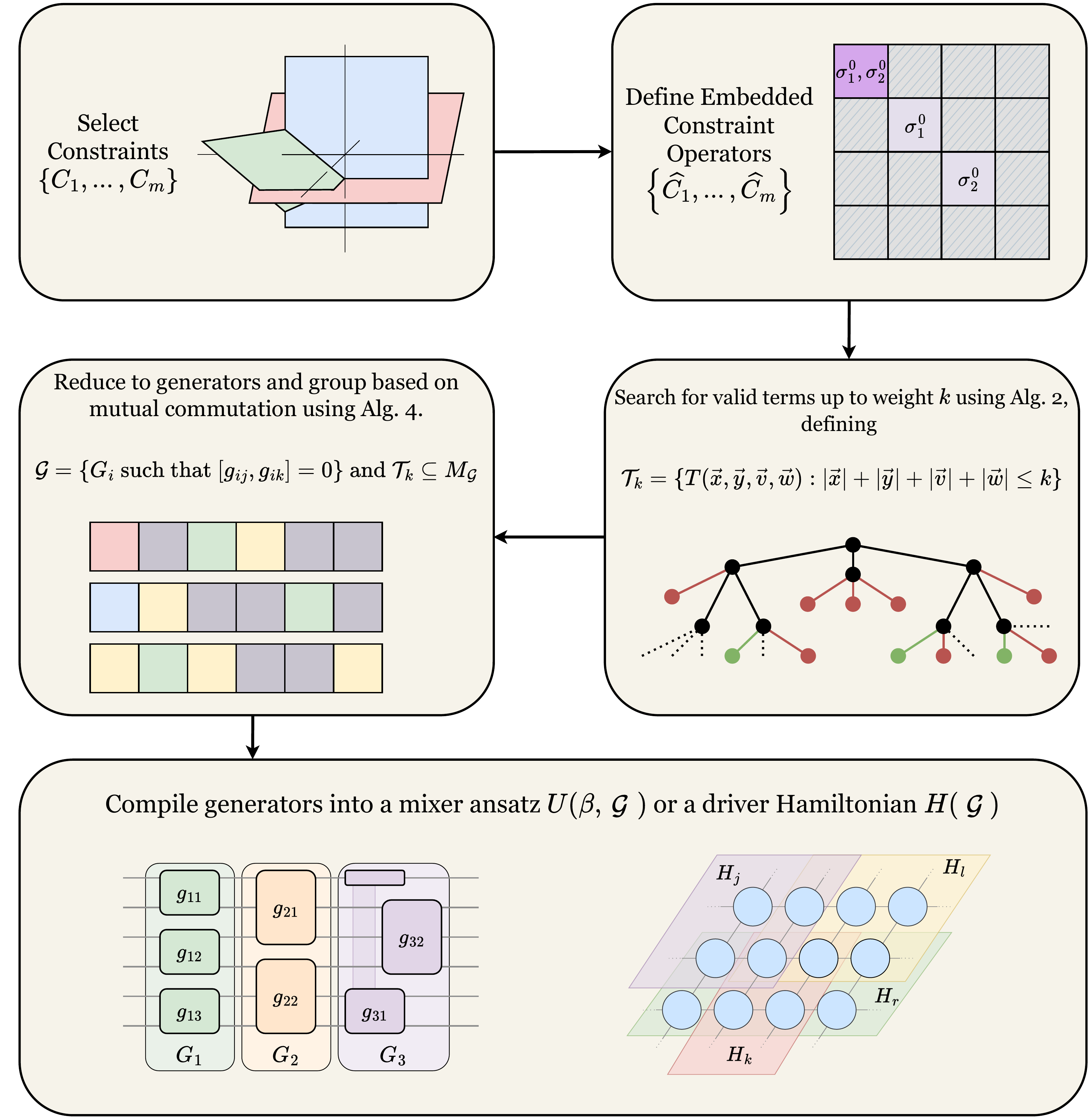}
\caption{\textbf{Algorithimic Flowchart for Imposing Classical Symmetries.} A flowchart of how commutators for general constraints can be found and then sequenced into an ansatz. We consider a collection of constraints, find commutators based on algebraic expressions in Sec.~\ref{sec:algecond} or App.~\ref{app:suffquad}, find a collection of generator sets of low locality and low cardinality for these commutators in Sec.~\ref{subsec:redgen_sequencing}, and then define corresponding unitaries based on their representation in Sec.~\ref{subsec:gen_to_proj}.}
\label{fig:flowchart}
\end{figure}

Following the theoretical results in Sections~\ref{sec:algecond}-\ref{sec:jordan}, this section addresses the practical implementation of constraint imposition. Symmetry preserving ans{\"a}tze are widely studied and typically selected through detailed analysis of a specific problem's symmetry. This section gives practical prescriptions for the general ansatz problem: given a collection of constraints $ \mathcal{C} $, identify a small set of local mixers that generate the Jordan algebra of all possible Hermitian transformations that are off-diagonal and commute with each embedding $ \hat{C}_{i} $ for $ C_{i} \in \mathcal{C} $.

\quad Theoretical results from the previous section set important boundaries for this question: it is, in general, NP-Complete to find a nonlocal unitary that commutes with the embedded constraints and NP-Hard to know if a set of driver terms provides enough expressibility at sufficient depth for the collection of all possible unitaries that commute with the embedded constraints. In practice, for many optimization problems of interest, mixers with sufficient expressibility are indeed local and so solving the algebraic forms of \cref{sec:algecond} can in practice be utilized for such problems given a sufficiently large locality bound $ \ell $.

\quad As discussed in \cref{sec:complexity}, finding commutators with bounded locality weight is tractable, but in practice, for maximal locality weight $ \ell $, the runtime scales as $ \mathcal{O}(4^{\ell} n^{\ell}) $, which can be too expensive for large $ n $ even for modest $ \ell $. However, it is not clear how \textit{tight} the bound on the worst-case is nor is it clear how common the worst-case is. The result of Alg.~\ref{alg:find_lincom_terms} or Alg.~\ref{alg:find_com_terms} given a maximum locality weight $\ell$ is the set: 
\begin{align} 
\mathcal{T}_{\ell} = \{ T(\vec{x}, \vec{y}, \vec{v}, \vec{w}) \, | \, \vec{x}, \vec{y}, \vec{v}, \vec{w} \in \{0,1\}^{n} \text{ s.t. $x_{i} + y_{i} + v_{i} + w_{i} \leq 1$ for $i \in [n]$ and } |\vec{x}| + |\vec{y}| + |\vec{v}| + |\vec{w}| \leq \ell \}, 
\end{align} 
which form generators of the additive space of commutators that commute with the set of constraints up to a locality weight $ \ell $ (see \cref{def:localityweight} and \cref{def:local_comterms}). Other constraints, such as geometric locality, can also be enforced with minimal changes to the approach. Tracking the constraint statuses in Alg.~\ref{alg:find_lincom_terms} and Alg.~\ref{alg:find_com_terms} helps quickly discard unusable terms, but we omit this detail here for simplicity.

\quad Once we have found a set of terms $ \mathcal{T}_{\ell} $ such that each term commutes with the embedded constraints, we aim to identify a subset of terms to generate the Jordan algebra and decide how to construct the mixers as unitary operators. \cref{fig:flowchart} is a flowchart showing, starting from a set of constraints, how a final mixer ansatz or driver Hamiltonian is constructed step by step through algorithms in this section.

\quad \cref{subsec:find_lincoms,subsec:find_coms} discusses a backtracking approach for finding commutators, which is exact and achieves a runtime superior to brute-force enumeration; this algorithm is used in the constructions considered in \cref{sec:qaoa_1in3}. \cref{subsec:redgen_sequencing} then introduces a method for partitioning commutative terms into a collection of generator sets with low locality weight within each set and low overall cardinality. \cref{subsec:gen_to_proj} then gives an approach for defining unitaries that maximizes the parallelization of these terms through projection operators. 

\subsection{A Sound and Complete Algorithm for Generating Linear Constraint Commutators}\label{subsec:find_lincoms}

\newcommand{\issat}{\texttt{is\_satisfied}}
\newcommand{\isvia}{\texttt{is\_viable}}
\begin{figure}[!t]
\centering
\setlength{\intextsep}{0.5em}
\begin{minipage}{1.0\linewidth}
\begin{algorithm}[H]
\caption{\algcheckcon$(u,\mathcal{C})$}\label{alg:check_con}
\begin{algorithmic}[1]
\Statex \textbf{Inputs: } term $u$, constraint list $ \mathcal{C} $
\State \issat{} = True  
\State Let $ i $ be the max index of non-identity terms in $u$
\For{constraint $C$ in $ \mathcal{C} $}
    \If{\algconcond$(u,C) \neq 0$} \Comment{See Alg.~\ref{alg:concond} and Alg.~\ref{alg:concondlin}}
        \State \issat{} = False 
        \State Let $j$ be the max index of nonzero entry of $C$  
        \If{$j < i$}
            \State \textbf{return} nonviable.
        \EndIf 
    \EndIf 
\EndFor 
\If{\issat{} == True}
    \State \textbf{return} satisfied.
\EndIf
\State \textbf{return} viable.  
\end{algorithmic}
\end{algorithm}
\end{minipage}
\caption*{Alg.~\ref{alg:check_con}: \textbf{Checking the Constraint Status.} This algorithm checks the constraint status of a term $u$ against a collection of constraints $\mathcal{C}$. For each constraint, we check if the constraint condition is currently satisfied and if all are, we return satisfied. If a constraint is not satisfied, we check if it is still viable, that is, could be satisfied at higher recursion depth.}
\begin{minipage}{1.0\linewidth}
\begin{algorithm}[H]
\caption{\alglincom$(u, i, d, \mathcal{C}, \ell, n, \mathcal{T}_{\ell})$}\label{alg:find_lincom_terms}
\begin{algorithmic}
\State \textbf{Inputs:} current term $u$, current index $i$, current length $d$, constraint list $\mathcal{C}$, maximum length $\ell$, number of variables $n$, commutator collection $ \mathcal{T}_{\ell} $ 
\If{\algcheckcon$(u, \mathcal{C})$ is satisfied}
    \If{$u$ not in $ \mathcal{T}_{\ell} $ (under addition)}
        \State add $u$ to commutative term list $ \mathcal{T}_{\ell} $. \Comment{Store valid commutator term.} 
    \EndIf
\EndIf
\If{$ i > n $ or $ d = \ell $} \Comment{We've hit end of indices or hit maximum locality}
    \State \textbf{return. } \Comment{End search.}
\EndIf 
\For{$v \in \{ \sigma^{+}, \sigma^{-} \} $}
    \State append $ v $ to $ u $. 
    \If{\algcheckcon($u$,$\mathcal{C}$) is viable or satisfied} \Comment{See Alg.~\ref{alg:check_con}}
        \State \alglincom$(u, i+1, d+1, \mathcal{C},\ell, n, \mathcal{T}_{\ell})$ \Comment{Recurse with $\sigma^{+}$ or $\sigma^{-}$.}
    \EndIf 
    \State pop $ v $ from $ u $. \Comment{Backtracking.}
\EndFor
\State \alglincom$(u, i+1, d, \mathcal{C}, \ell, n, \mathcal{T}_{\ell})$ \Comment{Recurse with $\mathbbm{1}_{2}$.}
\State \textbf{return. }
\end{algorithmic}
\end{algorithm}
\end{minipage}
\caption*{Alg.~\ref{alg:find_lincom_terms}: \textbf{Searching for Linear Constraint Commutators}. A backtracking approach to finding all commutative terms up to locality weight $\ell$ for embedded linear constraint operators. A collection $ \mathcal{T}_{\ell} $ is built by the algorithm. The term $u$ at each recursion is a viable term for $\mathcal{T}_{\ell}$ since it has not violated any algebraic condition associated with the constraints, as tracked by Alg.~\ref{alg:check_con}. The algorithm considers appending $\sigma^+$ or $\sigma^-$ to every valid $u$ at each index $i$ starting at $1$ and ending at $n$.}
\setlength{\intextsep}{\ointextsep}
\end{figure}

Alg.~\ref{alg:find_lincom_terms} is a backtracking algorithm for finding commutators for a collection of linear constraints given a locality weight bound $\ell$. Backtracking involves recursive depth-first search, each recursive call can be marked by its recursion depth. In the case of Alg.~\ref{alg:find_lincom_terms}, every recursion has an incremented index $i$, so we can associate the depth with the current qubit on which we are considering including a new basis term.

\quad In words, we recursively construct terms by finding appropriate basis elements at each index starting with the first index. We place either of $\{\sigma^{+}, \sigma^{-}\}$ or place no term meaning $\mathbbm{1}_{2}$ remains there. We end the recursion on any branch that becomes nonviable according to the conditions of \cref{eq:thmcom} or if the length would be above the maximum $\ell$. We check if the constraint conditions of \cref{eq:thmcom} are satisfied for each term constructed in each branch and add it to the list $\mathcal{T}_{\ell}$ if so.

\quad To generate $\mathcal{T}_{\ell}$ we would pass it as an empty list $\mathcal{T}_{\ell} \leftarrow \emptyset$ to Alg.~\ref{alg:find_lincom_terms}, which would then populate $\mathcal{T}_{\ell}$ before the top level returns; the call would be \alglincom$((),1,0,\mathcal{C}, \ell, n, \mathcal{T}_{\ell})$.

First we prove that Alg.~\ref{alg:find_lincom_terms} does indeed terminate.

\begin{thm}[Termination of Alg.~\ref{alg:find_lincom_terms}]
Alg.~\ref{alg:find_lincom_terms} eventually terminates and does so with the total recursive calls bounded by $\mathcal{O}\left( 2^{\ell} n^{\ell+1} \right)$. 
\end{thm}

\begin{proof}[Proof Sketch]
Since each recursive call has an incremented index $i$, eventually a branch will terminate when $i > n$ if not earlier. In fact, in the case that $\ell>n$ (no length limit) and there are no constraints, the backtracking algorithm would enumerate all $3^{n}$ possible terms in the set $ \{ \mathbbm{1}_{2}, \sigma^{-}, \sigma^{+} \}^{\otimes n} $. 

\quad To bound the number of recursive calls, recognize that if a term reaches length $\ell$, the branch is ended. As such, the algorithm constructs at most 
\begin{align}
\sum_{k \leq \ell} 2^{k} \binom{n}{k} &\leq \sum_{k \leq \ell} 2^{\ell} n^{\ell} \nonumber \\
&= (\ell+1) 2^\ell n^\ell \nonumber \\ 
&= \mathcal{O}\left( 2^{\ell} n^{\ell} \right) 
\end{align}
terms since we have $n$ slots into which we can place \textit{up to} $\ell$ objects and in each slot we may place one of two operators: $\{\sigma^{+}, \sigma^{-}\}$. Each possible term requires no more than $n$ recursions to be constructed, leading to the bound $\mathcal{O}\left( 2^{\ell} n^{\ell+1} \right)$. 
\end{proof}

Next, we prove that Alg.~\ref{alg:find_lincom_terms} would consider any valid commutator term in its search.

\begin{thm}[Completeness of Alg.~\ref{alg:find_lincom_terms}]
If a generator term is in the $\ell$-local commutator set, $g \in \mathcal{T}_{\ell} $, then $g$ will be included in Alg.~\ref{alg:find_lincom_terms}. 
\end{thm}

\begin{proof}[Proof Sketch]
A recursive branch with current term $u$ is pruned from the search if and only if \algcheckcon($u$, $\mathcal{C}$) returned nonviable or the length $d$ has reached the maximum $\ell$.

\quad In the case of the latter, it is not possible to place terms $v \in \{ \sigma^{-}, \sigma^{+} \} $ on indices beyond the current $i$ without going beyond length $\ell$. Since the algorithm never deletes placed terms, pruning these calls does not impact completeness.

\quad In the case of the former, it is sufficient to show that a nonviable term $u$ can never become viable through appending $v \in \{\sigma^{-},\sigma^{+}\}$ on some index beyond the current index $i$. Since \algcheckcon($u$,$C$) returned nonviable, there must be a constraint vector $\vec{c} \in C$ such that $\vec{c} \, \cdot \, \vec{u} \neq 0$ and the maximal index of non-zero entries of $\vec{c}$ is equal to or less than the current index $i$. Then placing $v$ in an index beyond $i$ has no impact on $\vec{c} \, \cdot \, \vec{u} \neq 0$ and so all such branches are nonviable as well.
\end{proof}

Finally, we prove that Alg.~\ref{alg:find_lincom_terms} will not include any terms that are not in the valid commutator set. 

\begin{thm}[Soundness of Alg.~\ref{alg:find_lincom_terms}]
If a generator term $g$ is included in Alg.~\ref{alg:find_lincom_terms}, then $g \in \mathcal{T}_{\ell}$
\end{thm}

\begin{proof}[Proof Sketch]
The proof follows from similar logic as theorem~\ref{thm:gencom}. A commutator term $u$ is included in the list only if \algcheckcon($u$,$\mathcal{C}$) is valid and the term would be generated by the recursive call structure. The former is true due to theorem~\ref{thm:gencom}. The latter is true since the term $u$ has length less than or equal to $\ell$ and each previous branch included a term that must have been viable. 
\end{proof}

\quad \cref{fig:tree_backtrack} depicts how such a commutator set is constructed for a simple example with the constraint $ \sigma_{1}^{1} + 2 \, \sigma_{2}^{1} + \sigma_{3}^{1} $. As shown in \cref{fig:flowchart}, once we have found $ \mathcal{T}_{\ell} $, we can consider reducing it to a smaller set of generators for the Jordan algebra.

\begin{figure}
    \centering
    \includegraphics[width=0.95\linewidth]{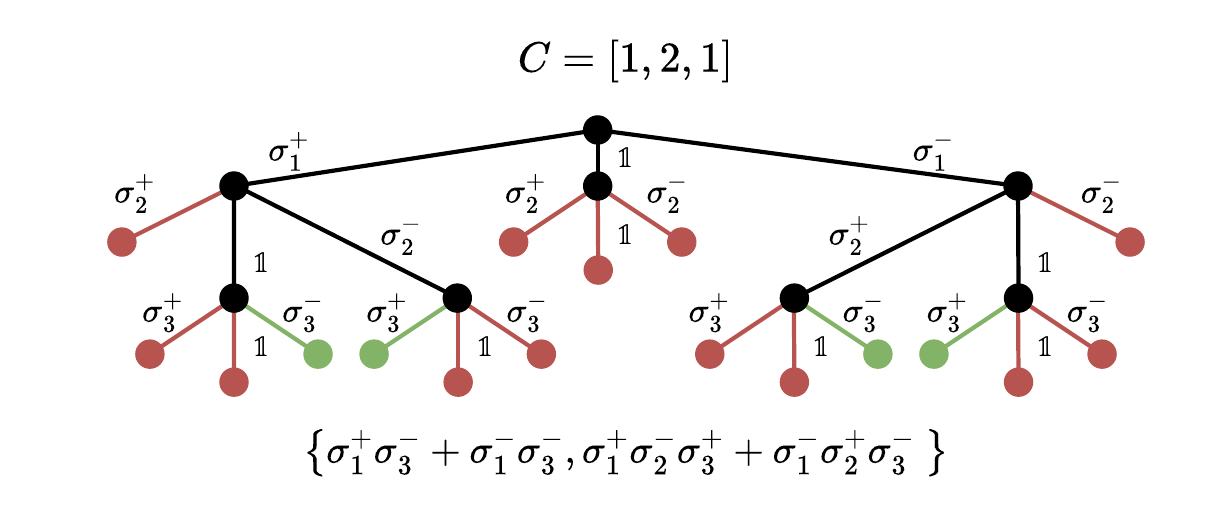} \\
    (A) \\
    \includegraphics[width=0.95\linewidth]{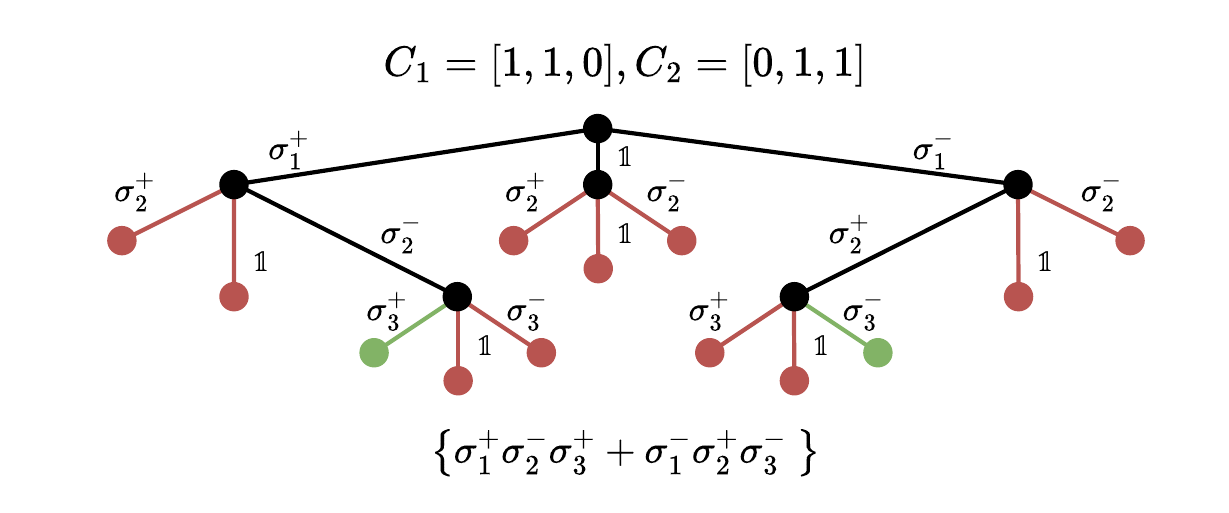} \\ 
    (B) 
    \caption{\textbf{Backtracking to Find Commutators.} Two simple examples of how backtracking edits a term recursively to construct all commutators up to given conditions. For (A) note that after building $ \mathbbm{1} \otimes \sigma_{2}^{+} $ (check the middle stem) ($\vec{v} = (0,1,0)$, $\vec{w} = (0,0,0)$), it becomes impossible to place any term on qubit $3$ to satisfy the algebraic constraint condition $\vec{c} \cdot (\vec{v} - \vec{w}) = 0$. Note (B) is for two small constraints and backtracking quickly discards nonviable branches. This allows us to dramatically prune the search space.}
    \label{fig:tree_backtrack}
\end{figure}

\subsection{Soundness and Completeness for Generating Polynomial Constraint Commutators}\label{subsec:find_coms}

\begin{figure}[!t]
\centering
\setlength{\intextsep}{0.5em}
\begin{minipage}{1.0\linewidth}
\begin{algorithm}[H]
\caption{\algcom$(u, i, d, j, \mathcal{C}, \ell, n, \mathcal{T}_{\ell})$}\label{alg:find_com_terms}
\begin{algorithmic}
\State \textbf{Inputs:} current term $u$, current index $i$, current length $d$, first slot index $j$, constraint list $\mathcal{C}$, maximum length $\ell$, number of variables $n$, commutator collection $ \mathcal{T}_{\ell} $ 
\If{\texttt{check\_con($u$,$\mathcal{C}$)} is satisfied}
    \If{$u$ not in $ \mathcal{T}_{\ell} $ (under addition)}
        \State add $u$ to commutative term list $ \mathcal{T}_{\ell} $. \Comment{Store valid commutator term.} 
    \EndIf
\EndIf
\If{$ i > n $ or $ d = \ell $} \Comment{We've hit end of indices or hit maximum locality weight}
    \State \textbf{return. } \Comment{End search.}
\EndIf 
\If{$d > 1$}
\For{$v \in \{ \sigma^{+}, \sigma^{-}, \sigma^{0}, \sigma^{1} \} $}
    \State append $ v $ to $ u $. 
    \If{\algcheckcon$(u, \mathcal{C})$ is viable or satisfied} \Comment{See Alg.~\ref{alg:check_con}}
        \State \algcom$(u, i+1, d+1, j, \mathcal{C}, \ell, n, \mathcal{T}_{\ell})$ \Comment{Recurse with one of $\{ \sigma^{+}, \sigma^{-}, \sigma^{0}, \sigma^{1} \}$.}
    \EndIf 
    \State pop $ v $ from $ u $. \Comment{Backtracking.}
\EndFor
\State \algcom$(u, i+1, d, j, \mathcal{C}, \ell, n, \mathcal{T}_{\ell})$ \Comment{Recurse with $\mathbbm{1}_{2}$.}
\Else
\For{$v \in \{ \sigma^{+}, \sigma^{-} \} $}
    \State append $v$ to $u$
    \If{\algcheckcon$(u, \mathcal{C})$ is viable or satisfied} \Comment{Placed first $\sigma^{+}$ or $\sigma^{-}$.}
        \State \algcom$(u, 0, d+1, i, \mathcal{C}, \ell, n, \mathcal{T}_{\ell})$ \Comment{Subsequent recursion with $j \leftarrow i$, $d>1$}
    \EndIf
    \State pop $v$ from $u$
    \State \algcom$(u, i+1, d, -1, \mathcal{C}, \ell, n, \mathcal{T}_{\ell})$ \Comment{Recurse with $\mathbbm{1}_{2}$.}
\EndFor 
\EndIf 
\State \textbf{return. }
\end{algorithmic}
\end{algorithm}
\end{minipage}
\caption*{Alg.~\ref{alg:find_com_terms}: \textbf{Searching for Constraint Commutators}. A backtracking approach to finding all commutative terms up to locality weight $\ell$ for embedded constraint operators. A collection $ \mathcal{T}_{\ell} $ is built by the algorithm. The term $u$ at each recursion is a viable term for $\mathcal{T}_{\ell}$ since it has not violated any algebraic condition associated with the constraints $ C $. The algorithm considers appending $\sigma^+$ or $\sigma^-$ to begin a term, and then appending one of $\{ \sigma^{+}, \sigma^{-}, \sigma^{0}, \sigma^{1} \}$ in subsequent recursion.}
\setlength{\intextsep}{\ointextsep}
\end{figure}

The case for polynomial constraints is a generalization of the case for linear constraints. The search is expanded to include $\sigma^{0}, \sigma^{1}$ and since we are interested in nontrivial commutators, we wish to restrict the search to include at least one $\sigma^{+}, \sigma^{-}$. In words, we recursively construct terms by finding appropriate basis elements at each index starting with the first index. At first, we place a single $\sigma^{+}, \sigma^{-}$ and then in subsequent calls place any of $\{\sigma^{+}, \sigma^{-}, \sigma^{0}, \sigma^{1}\}$. We end the recursion on any branch that becomes nonviable according to the conditions of \cref{eq:thmcom} or if the length would be above the maximum $\ell$. We check if the constraint conditions of \cref{eq:thmcom} are satisfied for each term constructed in each branch and add it to the list $\mathcal{T}_{\ell}$ if so.

As such, the termination, soundness, and completeness of this algorithm follow similarly to the linear constraint setting. 

\begin{thm}[Termination of Alg.~\ref{alg:find_com_terms}]
Alg.~\ref{alg:find_com_terms} eventually terminates and does so with the total recursive calls bounded by $\mathcal{O}\left( 4^{\ell} n^{\ell+1} \right)$. 
\end{thm}

\begin{proof}[Proof Sketch]
Since each recursive call has an incremented index $i$, eventually a branch will terminate when $i > n$ if not earlier. In fact, in the case that $\ell \geq n$ (no length limit) and there are no constraints, the backtracking algorithm would enumerate all $5^{n}$ possible terms in the set $ \{ \mathbbm{1}_{2}, \sigma^{0}, \sigma^{1}, \sigma^{-}, \sigma^{+} \}^{\otimes n} $. 

As in the previous section, recognize that if a term reaches length $\ell$, the branch is ended. For the first location, we have $\binom{n}{1} = n$ slots and may place $\sigma^{+}$ or $\sigma^{-}$ in the slot\footnote{In fact, we could also place only $\sigma^{+}$, since we know from Hermiticity another term $\sigma^{-}$ in the slot should exist.}. As such, the algorithm constructs at most 
\begin{align}
2^{\ell} \binom{n}{1} \sum_{k=1}^{\ell-1} 4^{k} \binom{n}{k} &\leq \sum_{k \leq \ell} 4^{\ell} n^{\ell} \nonumber \\ 
&= (\ell+1) 4^{\ell} n^{\ell} \nonumber \\ 
&= \mathcal{O}\left( 4^{\ell} n^{\ell} \right)
\end{align}
terms since we have $n-1$ remaining slots that we can place $\ell$ objects into and in each slot we may place one of four operators: $\{ \sigma^{0}, \sigma^{1}, \sigma^{+}, \sigma^{-}\}$. Each possible term requires no more than $n$ recursions to be constructed, so the total number of recursive calls is bound by $ \mathcal{O}\left( 4^{\ell} n^{\ell+1} \right) $.
\end{proof}

Next, we prove that Alg.~\ref{alg:find_com_terms} would consider any valid commutator term in its search.

\begin{thm}[Completeness of Alg.~\ref{alg:find_com_terms}]
If a generator term is in the $\ell$-local commutator set, $g \in \mathcal{T}_{\ell} $, then $g$ will be included in Alg.~\ref{alg:find_com_terms}. 
\end{thm}

\begin{proof}[Proof Sketch]
The proof follows from the same logic as in the previous section.  A recursive branch with current term $u$ is pruned from the search if and only if \algcheckcon($u$,$\mathcal{C}$) returned nonviable or the length $d$ has reached the maximum $\ell$. Both cases follow the same logic as in the previous section. The only difference lies in the former, where \algcheckcon($u$,$\mathcal{C}$) utilizes the condition \cref{eq:thmcom} rather than the simpler condition \cref{eq:thmlincom}. This difference is expressed in Alg.~\ref{alg:concond} rather than Alg.~\ref{alg:concondlin} for the call \algconcond$(u,C)$ and that the maximal index for a nonzero entry would be any variable in the higher-order polynomial rather than just a linear polynomial. 
\end{proof}

Finally, we prove that Alg.~\ref{alg:find_com_terms} will not include any terms that are not in the valid commutator set. 

\begin{thm}[Soundness of Alg.~\ref{alg:find_com_terms}]
If a generator term $g$ is included in Alg.~\ref{alg:find_com_terms}, then $g \in \mathcal{T}_{\ell}$. 
\end{thm}

\begin{proof}[Proof Sketch]
The proof again is similar to the linear constraint case. A commutator term $u$ is included in the list only if \algcheckcon($u$,$\mathcal{C}$) is valid and the term would be generated by the recursive call structure. The former is true due to \cref{thm:gencom}. The latter is true since the term $u$ has length less than or equal to $\ell$ and each previous branch included a term that must have been viable. 
\end{proof}

\begin{figure}[!t]
\begin{minipage}{1.0\linewidth}
\begin{algorithm}[H]
\caption{\algconcond($u$,$C$)}\label{alg:concondlin}
\begin{algorithmic}
\State \textbf{Inputs:} term $u$, constraint $C$
\State Let $ \vec{c} $ be such that $ C = \sum_{j=1}^{n} c_{j} \, x_{j} $
\State \textbf{return $\vec{c} \cdot \vec{u}$} \Comment{see Thm.~\ref{thm:linconcom}}
\end{algorithmic}
\end{algorithm}
\end{minipage}
\caption*{Alg.~\ref{alg:concondlin}: \textbf{Checking the Linear Constraint Condition.} In Alg.~\ref{alg:check_con} when called from Alg.~\ref{alg:find_lincom_terms}, we utilize this linear condition to determine whether or not a constraint condition is satisfied.}
\begin{minipage}{1.0\linewidth}
\begin{algorithm}[H]
\caption{\algconcond($u$,$C$)}\label{alg:concond}
\begin{algorithmic}
\State \textbf{Inputs:} term $u$, constraint $C$
\State \texttt{comterm} = \{:\} \Comment{empty dictionary}
\State Define $\vec{x}, \vec{y}, \vec{v}, \vec{w}$ from $u$. 
\State Let $ \texttt{constraint\_set} = \{ c_{J}, \vec{a}_{J}, \vec{b}_{J} \}_{J \in \mathcal{J}} $ be the collection such that $ C = \sum_{J \in \mathcal{J}} c_{J} \prod_{k=1}^{n} \left( 1 - x_{k} \right)^{a_{Jk}} \left( x_{k} \right)^{b_{Jk}} $. 
\For{$ (c_{J}, \vec{a}_{J}, \vec{b}_{J}) \in \texttt{constraint\_set} $}
    \State Compute coefficient $\texttt{co}_{u}$ and binary vectors $\vec{x}',\vec{y}',\vec{v}',\vec{w}'$ using $c_{J}, \vec{a}_{J}, \vec{b}_{J}$ and $\vec{x}, \vec{y}, \vec{v}, \vec{w}$ with Eq.~\ref{eq:thmcom}.
    \State $\texttt{comterm}[\vec{x}',\vec{y}',\vec{v}',\vec{w}'] += \texttt{co}_{u}$
\EndFor 
\For{\texttt{key} in \texttt{comterm.keys()}}
    \If{$\texttt{comterm}[\texttt{key}] \neq 0$}
        \State \textbf{return 1}
    \EndIf 
\EndFor 
\State \textbf{return 0} \Comment{see Thm.~\ref{thm:gencom}}
\end{algorithmic}
\end{algorithm}
\end{minipage}
\caption*{Alg.~\ref{alg:concond}: \textbf{Checking the Polynomial Constraint Condition.} In Alg.~\ref{alg:check_con} when called from Alg.~\ref{alg:find_com_terms}, we utilize this condition to determine whether or not a constraint condition is satisfied. \cref{thm:gencom} gives a checkable condition for any term to see if it commutes with the embedded constraint operator. In particular, the resulting commutator must be zero. By using a dictionary, we can check this condition symbolically with memory resources bounded by the number of constraint terms in the sum.}
\setlength{\intextsep}{\ointextsep}
\end{figure}

\subsection{Generators of the Jordan Algebra of Constraint Preserving Drivers and Splitting into Commutative Rings}\label{subsec:redgen_sequencing}

\begin{figure}[!t]
\centering
\setlength{\intextsep}{0.5em}
\begin{minipage}{1.0\linewidth}
\begin{algorithm}[H]
\caption{Selecting Commutating Generators based on Linear Constraints}\label{alg:gen_com_unis}
\begin{algorithmic}[1]
\Statex \textbf{Inputs: } commutator list $ K $ from Alg.~\ref{alg:find_com_terms} ordered by locality weight \Comment{Note the ordering.}
\State initialize $ \mathcal{G} $ as empty \Comment{$ \mathcal{G} $ holds each block.}
\While{$ K $ is non-empty}
    \For{$ g $ in $ K $}
        \State mark $ g $ for deletion in $ K $
        \If{Alg.~\ref{alg:check_gens} shows $ g $ is not generatable by $ \mathcal{G} $} \Comment{If $g$ is generated by $ \mathcal{G} $, it is not added.}
            \If{$ g $ commutes with all terms in $\mathcal{G}[i]$ for some $ i $} 
                \State place $ g $ in $ \mathcal{G}[i] $ and remove from $ K $ \Comment{$g$ is safe to add to block $\mathcal{G}[i]$}
            \Else 
                \State place $ [ g ] $ at the end of $ G $ and remove from $ K $ \Comment{$g$ cannot be added to any block $\mathcal{G}[i]$ so extend $\mathcal{G}$.}
            \EndIf
            \State \textbf{break.} \Comment{By breaking, we may favor placing low order terms.}
        \EndIf 
    \EndFor
    \State remove all terms marked for deletion in $ K $
\EndWhile 
\State \textbf{return} $ \mathcal{G} $.  
\end{algorithmic}
\end{algorithm}
\vspace{-1.0em}
\end{minipage}
\caption*{Alg.~\ref{alg:gen_com_unis}: \textbf{Selecting and grouping generators.}
This algorithm reduces a commutator list $K$ to a set of mutually commuting generator groups $G$. Note that we break after placing $g$ to favor utilization of low order terms over higher order terms.}
\setlength{\intextsep}{\ointextsep}
\end{figure}

Given a collection of commutative terms $ \mathcal{T}_{\ell}$, for example found by Alg.~\ref{alg:find_com_terms}, we wish to find a small set of generators $\mathcal{G}$ to express the entire set, $ \mathcal{T}_{\ell} \subset M_{\mathcal{G}} $. Moreover, the generator set should be partitioned into commutative sets of generators:
\begin{align}
    \mathcal{G} = \{ G_{i} , \; \text{such that } [ g_{ij}, g_{ik} ] = 0, \; g_{ij}, g_{ik} \in G_{i} \}. 
\end{align}

\quad Each $ G_{i} $ consists of mutually commutative terms that can be jointly applied as unitaries, while terms belonging to different generator sets in general may not commute. Each $ G_{i} $ therefore generates an associative subalgebra of mutually commuting terms. The different generator sets need not commute with each other, so the full structure spanned by addition and anticommutation across all $ G_{i} $ is the Jordan algebra $ M_{\mathcal{G}} $. As such, the algorithmic task is to find an appropriate $ \mathcal{G} $ such that $ | \mathcal{G} | $ is small to allow for maximal parallel application of terms.

\quad Alg.~\ref{alg:gen_com_unis} gives a sound approach whereby low locality weight terms are placed into a growing collection $ \mathcal{G} $. When a term $ t $ is considered for inclusion and no set $ G_{i} $ in the collection $ \mathcal{G} $ commutes with the term, we first check, as discussed in \cref{sec:check_gens}, whether $ t $ can be generated from the terms currently in $ \mathcal{G} $ (building on formalisms discussed in \cref{sec:alg_driver_com}). Alg.~\ref{alg:check_gens} in \cref{sec:check_gens} can be modified to utilize the reduced form \cref{eq:redcomterm}, yielding an approximate pruning variant that may discard more terms, but is no longer guaranteed to preserve soundness. In the form used here, however, Alg.~\ref{alg:check_gens} is applied in its exact form, so the reduction step remains sound. For example, given a term $ g_{3} = \sigma_{1}^{+} \sigma_{3}^{-} + \sigma_{1}^{-} \sigma_{3}^{+} $ and a collection $ g_{1} = \sigma_{1}^{+} \sigma_{2}^{-} + \sigma_{1}^{-} \sigma_{2}^{+}, g_{2} = \sigma_{2}^{+} \sigma_{3}^{-} + \sigma_{2}^{-} \sigma_{3}^{+} $, we can see that $ \{ g_{1}, g_{2} \} = g_{3} $ and so $ g_{3} $ would be expressed in the Jordan algebra.

\quad Note that even though $ g_{3} $ has no support on qubit $ 2 $, we had to consider anticommutation relations associated with the location. The result of Alg.~\ref{alg:gen_com_unis} is the collection $ \mathcal{G} $, which is used to define a driver Hamiltonian or mixer operator in \cref{subsec:gen_to_proj}.

\subsection{From Generators to Projectors to Unitaries}\label{subsec:gen_to_proj}

Given a set of commutative generators, $ G = \{ g_{1}(\vec{x}_{1}, \vec{y}_{1}, \vec{v}_{1}, \vec{w}_{1}), \ldots, g_{|G|}(\vec{x}_{|G|}, \vec{y}_{|G|}, \vec{v}_{|G|}, \vec{w}_{|G|}) \} $, for example found through Alg.~\ref{alg:find_com_terms} followed by Alg.~\ref{alg:gen_com_unis}, we map each term into intermediate rank-1 projection operators (density operators) that are then mapped to parameterized unitaries. The resulting mixers act as a collection of rotations, each around a specific density operator related to the symmetry found.

\quad Given a generator $ g(\vec{x}, \vec{y}, \vec{v}, \vec{w}) $, we define
\begin{align}
Q(\theta, \vec{v},\vec{w}) &= \frac{1}{2} \prod_{j\in \text{support}(\vec{v},\vec{w})} \left( \sigma_{j}^{0} \right)^{v_{j}} \left( \sigma_{j}^{1} \right)^{w_{j}} 
+ \frac{e^{i \theta}}{2} \prod_{j\in \text{support}(\vec{v},\vec{w})} \left( \sigma_{j}^{+} \right)^{v_{j}} \left( \sigma_{j}^{-} \right)^{w_{j}} \nonumber \\ 
&\phantom{= } + \frac{e^{-i \theta}}{2} \prod_{j\in \text{support}(\vec{v},\vec{w})} \left( \sigma_{j}^{-} \right)^{v_{j}} \left( \sigma_{j}^{+} \right)^{w_{j}} 
+ \frac{1}{2} \prod_{j\in \text{support}(\vec{v},\vec{w})} \left( \sigma_{j}^{1} \right)^{v_{j}} \left( \sigma_{j}^{0} \right)^{w_{j}}, 
\end{align}

up to a choice in phase $ \theta $ and that can be shown to be a density operator over the qubits in the support of $ \vec{v}, \vec{w} $. We define $\text{support}(\vec{v},\vec{w}) = \text{support}(\vec{v}) \cup \text{support}(\vec{w}) $, where $\text{support}(\vec{v})$ denotes the set of indices with nonzero entries in $\vec{v}$. Note that $\text{support}(\vec{v})$ and $\text{support}(\vec{w})$ are disjoint. We often consider the simplified form $ Q(\vec{v}, \vec{w}) = Q(\z, \vec{v}, \vec{w}) $, although expressions can be generalized to include further parameterizations with a $ \theta $ for each $ Q $.

\quad Then we define the projector:
\begin{align}
P(\vec{x}, \vec{y}, \vec{v}, \vec{w}) &=   T(\vec{x}, \vec{y}, \z, \z) Q(\vec{v}, \vec{w}),
\end{align}

which can be shown to also be a density operator over the support of $ \vec{x}, \vec{y}, \vec{v}, \vec{w} $.

\quad Then we define the single parameter unitary:  
\begin{align}\label{eq:diffusor_mixer}
U(\beta, \vec{x}, \vec{y}, \vec{v}, \vec{w}) = \left( \mathbbm{1} + (e^{-i \beta }-1) P(\vec{x}, \vec{y}, \vec{v}, \vec{w}) \right),
\end{align}

which acts like an identity operator over the entire Hilbert space, except for a difference in phase over the projection operator $ P $. We call mixing operators of this form as diffusor mixers. Because $ P $ is a projection operator, the unitary is equivalent expressible as $ e^{-i \beta P(\vec{x},\vec{y},\vec{v},\vec{w})} $.

\quad Then the overall resulting mixing operator associated with a collection $ \mathcal{G} $ of generator sets is:
\begin{align}\label{eq:compiled_uni}
U(\beta, \mathcal{G}) = \prod_{G_{i} \in \mathcal{G}} \prod_{g(\vec{x}, \vec{y}, \vec{v}, \vec{w}) \in G_{i}} U(\beta, \vec{x}, \vec{y}, \vec{v}, \vec{w}).
\end{align}

\quad Then given a collection of valid commutative generators, found for example as discussed in \cref{subsec:redgen_sequencing}, we can see that the resulting projectors and thereby unitaries are built on mutually commutative terms. Diffusor mixers such as these were discussed in Ref.~\cite{leipold2022tailored} for the circuit fault diagnosis problem.

\quad For a Hamiltonian driver we can use: 
\begin{align}\label{eq:compiled_ham}
    H(\mathcal{G}) = \sum_{G_{i} \in \mathcal{G}} \sum_{g_{ij} \in G_{i}} \alpha_{ij} g_{ij} + \alpha_{ji}^{\dg} g_{ij}^{\dg}.
\end{align}

\m

\quad A more general formulation would allow individualized parameters for each term, leading to a richer and more expressive learning space. Notice that such individual terms are not forced to be stoquastic.

\quad Equivalent action of the X-mixer is achieved through the rotation around $ \ket{+}_{j} \bra{+}_{j} $ (or $ \ket{-}_{j} \bra{-}_{j} $) on location $ j $ up to a global phase:
\begin{align}\label{eq:}
U_{X_{j}}(\beta) &= \mathbbm{1} + (e^{-i \beta} - 1) \ket{+}_{j} \bra{+}_{j} \nonumber \\ 
&= \mathbbm{1} + \frac{1}{2} (e^{-i \beta} - 1) (\sigma_{j}^{0} + \sigma_{j}^{+} + \sigma_{j}^{-} + \sigma_{j}^{1}).
\end{align}

Note that $ \ket{+}_{j} \bra{+}_{j} $ is a density operator over qubit $ j $ and a projection over $n$ qubits, with $U_{X_{j}}$ phasing this subspace. Then the X-mixer is given by phasing each such disjoint subspace:
\begin{align}
U_{X}(\beta) = \prod_{j=1}^{n} U_{X_{j}}(\beta).
\end{align}

The associated driver Hamiltonian would be the standard transverse field:
\begin{align}
H_{X}(\beta) = - \sum_{j=1}^{n} \beta \, \left( \sigma_{i}^{+} + \sigma_{i}^{-} \right) = -\sum_{j=1}^{n} \beta \, \sigma_{i}^{x}.
\end{align} 

We can also recognize that $ \sigma_{j}^{x} = \sigma_{j}^{+} + \sigma_{j}^{-} $ is associated with $ \vec{x} = \z, \vec{y} = \z, \vec{v} = \mathbf{e}_{j}, \vec{w} = \z $. It would result from Alg.~\ref{alg:find_com_terms} or Alg.~\ref{alg:find_lincom_terms} by considering $1$-local operators with no constraint satisfaction requirement.

\quad Another relevant example is the single global embedded constraint studied in the context of many important problems, including quantum chemistry, portfolio optimization, and graph partitioning~\cite{hen2016quantum,hadfield2019quantum}. The embedded constraint operator is $ \sum_{i} \sigma_{i}^{0} $ (or equivalent forms).

\quad A simple mixer that irreducibly commutes (see \cref{sec:complexity}) is the ring XY mixer associated with partial swaps or hopping terms:
\begin{align}
U_{j}(\beta) &= \mathbbm{1} + (e^{-i \beta} - 1) P_{j}^{XY} \nonumber\\ 
&= \mathbbm{1} + \frac{1}{2} (e^{-i \beta} - 1) \left( \ket{0}_{j} \ket{1}_{j+1} + \ket{1}_{j} \ket{0}_{j+1} \right) \left( \bra{0}_{j} \bra{1}_{j+1} + \bra{1}_{j} \bra{0}_{j+1} \right) \nonumber\\ 
&= \mathbbm{1} + \frac{1}{2} (e^{-i \beta} - 1) \left( \sigma_{j}^{0} \sigma_{j+1}^{1} + \sigma_{j}^{+} \sigma_{j+1}^{-} + \sigma_{j}^{-} \sigma_{j+1}^{+} + \sigma_{j}^{1} \sigma_{j+1}^{0} \right),
\end{align}

with $ n+1 $ defined to be equivalent to $ 1 $. We can associate $ P_{j}^{XY} $ with $ \vec{x} = \z, \vec{y} = \z, \vec{v} = \mathbf{e}_{j}, \vec{w} = \mathbf{e}_{j+1} $ and check that it satisfies \cref{thm:algconcom} as well as \cref{thm:linconcom}. A collection of commutative generators to accomplish irreducible commutation can be applying the ring XY mixer on odd locations followed by even locations:
\begin{align}
U_{\text{ring-XY}}(\beta) = \left( \prod_{j=1}^{n/2} U_{2\,j}(\beta) \right) \left( \prod_{j=1}^{n/2} U_{2\,j-1}(\beta) \right) 
\end{align}

This mixer corresponds to the XY ring driver Hamiltonian~\cite{hen2016quantum,leipold2021constructing,kordonowy2026lie}:
\begin{align}
H_{XY} &= -\sum_{j=1}^{n} \left( \sigma_{j}^{+} \sigma_{j+1}^{-} + \sigma_{j}^{-} \sigma_{j+1}^{+} \right) \nonumber \\ 
&= -\sum_{j=1}^{n} \left( \sigma_{j}^{x} \sigma_{j+1}^{x} + \sigma_{j}^{y} \sigma_{j+1}^{y} \right) . 
\end{align}

\begin{figure}
\includegraphics[width=0.8\textwidth]{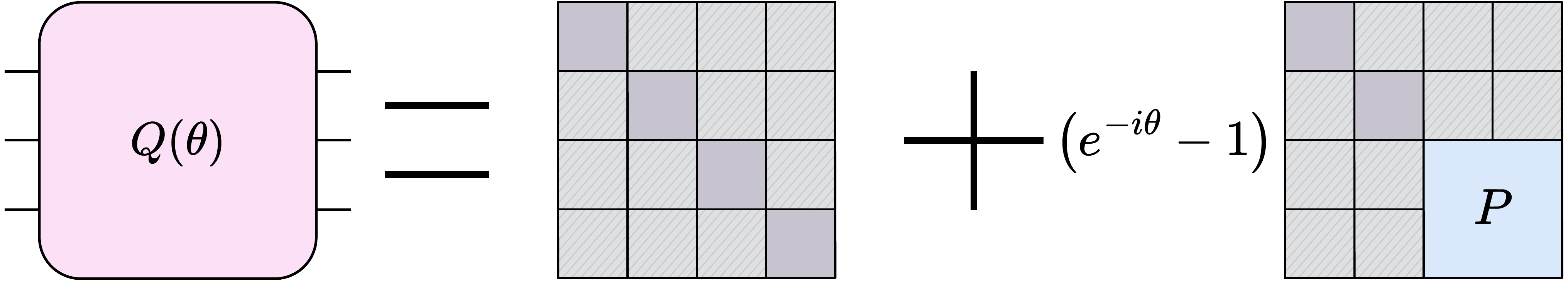}
\caption{\textbf{Diffusor Mixer Decomposition.} A unitary $Q(\theta)$ is a diffusor mixer if it has the decomposition from \cref{eq:diffusor_mixer}, a form that most mixers used in practice have. Generators found in \cref{subsec:redgen_sequencing} can easily be sequenced into such unitaries.}
\label{fig:diffusor_mixer}
\end{figure}

\section{Three Ansatz Constructions for QAOA Targeting Random 1-in-3 SAT}\label{sec:qaoa_1in3}

In the previous Sections~\ref{sec:algecond}-\ref{sec:theory_to_prac}, we have presented the mathematical foundations of enforcing constraints on quantum evolution by imposing the constraints on the quantum operators applied to the system, illustrated the complexity of finding such terms, specified how such terms form a Jordan algebra, and outlined how mixing operators can be constructed from such terms through algorithms detailed in \cref{sec:theory_to_prac} (see \cref{fig:flowchart}). In this Section, we apply QAOA to a well-studied NP-Complete feasibility problem: 1-in-3 SAT.

\quad We consider three mixers to help solve this problem: the X-mixer, a novel mixer built around imposing the maximum disjoint subset (MDS) of constraints, and a mixer built on imposing constraints on shared variables for each constraint in the MDS. Results from benchmarking each approach in \cref{subsec:bench1in3} demonstrate the practical benefits of constraint-enforcing mixers for solving feasibility problems. Similar constructions could be used for a wide range of optimization problems.

\subsection{Random 1-in-3 SAT Instances around the Satisfiability Threshold}\label{subsec:rand_1in3}

\begin{figure}
\centering
\begin{tabular}{c}
\includegraphics[height=0.17\textheight]{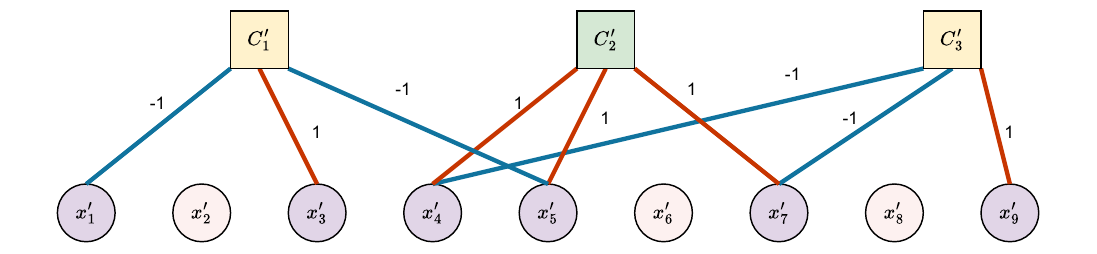} \\
\\ (A) \\
\\ 
\includegraphics[height=0.17\textheight]{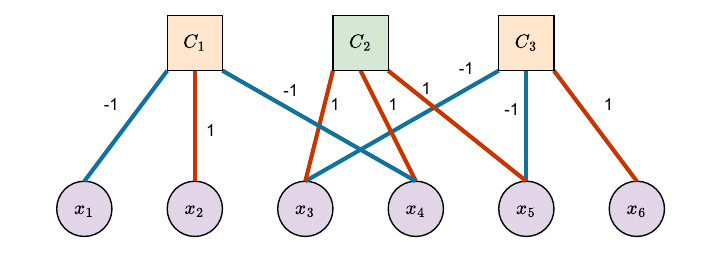} \\
\\ (B)
\end{tabular}
\caption{\textbf{Instance of 1-in-3 SAT.} (A) shows a graphical depiction of an instance of 1-in-3 SAT with 9 variables, $ \{ x_{1}^{'}, \ldots, x_{9}^{'} \} $, and 3 clauses. Variables $ x_{2}^{'}, x_{6}^{'}, x_{8}^{'} $ are not used in any clauses and clauses $ C_{1}' $ and $ C_{3}' $ form the maximum disjoint set of clauses (MDS) for this problem. (B) shows a graphical depiction of the same instance over the reduced variable space. This specific instance is discussed in \cref{subsec:rand_1in3} as an example of random 1-in-3 SAT instances, in \cref{subsec:qaoa_mdsmixer} as an instance for which explicit constructions of the initial wavefunction and the mixing operators associated with the MDS are given, and in \cref{sec:algecond} as an example of the automated approach to developing the symmetric cover mixer (SymCov).}
\label{fig:con_bipart}
\end{figure}

Given a collection of $ m $ clauses, each with $ 3 $ literals, we wish to find an assignment to all $ n $ variables such that exactly one of the literals in each clause is satisfied. This means exactly $ 1 $ literal is satisfied and $ 2 $ are unsatisfied for each clause in the collection. 1-in-3 SAT can be formulated as the following binary linear program (BLP) with no optimization function since it is a feasibility problem:
\begin{align}
C^{M} \cdot \vec{x} = \vec{b}, \text{ s.t. } x \in \{0,1\}^n, \; C^{M} \in \{-1,0,1\}^{m \times n}, \; b \in \{-2,-1,0,1\}^{m} \text { and } C^{M} \cdot (1,\ldots,1)^{T} = \abs{b}. 
\end{align}

Given a collection of constraints, $ \mathcal{C} = \{ C_{1}, \ldots, C_{m} \} $ with $ m = \lceil \frac{n}{3} \rceil $ and $ C_{i}(x) = \sum_{j=1}^{n} c_{ij} \, x_{j} $ set to constraint value $ b_{i} = 1 - \sum_{j=1}^{n} c_{ij}^{2} \, (1 - c_{ij})/2 $ and $ \sum_{j=1}^{n} \abs{c_{ij}} = 3 $. Then the linear constraint has the form $ \vec{c_{i}} \cdot \vec{x} = b_{i} $ for $ b_{i} = 1 - \sum_{j=1}^{n} c_{ij}^{2} \,  (1 - c_{ij})/2 $. Then let $ C^{M} $ be the row concatenation of $ \vec{c_{1}}, \ldots, \vec{c_{m}} $ and $ \vec{b} = (b_{1}, \ldots, b_{m}) $. This means we can interpret each $C_{i}$ with a valid 1-in-3 constraint and $C^{M}$ as enforcing each, as we will show.

\quad Each constraint $ C_{i} $ can be associated with a clause, $ E_{i} = \{ l_{i1}, l_{i2}, l_{i3} \} $, with literals $ l_{ij} = (c_{ij}, v_{j}) $ such that the polarity $ c_{ij} \in \{-1, 1\} $ associates $ 1 $ (-1) with a positive literal (negative literal) of the atom and $ v_{j} $ is the index of the associated atom (propositional variable) in $ X $. Since $C_{i}(x)$ has the constraint $\sum_{j=1}^{n} |c_{ij}| = 3$, we can interpret the $3$ non-zero entries as the polarities of the clause $E_{i}$.

\quad Satisfying a literal is associated with $(1-x_{j})$ being $1$ for $c_{ij} = -1$ or $x_{j}$ being $1$ for $c_{ij} = 1$. A linear function $l(x) $ for literal $l$ is: 
\begin{align}
(1-c_{ij}) (1-x_{j}) / 2 + (1+c_{ij}) (x_{j}) / 2 &= 1/2 (1-c_{ij} - x_{j} + c_{ij} \, x_{j} + x_{j} + c_{ij} \, x_{j}) \nonumber \\ 
&= 1/2 (1-c_{ij}) + c_{ij} \, x_{j} .      
\end{align}

To set the function to $ 0 $ if $ c_{ij} = 0 $ and unaltered otherwise:
\begin{align}
c_{ij}^2 ( 1 / 2 (1 - c_{ij}) + c_{ij} \, x_{j} ) &= c_{ij}^{2} (1 - c_{ij}) / 2 + c_{ij}^{3} \, x_{j} \nonumber \\ 
&= c_{ij}^{2} (1 - c_{ij}) / 2 + c_{ij} \, x_{j}. 
\end{align}

Then 1-in-3 SAT stipulates only one of the literals may be set to $1$: 
\begin{align} 
\sum_{j=1}^{n} l_{ij}(x) &= 1 \nonumber \\ 
\sum_{j=1}^{n} c_{ij}^{2} \, (1-c_{ij}) / 2 + c_{ij} \, x_{j} &= 1 \nonumber \\ 
\sum_{j=1}^{n} c_{ij} \, x_{j} &= 1 - \sum_{j=1}^{n} c_{ij}^{2} \, (1-c_{ij}) / 2 .
\end{align}

Then the condition $|c_{ij}| = 3$ ensures each clause has exactly $3$ literals, while the constraint value $ 1 - \sum_{j} c_{ij}^{2} (1-c_{ij}) / 2 $ enforces that only one literal is satisfied. 

\quad The transition from likely to be satisfiable to likely to be unsatisfiable for random 1-in-k SAT problems occurs with $ n/\binom{k}{2} $ clauses~\cite{achlioptas2001phase}. For random  1-in-3 SAT, this occurs with n/3 clauses. To sample from this random distribution, we construct $ m = \lceil \frac{n}{3} \rceil $ random clauses such that $ c_{ij} \in \{ -1, 1 \} $ uniformly at random over 3 random locations indexed by $ j $.

\quad Given that not every variable may participate in the $\lceil \frac{n}{3} \rceil$ clauses, we begin with clauses $ C_{i}^{'} $ over variables $ X^{'} = \{ x_{1}^{'}, \ldots, x_{n}^{'} \} $ and $ C_{i} $ is the constraint over the reduced variable space $ X = \{ x_{1}, \ldots, x_{|X|} \} $ by filtering out variables not used (utilizing the unapostrophed definitions for the reduced case keeps the notation cleaner in later sections).

\quad An instance of the 1-in-3 SAT problem is shown in \cref{fig:con_bipart}.A with 9 variables, $ \{ x_{1}^{'}, \ldots, x_{9}^{'} \} $, and 3 clauses: $ C_{1}^{'} = (-1,0,1,0,-1,0,0,0,0) $, $ C_{2}^{'} = (0,0,0,1,1,0,1,0,0) $, and $ C_{3}^{'} = (0,0,0,-1,0,0,-1,0,1) $. Variables $ x_{2}^{'}, x_{6}^{'}, x_{8}^{'} $ are not used in any clauses. For random instances of 1-in-3 SAT with $ \lceil n/3 \rceil $ clauses, around the satisfiability threshold, it is typical that some variables will not be used.

\quad The reduced instance of the problem is represented in \cref{fig:con_bipart}.B with $ C_{1} = (-1,1,0,-1,0,0), b_1 = -1 $, $ C_{2} = (0,0,1,1,1,0), b_2 = 1 $, $ C_{3} = (0,0,-1,0,-1,1), b_3 = -1 $ over the reduced variables $ X = \{ x_{1}, \ldots, x_{6} \} $. We use this example for explicit constructions of the cost operator Hamiltonian in \cref{subsec:qaoa_xmixer}, the maximum disjoint mixer in \cref{subsec:qaoa_mdsmixer}, and the symmetric cover mixer in \cref{subsec:qaoa_mdsmixer}. This instance has two solutions over the reduced variables: $ (1, 0, 0, 0, 1, 0), ( 1, 0, 1, 0, 0, 0) $.

\subsection{Quantum Approximate Optimization Algorithm for 1-in-3 SAT}\label{subsec:qaoa_xmixer}

Here we describe QAOA with the X-mixer at a high level for 1-in-3 SAT. Let $\sigma^{0} = (1/2)(\mathbbm{1} + \sigma^{z})$, $\sigma^{1} = (1/2)(\mathbbm{1} - \sigma^{z})$, and $\alpha, \beta \in [ 0, 2  \pi ]^{p}$. Then:
\begin{align}
    U_{p}(\alpha, \beta) = U_{\text{mixer}}(\beta_{p}) U_{\text{cost}}(\alpha_{p}) \,  \ldots \, U_{\text{mixer}}(\beta_1) \, U_{\text{cost}}(\alpha_1) .
\end{align}

\quad The cost of a clause, $ E_{i} $, is minimized by satisfying one literal and unsatisfying the other literals for 1-in-k SAT:
\begin{align}
    H_{\text{clause}}^{i} =  \sum_{ j } \left( \mathbbm{1} - \sum_{l_{jk} \in E_{i}} \sigma_{v_{k}}^{(1 + c_{jk})/2} \prod_{j \neq i} \sigma_{v_{j}}^{(1 - c_{jk})/2} \right),
\end{align}

where $ \sigma_{v_{j}}^{(1 + c_{ij})/2} $ for a positive literal $(1,v_j)$ would be $\sigma_{v_{j}}^{1} $ signifying a reward for satisfying that literal and likewise $\sigma_{v_{j}}^{(1 - c_{ij})/2} $ for a positive literal $(1,v_j)$ would be $\sigma_{v_{j}}^{0} $ to reward failing to satisfy that literal.

\begin{figure}
    \centering
    \includegraphics[width=0.4\linewidth]{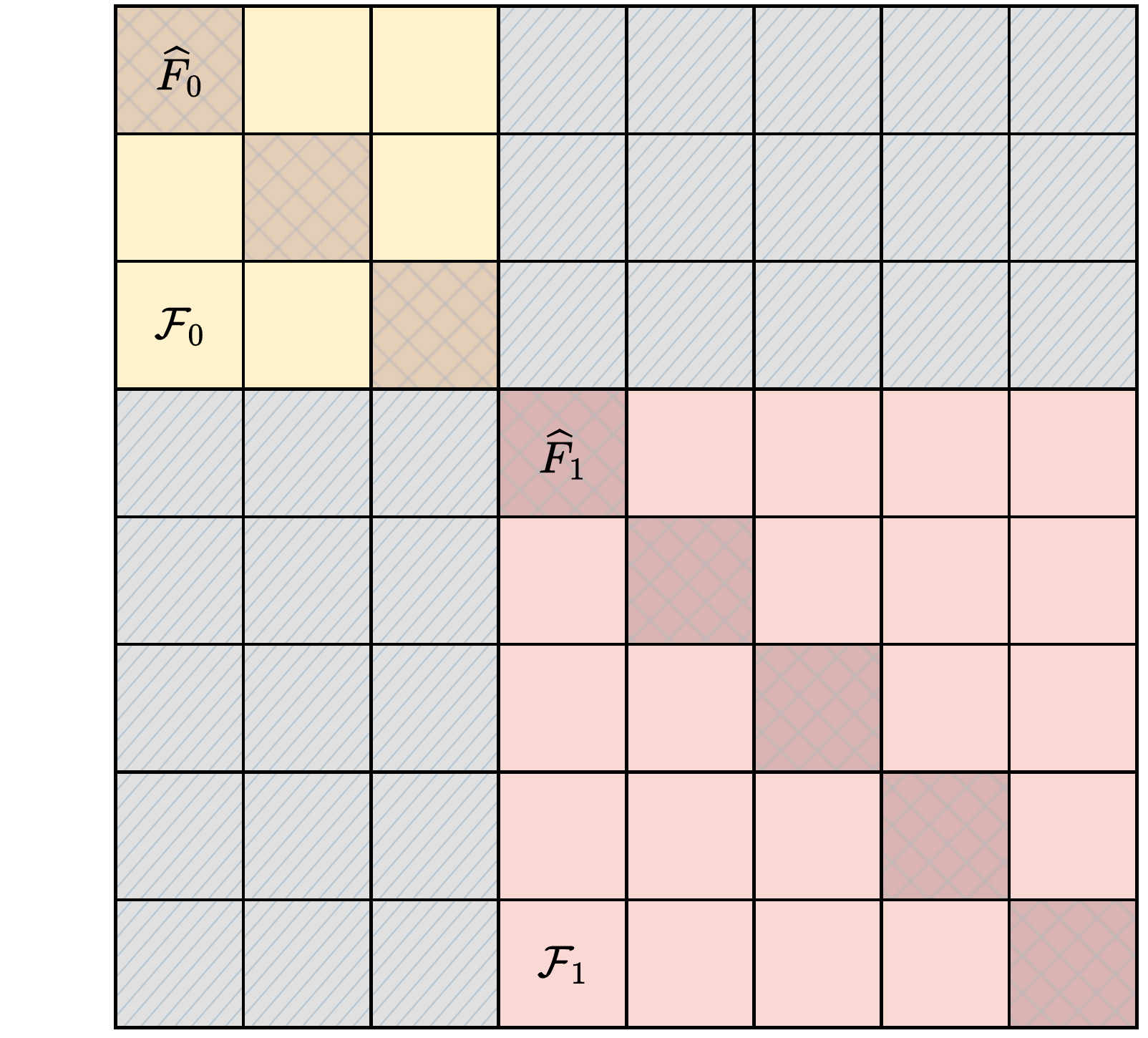}
    \caption{\textbf{Constraint Space of a Single 1-in-3 SAT Clause.} For any SAT clause with 3 variables, the 1-in-3 formulation is associated with the depicted commutation space. Any such clause is decomposed into $ 0 \, \hat{F}_{0} + 1 \, \hat{F}_{1}$ with precisely $3$ assignments in $F_{0}$ and $5$ assignments in $F_{1}$ over the $3$ variables present in the clause. }
    \label{fig:single_clause}
\end{figure}

For example, for $ E_{1} = \{ (-1, x_1), (1, x_2), (1, x_4) \} $ the associated reward for satisfying $(-1, x_1)$ and not satisfying $(1,x_2), (1, x_4)$ is $ \sigma_{1}^{0} \, \sigma_{2}^{0} \, \sigma_{3}^{0} $, for satisfy $ (1, x_2) $ and not satisfying $ (-1,x_1), (1,x_4) $ is $ \sigma_{2}^{1} \, \sigma_{1}^{1} \, \sigma_{4}^{0} $, for satisfying $(1,x_4)$ and not satisfying $(-1,x_1), (1,x_2)$ is $ \sigma_{4}^{1} \, \sigma_{1}^{1} \, \sigma_{2}^{0} $. This yields the overall cost $ \mathbbm{1} - \sigma_{1}^{0} \, \sigma_{2}^{0} \, \sigma_{3}^{0} - \sigma_{2}^{1} \, \sigma_{1}^{1} \, \sigma_{4}^{0} - \sigma_{4}^{1} \, \sigma_{1}^{1} \, \sigma_{2}^{0} $, which has two unique eigenvalues, $0$ for the assignments that satisfy exactly one literal in the clause and $1$ for the assignments that fail to satisfy exactly one literal. \cref{fig:single_clause} depicts the eigendecomposition of a single clause of the space $\mathbb{C}^{8 \times 8}$ associated with the 3 (atom) variables in that clause. 

\quad For example, consider the example introduced in \cref{subsec:qaoa_mdsmixer} and depicted in \cref{fig:con_bipart}. In this example, $ \text{ker}(H_{\text{clause}}^{1}) = \text{span}(\{ \ket{0,0,1,x_{4},x_{5},x_{6}}, \ket{1,0,0,x_{4},x_{5},x_{6}}, \ket{1,1,1,x_{4},x_{5},x_{6}} \}) $, where $ x_{4}, x_{5}, x_{6} \in \{ 0 , 1 \} $, representing the solution subspace for constraint $ C_{1} $.  

\quad Then the phase-separating operator for a parameter $ \alpha_{l} $ is:
\begin{align}
    U_{\text{cost}}(\alpha_{l}) = \prod_{\text{clause $E_{j}$ in clauses}} e^{i \, \alpha_{l} \, H_{\text{clause}}^{j} }
\end{align}

Let $ |+\rangle_{k} $ represent the plus state over qubit $ k $. For the mixing operator, we have the X rotations per qubit for $ \beta_{l} $ (written slightly differently than popular as a diffusor):
\begin{align}
    U_{\text{mixer}}(\beta_{l}) = \prod_{j=1}^{n} \left( \mathbbm{1} + (e^{-i \, \beta_{l}} - 1) |+\rangle_{j}\langle+|_{j} \right).
\end{align}

\quad Notice we place the negative sign for the exponent of the mixing operator and the positive sign for the exponent of the phase-separating operator. The initial wavefunction is: 
\begin{align}
\ket{\phi(0)} = \ket{+} \ldots \ket{+} = \sum_{x \in \{0,1\}^{n}} \frac{1}{\sqrt{2^{n}}} \ket{x}
\end{align}

\quad And so the final wavefunction is:
\begin{align}
\ket{\phi(p)} = U_{p}(\alpha, \beta) \ket{\phi(0)}
\end{align}

\subsection{Maximum Disjoint Clauses and the Quantum Alternating Operator Ansatz}\label{subsec:qaoa_mdsmixer}

This section details a tailored ansatz approach for solving the 1-in-3 SAT problem using QAOA based on the maximum disjoint set of constraints. When constraints are disjoint, they can be individually solved. Every 1-in-3 SAT clause has precisely 3 solutions out of 8 configurations, just like every 3 SAT clause has precisely 7 solutions out of 8 configurations. The configuration space, $ X $, has $ 2^{n} $ possible configurations, but restricting the space over each constraint in a disjoint set reduces the space by $ 3/8 $. Therefore, given $ k $ constraints in the disjoint set, the resulting feasibility constraint space size is $ (3/8)^{k} 2^{n} $. This section constructs a tailored ansatz utilizing the maximum disjoint set of constraints and \cref{subsec:bench1in3} supports that this approach provides empirical evidence for improved scaling relative to the X-mixer baseline.

\quad Here we describe the first tailored ansatz for the QAOA on 1-in-3 SAT. Given $ m $ linear constraints $ \mathcal{C} = \{ C_{1}, \ldots, C_{m} \}$ over $ n $ (reduced) variables $ X = \{ x_{1}, \ldots, x_{n} \} $, we find the maximum disjoint set of constraints (MDS) over the $ n $ variables. The MDS is the largest subset of the constraints such that no two constraints share variables (i.e. they are disjoint).

\quad In general, the maximum disjoint set problem is NP-Hard, but in practice over sparse random instances of constraint problems it is tractable in practice. Approximations to the maximum disjoint set problem also exist for cases where finding the maximum disjoint set appears too expensive to compute through standard solvers for integer linear programming. Let $ \mathcal{D} = \{ C_{D_1}, \ldots, C_{D_{|\mathcal{D}|}} \}$ be this set. Let $ \mathcal{R} = \mathcal{C} - \mathcal{D} $ be the remaining clauses. Let $ \mathcal{N} $ be the set of variables not in any constraint in $ \mathcal{D} $ (since this is the reduced form of the problem, each variable in $ \mathcal{N} $ appears at least once in the remaining clauses $ \mathcal{R} $).

\quad For each clause, $ C_{D_{i}} $, we find all solutions (there are always three) to define $ S_D = \left\{ S(C_{D_1}), \ldots, S(C_{D_{|D|}}) \right\} $ . Then we associate the uniform superposition density operator for each solution set in $ S_{D} $: 
\begin{align}
\hat{S}_{D} &= \left\{ \hat{S}(C_{D_1}), \ldots, \hat{S}(C_{D_{|D|}}) \right\}, \\ 
\text{where } \hat{S}(C_{D_j}) &= \frac{1}{3} \Big( \ket{s_{1}^{j}} + \ket{s_{2}^{j}} + \ket{s_{3}^{j}} \Big)\Big( \bra{s_{1}^{j}} + \bra{s_{2}^{j}} + \bra{s_{3}^{j}} \Big)
\end{align}

\quad Here each 1-in-3 SAT constraint, $ C_{D_{i}} $, in the disjoint set has precisely 3 solutions, $ s_{1}^{i}, s_{2}^{i}, s_{3}^{i} $, over three variables. Hence, each $ \hat{S}(C_{D_{i}}) $ is a 3-local diffusor. An example based on the 1-in-3 SAT instance visualized in \cref{fig:con_bipart} is described later in this section.

\quad Then the phase-separating operator, given a specific $ \alpha_{l} $, is restricted to the constraints in the set $ \mathcal{R} $ instead of $ \mathcal{C} $:
\begin{align}
    U_{\text{cost}}(\alpha_{l}) = \prod_{C_{j} \in R} e^{i \, \alpha_{l} \, H_{\text{clause}}^{j} }
\end{align}

\quad For the mixing operator, we have a diffusor associated with the solutions of each disjoint set in $ D $ and a parameter $ \beta_{l} $:
\begin{align}
    U_{\text{D}}(\beta_{l}, \mathcal{D}) = \prod_{\hat{S}(K) \in \hat{S}_{D}} \left( \mathbbm{1} + (e^{-i \, \beta_{l}} - 1) \hat{S}(K) \right),
\end{align}

and the ordinary single qubit diffusor ($X$ rotation) on the variables in $ \mathcal{N} $:
\begin{align}
    U_{\text{N}}(\beta_{l}, \mathcal{N}) = \prod_{j \in \mathcal{N}} \left( \mathbbm{1} + (e^{-i \, \beta_{l}} - 1) |+\rangle_{j}\langle+|_{j} \right),
\end{align}

Then the overall mixing operator is:
\begin{align}
    U_{\text{mds}}(\beta_{l}, \mathcal{C}) = U_{\text{D}}(\beta_{l}, \mathcal{D}) U_{\text{N}}(\beta_{l}, \mathcal{N}), 
\end{align}
following the construction of \cref{eq:compiled_uni}. 

\quad The initial state is then a uniform superposition over all disjoint clauses, which is easy to prepare.

\quad Consider the instance discussed in \cref{subsec:rand_1in3} and depicted in \cref{fig:con_bipart}.B, the MDS is $ \{ C_{1}, C_{3} \} $. The resulting relevant feasible subspace projector is given by 
\begin{align}
\hat{F}_{0,0} &= \left( \sum_{s_1 \in S(C_{1}))} s_1 \right) \otimes \left( \sum_{s_2 \in S(C_{2})} s_2 \right) \nonumber \\ 
&= \Big( \ketbra{001}{001} + \ketbra{100}{100} + \ketbra{111}{111} \Big) \otimes \Big( \ketbra{010}{010} + \ketbra{100}{100} + \ketbra{111}{111} \Big) \nonumber \\ 
&= \ketbra{001010}{\text{`'}} + \ketbra{001100}{\text{`'}} + \ketbra{001111}{\text{`'}} + \ketbra{100010}{\text{`'}} + \ketbra{100100}{\text{`'}}  \nonumber \\ 
&\phantom{= } + \ketbra{100111}{\text{`'}} + \ketbra{111010}{\text{`'}} + \ketbra{111100}{\text{`'}} + \ketbra{111111}{\text{`'}} ,
\end{align}
where $\ketbra{x}{\text{`'}} \equiv \ketbra{x}{x}$ to improve readability. The resulting density operators are: 
\begin{align}
\hat{S}(C_{1}) &= \frac{1}{3} \Big( \ket{001} + \ket{100} + \ket{111} \Big) \Big( \bra{001} + \bra{100} + \bra{111} \Big) \otimes \mathbbm{1}_{4:6}, \\ 
\hat{S}(C_{3}) &= \frac{1}{3} \mathbbm{1}_{1:3} \otimes \Big( \ket{010} + \ket{100} + \ket{111} \Big) \Big( \bra{010} + \bra{100} + \bra{111} \Big).
\end{align}

In this case, $ \mathcal{R} = \{ C_{2} \} $ and there are no uncovered variables. Then the initial wavefunction is:
\begin{align} 
    \ket{ \phi(0) } = \frac{1}{\sqrt{9}} \Big( \ket{001} + \ket{100} + \ket{111} \Big) \Big( \ket{010} + \ket{100} + \ket{111} \Big).  
\end{align}

The mixing operator is:
\begin{align}
U_{\text{mds}}(\beta, \mathcal{C}) = \left( \mathbbm{1} + (e^{-i \beta} - 1) \, \hat{S}(C_{1}) \right) \left( \mathbbm{1} + (e^{-i \beta} - 1) \, \hat{S}(C_{3}) \right).
\end{align}

The mixing operator written explicitly in the relevant search subspace as a 9$\times$9 matrix:

\m
\centerline{
\begin{minipage}{1.5\linewidth}
\begin{align}
\begin{pmatrix}
f(\beta)^{2} & f(\beta)g(\beta) & f(\beta)g(\beta) & f(\beta)g(\beta) & g(\beta)^{2} & g(\beta)^{2} & f(\beta)g(\beta) & g(\beta)^{2} & g(\beta)^{2} \\
f(\beta)g(\beta) & f(\beta)^{2} & f(\beta)g(\beta) & g(\beta)^{2} & f(\beta)g(\beta) & g(\beta)^{2} & g(\beta)^{2} & f(\beta)g(\beta) & g(\beta)^{2} \\
f(\beta)g(\beta) & f(\beta)g(\beta) & f(\beta)^{2} & g(\beta)^{2} & g(\beta)^{2} & f(\beta)g(\beta) & g(\beta)^{2} & g(\beta)^{2} & f(\beta)g(\beta) \\
f(\beta)g(\beta) & g(\beta)^{2} & g(\beta)^{2} & f(\beta)^{2} & f(\beta)g(\beta) & f(\beta)g(\beta) & f(\beta)g(\beta) & g(\beta)^{2} & g(\beta)^{2} \\
g(\beta)^{2} & f(\beta)g(\beta) & g(\beta)^{2} & f(\beta)g(\beta) & f(\beta)^{2} & f(\beta)g(\beta) & g(\beta)^{2} & f(\beta)g(\beta) & g(\beta)^{2} \\
g(\beta)^{2} & g(\beta)^{2} & f(\beta)g(\beta) & f(\beta)g(\beta) & f(\beta)g(\beta) & f(\beta)^{2} & g(\beta)^{2} & g(\beta)^{2} & f(\beta)g(\beta) \\
f(\beta)g(\beta) & g(\beta)^{2} & g(\beta)^{2} & f(\beta)g(\beta) & g(\beta)^{2} & g(\beta)^{2} & f(\beta)^{2} & f(\beta)g(\beta) & f(\beta)g(\beta) \\
g(\beta)^{2} & f(\beta)g(\beta) & g(\beta)^{2} & g(\beta)^{2} & f(\beta)g(\beta) & g(\beta)^{2} & f(\beta)g(\beta) & f(\beta)^{2} & f(\beta)g(\beta) \\
g(\beta)^{2} & g(\beta)^{2} & f(\beta)g(\beta) & g(\beta)^{2} & g(\beta)^{2} & f(\beta)g(\beta) & f(\beta)g(\beta) & f(\beta)g(\beta) & f(\beta)^{2} \\
\end{pmatrix}, \nonumber 
\end{align}
\end{minipage}
}

with $ f(\beta) = (e^{-i\beta} + 2)/3 $ and $ g(\beta) = (e^{-i\beta}-1)/3 $.

\quad The cost operator is:
\begin{align} 
H_{C_{2}} &= \mathbbm{1} - \sigma_{3}^{1} \sigma_{4}^{0} \sigma_{5}^{0} + \sigma_{3}^{0} \sigma_{4}^{1} \sigma_{5}^{0} + \sigma_{3}^{0} \sigma_{4}^{0} \sigma_{5}^{1} \nonumber \\ 
U_{\text{cost}}(\alpha, \mathcal{R}) &= e^{i \alpha H_{C_{2}}} \nonumber \\ 
&= \mathbbm{1} + (e^{i \alpha} - 1) \left( \sigma_{3}^{1} \sigma_{4}^{0} \sigma_{5}^{0} + \sigma_{3}^{0} \sigma_{4}^{1} \sigma_{5}^{0} + \sigma_{3}^{0} \sigma_{4}^{0} \sigma_{5}^{1} \right),
\end{align} 

Inside the feasible subspace, $H_{C_{2}} $ marks two states as solutions $\ket{100010}$ and $\ket{100100}$. The cost operator can be written in the relevant feasible subspace as a $9\times9$ matrix:
\begin{align}
\begin{pmatrix}
1   &   0   &   0   &   0   &   0   &   0   &   0   &   0   &   0   \\
0   &   1   &   0   &   0   &   0   &   0   &   0   &   0   &   0   \\
0   &   0   &   1   &   0   &   0   &   0   &   0   &   0   &   0   \\
0   &   0   &   0   &   1   &   0   &   0   &   0   &   0   &   0   \\
0   &   0   &   0   &   0   &   1   &   0   &   0   &   0   &   0   \\
0   &   0   &   0   &   0   &   0   &   1   &   0   &   0   &   0   \\
0   &   0   &   0   &   0   &   0   &   0   &   h(\alpha)   &   0   &   0   \\
0   &   0   &   0   &   0   &   0   &   0   &   0   &   h(\alpha)   &   0   \\
0   &   0   &   0   &   0   &   0   &   0   &   0   &   0   &   1   %
\end{pmatrix}, \nonumber 
\end{align}
with $ h(\alpha) = e^{i\alpha} $. \cref{fig:three_clause_space} depicts the commutation space $\mathcal{F}_{C_{1},C_{3}} = \mathcal{F}_{0,0} \oplus \mathcal{F}_{0,1} \oplus \mathcal{F}_{1,0} \oplus \mathcal{F}_{1,1} $. By selecting an initial state inside $\mathcal{F}_{1,1}$, the evolution of the wavefunction remains constraint inside this subspace. 

\begin{figure}
\includegraphics[width=0.6\textwidth]{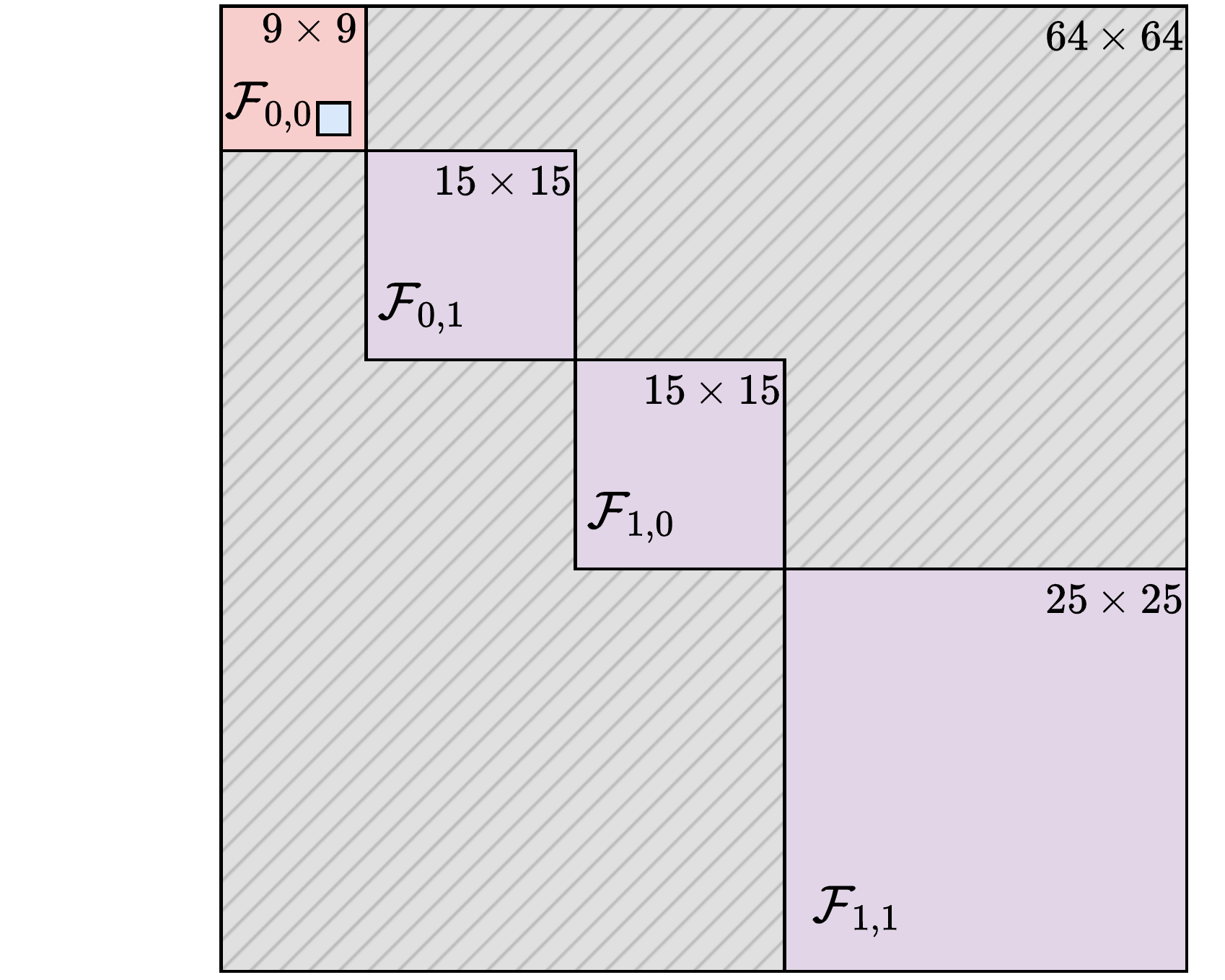}
\caption{\textbf{Commutation Space of Example MDS $\{ C_{1}, C_{3} \}$.} A depiction of the space $\mathcal{F}_{C_{1},C_{3}} = \mathcal{F}_{0,0} \oplus \mathcal{F}_{0,1} \oplus \mathcal{F}_{1,0} \oplus \mathcal{F}_{1,1} $ the example instance discussed in this section. In red is the relevant commutation space $\mathcal{F}_{0,0}$ and in blue is the solution subspace that is specified by $H_{C_{2}}$ from the remaining constraint $C_{2}$ that needs to be satisfied.} 
\label{fig:three_clause_space}
\end{figure}

\subsection{Partial and Clause Neighborhood Commutation in the Constraint Graph}

Here we describe our third ansatz for QAOA on 1-in-3 SAT, which includes symmetry terms associated with partial constraints as well as between each constraint and the constraints that it shares variables with. Each 1-in-3 SAT clause has 3 (propositional) variables and satisfying conditions associated with each defines the constraints of the problem (\cref{subsec:rand_1in3}). The difficulty of satisfying a collection of 1-in-3 SAT clauses, as for all boolean satisfiability problems, arises from the conflicts between a clause and the other clauses that share variables with it since they may have different preferences on what assignments satisfy.

\quad Consider the reduced version of the example 1-in-3 SAT instance introduced in \cref{subsec:rand_1in3}, shown in \cref{fig:con_bipart}, and discussed in \cref{subsec:qaoa_xmixer} and \cref{subsec:qaoa_mdsmixer}. $ C_{1} $ and $ C_{2} $ share variable $ x_{4} $. Any valid assignment to the 1-in-3 SAT problem must satisfy $ C_{1} $ and $ C_{2} $, which means the subsets of valid assignments to $ C_{1} $ and $ C_{2} $ must be compatible, specifically with regard to $ x_{4} $. Then it is natural to consider the terms that commute with $ C_{1} $ and $ C_{2} $, since these terms mixes valid assignments of both $ C_{1} $ and $ C_{2} $. Similarly, a valid assignment to this example must also be a valid assignment for $ C_{2} $ and $ C_{3} $, with symmetry terms that mixes the valid assignments of $ C_{2} $ and $ C_{3} $. Of course, symmetry terms associated with all constraints could be particularly powerful catalyst mixers, but they are unlikely to be local and likely computationally expensive to precompute. By considering a single constraint and the symmetries associated with this particular constraint and those that have at least one variable in common with it, we define terms that mix more constricted assignments even if they do not maintain the global invariance desired. Such mixers, which we call Symmetric Cover (SymCov) mixers, could be applied to many diverse optimization problems characterized by having many local constraints that must be enforced together.

\quad Terms in the MDS could have been found using the flowchart in \cref{fig:flowchart}, but for such simple cases it is not necessary. In this case, we find that higher symmetries, such as the SymCov mixers, are accessible through the approach described in \cref{sec:theory_to_prac}.

\quad For a constraint $ C $, we define:
\begin{align}
S(C) = \{ C_{i} \; | \; C_{i} \text{ and } C \text{ share variables} \},
\end{align} 
as the neighborhood of $ C $.

\quad To construct the Symmetric Cover mixer, the constraints that share variables for each constraint $ C $ in the MDS set $ \mathcal{D} $ are considered: 
\begin{align}
\mathcal{S}(\mathcal{D}) = \{ S(C) \, | \, C \in \mathcal{D} \}.
\end{align} 

\quad We follow the procedure delineated in \cref{fig:flowchart}, starting with finding all commutative terms $ \mathcal{T}(S(C)) $ associated with each $ S(C) $ in $ \mathcal{S}(\mathcal{D}) $. Consider a specific $ C \in \mathcal{D} $, since the other constraints in the MDS $ \mathcal{D} $ by definition have no shared variables, $ S(C) \subseteq \mathcal{R} \cup \{ C \} $ and $ S(C) $ is disjoint from $ \mathcal{D} - \{ C \} $.

\quad Suppose $ 3 $ constraints in $ \mathcal{R} $ are in the neighborhood $ S(C) $, then in principle there are more than $ 2^{55} $ ($ 12^{12} \times 2^{12} $) possibilities through a brute force method in the worst case, but the backtracking algorithm Alg.~\ref{alg:find_com_terms} from \cref{sec:theory_to_prac} in practice runs quickly, when implemented with minimal data structures in Julia.

\quad Then, by utilizing Alg.~\ref{alg:gen_com_unis} from \cref{sec:theory_to_prac} we find a collection of generators $ \mathcal{G}(S(C)) $ from $ \mathcal{T}(S(C)) $ and a corresponding unitary mixer $ U(\gamma, \mathcal{G}(S)) $ for each neighborhood $ S(C) $ in $ \mathcal{S}(\mathcal{D}) $. Then the symmetric cover mixer is:
\begin{align}
U_{\text{symcov}}(\gamma, \mathcal{S}) = \prod_{S \in \mathcal{S}} U(\gamma, \mathcal{G}(S)).
\end{align}

\quad Intuitively, the symmetric cover mixer $ U(\beta, \mathcal{S}) $ acts on the subspace of solutions to the neighborhood $ S(C) $ of each constraint $ C $ in $ \mathcal{D} $, which is strictly smaller than the space of solutions to $ \mathcal{D} $. While we considered the unweighted form here, the more general form would have individual optimized angles for each term.

\quad In principle, the subroutine Alg.~\ref{alg:check_gens} and therefore Alg.~\ref{alg:gen_com_unis} can be exponential time algorithms in the number of generators selected, but we find that in practice Alg.~\ref{alg:gen_com_unis} runs quickly for instances up to size $ 22 $.

\quad While the MDS leaves some variables uncovered, we can cover these \textit{partially} with some of the remaining clauses, although this only leads to a change if there is some constraint with both these terms. Consider a 1-in-3 constraint $ C = x_{1} + x_{2} + x_{3} = 1 $. Suppose $ x_{1} $ is covered in the MDS, then $ x_{2} + x_{3} \leq 1 $ while $ x_{1} \leq 1 $. Then $ (0,0), (1,0), (0,1) $ are possible configurations over $ \{ x_{2}, x_{3} \} $, but not $ (1,1) $. Suppose however, $ x_{2} $ is also covered in the MDS. Then the only partial constraint we have is $ x_{3} \leq 1 $, which is identical to no constraint, so we would use the X-mixer. While we will still need to apply penalty terms associated with these clauses, we can apply partial mixers rather than the X-mixer on any constraint with two variables outside the MDS. Let $ U_{R}(\beta, \mathcal{R}) $ refer to this approach and $ U_{mds}(\beta, \mathcal{C}) = U_{D}(\beta,\mathcal{D}) U_{R}(\beta, \mathcal{R}) $.

\quad Then one round of generalized QAOA is:
\begin{align}\label{eq:symcov_round}
\ket{\psi_{p}} = U_{\text{symcov}}(\gamma, \mathcal{S}(\mathcal{D})) \; U_{\text{mds}}(\beta, \mathcal{C}) \; U_{\text{cost}}(\alpha, \mathcal{R}) \ket{\psi_{p-1}}
\end{align}

\quad The generic algorithms presented in \cref{sec:theory_to_prac} can be used in many constructions, and their use in this section is only an example of their utility. While considered in the context of 1-in-3 SAT, similar constructions as the symmetric cover mixer discussed in this Section could be applied to most feasibility or optimization problems that have local constraints. Even if the constraints are not local, Alg.~\ref{alg:find_com_terms} can be utilized for finding local terms up to a desired weight. 

\subsection{Benchmarks and Scaling Extrapolation}\label{subsec:bench1in3}

\begin{figure}[!t]
\centering
\includegraphics[width=1.0\textwidth]{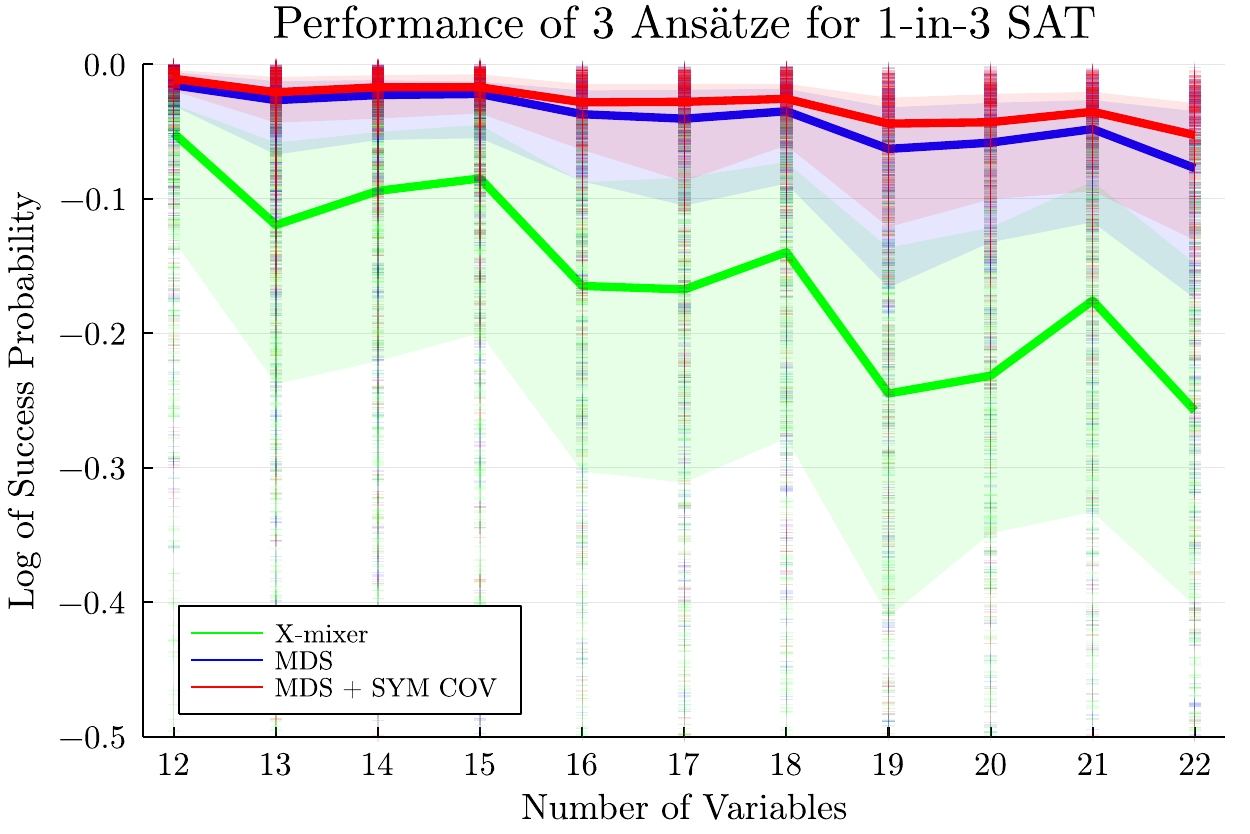}
\caption{\textbf{Ansatz Benchmarks for 1-in-3 SAT.} The success probability versus number of variables in the random instance of the 1-in-3 SAT problem with sizes between $ 12 $ and $ 22 $. The green QAOA with X-mixer (blue QAOA with MDS mixer, red QAOA with MDS+SYMCOV) dots show the success probability after $ 14 $ rounds of QAOA on each instance while the green (blue, red) line is the median surrounded by the upper and lower quartile ribbon on $ 500 $ random satisfiable instances for each size.}
\label{fig:succ_v_nbits}
\end{figure}

\begin{table}[!t]
    \centering
    \begin{tabular}{c|c|c}
                            &   cofactor    &   exponent base \\
        \hline 
        X mixers            & 0.8973 &  1.0209 \\
        MDS mixers          & 0.9193 &  1.0107  \\
        MDS+SymCov mixers   & 0.9256 &  1.0092  %
    \end{tabular}
    \caption{\textbf{Exponential Curve Fits for each Ansatz.} We fit exponential curves $ f(n) = A B^{n} $, with cofactor $ A $ and exponent base $ B $ using a package to minimize the mean squared error for problem size $ n $ and empirical inverse success probability $1/p(n)$; \cref{fig:succ_v_nbits} shows $ \log(p(n))$. The low exponent for each approach suggests QAOA is well suited for solving 1-in-k SAT problems and that tailored ans{\"a}tze can improve QAOA's relative performance, consistent with an approximately quadratic improvement relative to the X-mixer baseline.}
    \label{tab:expcurvefits}
\end{table}

We benchmark QAOA with X-mixer, MDS mixer, and MDS mixer plus symmetric cover mixer with $ p=14 $ from size $ 12 $ through $ 22 $ over $ 500 $ random instances drawn with $ n/3 $ clauses and $ 3 $ variables per clause, with each clause generated by selecting three variables in the range $ [1, n] $ at random and the polarity of the literal (negative or positive) selected with even probability. Resulting instances with no solution are discarded and the process is repeated until $ 500 $ satisfiable instances are generated in total. We used vector simulation of the noiseless quantum system to compute exact expectation values, implemented in \texttt{Julia} with a sparse matrix library. All phase separating operators were diagonal in the computational basis, while the mixers had the diffusor form of \cref{eq:diffusor_mixer}, enabling efficient sparse matrix computation. On actual hardware, noise is expected to degrade QAOA performance~\cite{alam2019analysis, pellow2023qaoa} and suppress gradients for sufficiently deep circuits~\cite{schumann2024emergence}. Estimating gradients requires many measurement shots~\cite{schuld2019evaluating,wierichs2022general}. 

\quad We choose $ p=14 $ as a representative intermediate-depth benchmarking regime rather than as a claim of optimality or universality. This choice is motivated by Ref.~\cite{boulebnane2024solving}, which reported that for random 8-SAT, QAOA with around $14$ layers matches the scaling of the strongest classical solver considered in that work, while larger depths can further improve performance. We expect increasing $p$ to improve the absolute success probabilities of all three ans{\"a}tze, although the precise gain is instance and optimizer dependent.

\quad With instances of size $ 12 $, we optimize the collections of angles $ \alpha, \beta $ of each approach in two rounds for \textit{p} set to $ 14 $ in a small angle regime, similar to the approach used in Ref.~\cite{boulebnane2024solving}. We sweep over different starting values for overall coefficient $ a $ and $ b $ (10 each) and two choices for the function associated each vector: constant $ \alpha = (a, \ldots, a, a), \beta = (b, \ldots, b) $ and linear $ \alpha = (a / p, \ldots, a(p-1)/p, a), \beta = (b, \ldots, 2 b / p, b/p) $. The linear-ramp schedule has been shown to be a good initialization strategy for QAOA across several domains~\cite{sack2021quantum,kremenetski2023quantum,montanez2025toward,he2024parameter}. Based on preliminary tuning, we sweep over $ \alpha \in [ 0.0, 0.2 ] $ and $ \beta \in  [ 0.0, 0.05 ] $ with 100 individual initializations considered in the grid search for each schedule family (constant or linear). For each sweep, we do $ 5,000 $ rounds of finite difference gradient descent, then take the best result and run $ 50,000 $ rounds of gradient descent to further optimize the angles. For the third approach, MDS+SymCov described by \cref{eq:symcov_round}, we then do $ 50,000 $ rounds of gradient descent over the collection of angles $ \alpha, \beta, \gamma $ starting with $ \gamma = \z $ (all zeros vector). Intuitively, $ U_{\text{symcov}}(\gamma, \mathcal{S}(\mathcal{D})) $ acts as a catalyst.

\quad In \cref{fig:succ_v_nbits}, results for each approach are shown with the median performance shown as a solid line and ribbons depicting the upper and lower quartiles. The results are consistent with better overall performance for the MDS mixer compared to the X-mixer and the strongest performance for the MDS mixer with the symmetric cover mixer.

\quad We are interested in the scaling for the time-to-solution (TTS) from repeatedly running the quantum circuit. Given a desired probability $P_{TS}$ to find an optimal solution and time $t_{run}$ to run the circuit for a single sample, the expected TTS for problems of size $n$ is:
\begin{align}
TTS(n) &= t_{run} \, \frac{\log\left( 1 - P_{TS} \right)}{\log\left(1-p(n) \right)} \nonumber \\ 
&\approx t_{run} \, \frac{\log\left( 1 - P_{TS} \right)}{p(n)} \; \; \text{ for } p(n) \ll 1 \nonumber \\ 
&= \mathcal{O}\left(\frac{1}{p(n)} \right) .
\end{align}

\quad We use \textit{CurveFit.jl} to optimize the mean-squared loss of fitting an exponential curve $ A \, B^{n} $ for values of $ A $ and $ B $ on the empirical $1/p(n)$; \cref{fig:succ_v_nbits} shows $ \log(p(n)) $. The fitted parameters are summarized in Tab.~\ref{tab:expcurvefits}. Relative to the X-mixer baseline, the tailored ans\"atze exhibit smaller fitted exponent bases, consistent with an approximately quadratic improvement in the fitted exponential scaling (e.g. $1.0092^{2.25} \approx 1.0209$). Our $ p=14 $ results show strong performance from all approaches and a soft decay for larger size $ n $, suggesting that QAOA may be well suited for solving 1-in-k SAT just as it has seen utility for k-SAT in Ref.~\cite{boulebnane2024solving}. However, results in this work suggest that if the Quantum Approximate Optimization Algorithm (QAOA with the X-mixer) is able to achieve a form of quantum advantage, a better suited ansatz could potentially enhance that advantage. Our results highlight ansatz-dependent effects in the idealized setting. Investigating how these effects persist under finite-shot estimation, on current quantum hardware, or under realistic hardware simulation is an important direction for future work.

\section{Conclusion and Outlook}\label{sec:conclusion}

\begin{table}[!t]
\begin{tabular}{l|l|l}
Question & Decision Problem & Complexity \\ 
\hline 
Find a Phase-separator for cost function & $-$ & Poly(n) \\
Find a Mixer (Driver) for constraints & \texttt{CP-QCOMMUTE} & NP-Complete \\ 
Find a local Mixer (Driver) & \texttt{CP-QCOMMUTE-k-LOCAL} & Poly(n) \\  
Can these mixers explore the entire feasible space? & \texttt{CP-QIRREDUCIBLECOMMUTE-GIVEN-k} & NP-Hard 
\end{tabular}
\caption{\textbf{Important Questions for QAOA Designers and Their Complexity.} When designing QAOA for constrained optimization, we must find the appropriate mixers and phase-separators. The latter is straightforward (e.g. see Eq.~\ref{thm:eig_cor_for_con}). \cref{sec:complexity} provides the complexity of finding mixers, showing that only local mixers can be found tractably, and that they may still be insufficient to explore the full feasible subspace.}
\end{table}
\begin{table}[!t]
\begin{tabular}{l|l|l}
Question & Algorithm & Comment \\ 
\hline 
Find local Mixers (Drivers) for Linear Constraints & Alg.~\ref{alg:find_lincom_terms} & Up to locality $ k $ \\  
Find local Mixers (Drivers) for Polynomial Constraints & Alg.~\ref{alg:find_com_terms} & Up to locality $ k $ \\ 
Reduce number of mixers, but maintain reach & Alg.~\ref{alg:gen_com_unis} & Ensure $\mathcal{T}_{k} \subseteq M_{\mathcal{G}}$ \\ 
Sequence mixers into commuting blocks & Alg.~\ref{alg:gen_com_unis} & Construct final mixer ansatz \\ 
\end{tabular}
\caption{\textbf{Important Questions for QAOA Designers and Our Prescriptions.} We summarize algorithmic procedures from \cref{sec:theory_to_prac} to generate a final mixer ansatz, (1) we find appropriate local mixers (for linear or polynomial constraints), (2) reduce the set to a collection of generators (of the associated ring $M_{\mathcal{G}}$), and (3) sequence the mixers into commuting blocks to define the ansatz.}
\end{table}

In this manuscript, we discuss the algebraic and algorithmic foundations of imposing constraints on quantum operators, focusing on the utilization of such operators to aid new ansatz constructions and clarify the complexity of the associated tasks. Ansatz construction can aid in improving success probability~\cite{hodson2019portfolio,cook2020quantum,leipold2022tailored,golden2023numerical,boulebnane2023peptide}, reducing expressibility~\cite{sim2019expressibility, tangpanitanon2020expressibility,nakaji2021expressibility}, improving learnability~\cite{holmes2022connecting}, and enhancing noise resistance~\cite{streif2021quantum}. Our results support the conclusion that symmetry terms can act as catalysts in QAOA and can also significantly reduce the search space, thereby improving success probability. Our algebraic techniques to find commutators to improve QAOA performance were recently utilized in Ref.~\cite{xiang2025choco}.

\quad Finding commutative terms for embedded constraints is in general NP-Complete, but becomes polynomial for locally bounded constraints in the degree of the bound (\cref{sec:complexity}). \cref{sec:theory_to_prac} provides practical algorithmic procedures that can solve the problem in reasonable run-times despite the high upper bound. Along the way (see Fig.~\ref{fig:flowchart}), we clarify important points of consideration for practical ansatz construction, such as grouping commutative terms (\cref{sec:alg_driver_com} and \cref{subsec:redgen_sequencing}) and defining unitary operators (\cref{subsec:gen_to_proj}). 

\m

\quad Such algorithms can be useful for constructing driver Hamiltonians or unitary mixers for a large array of optimization problems, while the reformulations, such as in App.~\ref{app:suffquad}, can be geared towards native quantum tasks.

\quad In Ref.~\cite{leipold2021constructing}, it was shown that any 2-local Hamiltonian is nonviable for certain classes of constraints. The algebraic condition found in this manuscript can also be utilized to find similar exclusionary statements for problems using more general constraints, thereby showing that certain constructions are not viable under resource limitations.

\quad This work presents many avenues for future study. While the problem \texttt{CP-QIRREDUCIBLE-COMMUTE} is NP-Hard, a more precise statement about its complexity within the polynomial hierarchy is an open problem, as well as the same problem for linear constraints. Coefficients $ \alpha $ on commutative terms $ T $ can be any complex number, thereby including nonstoquastic Hamiltonian drivers and their associated mixers. A satisfactory mathematical framework that allows us to select these coefficients or general learning techniques to find optimal patterns is of clear interest. Finding practical problems in quantum chemistry for which the expressions found in App.~\ref{app:suffquad} provide a useful secondary ansatz or catalyst driver term is another exciting avenue. These constructions could also be useful for error suppression.

\quad While Alg.~\ref{alg:find_com_terms} for finding commutative terms is sound and complete, alternative primitives may offer better runtime guarantees with good approximation and wider applicability. Alg.~\ref{alg:gen_com_unis} is a simple practical algorithm; alternative algorithmic primitives for grouping and sequencing generators into unitary mixers could improve practical circuit depth or perform better based on other criteria. Developing special purpose transpilers for mapping generators to low-body gates with limited connectivity can be a useful avenue for future consideration to apply these techniques to current NISQ hardware. Beyond this, future work can consider suitable constructions for still wider varieties of problems, constraints, and encodings, including integer problems and qudit-based mixers constructions suitable for qudits~\cite{sawaya2023encoding}.

\section*{Acknowledgements}

We are grateful for the support from the NASA Ames Research Center and from DARPA under IAA 8839, Annex 128. The research is based upon work (partially) supported by the Office of the Director of National Intelligence (ODNI), Intelligence Advanced Research Projects Activity (IARPA) and the Defense Advanced Research Projects Agency (DARPA), via the U.S. Army Research Office contract W911NF-17-C-0050. HL was also supported by the USRA Feynman Quantum Academy and funded by the NASA Academic Mission Services R\&D Student Program, Contract no. NNA16BD14C. SH is thankful for support from NASA Academic Mission Services, Contract No. NNA16BD14C. HL thanks Lutz Leipold for assistance in writing and debugging sparse matrix calculations in the programming language Julia, especially GPU implementations with CUDA. The authors thank anonymous reviewers for their insightful comments and suggestions. 
\bibliographystyle{IEEEtran}
\bibliography{references}

\pagebreak 

\appendix

\section{Sufficient Conditions for Quadratic Constraints}\label{app:suffquad}

Given any complete basis for a single qubit system, we can extend that basis to define a basis over $ n $ qubits. As discussed in previous sections $ \sigma^{+} = \ketbra{1}{0} $ and $ \sigma^{-} = \ketbra{0}{1} $. Thus, we can utilize the complete basis $ \{ \mathbbm{1}_2, \sigma^{z}, \sigma^{+}, \sigma^{-} \}^{\otimes n} $. Note that $ \left( \alpha_{j} \sigma_{i}^{\pm} \right)^{\dg} = \alpha_{j}^{\dg} \sigma_{i}^{\mp} $ for any $ \alpha_{j} \in \mathbb{C} $. Consequently, \textit{any} Hermitian matrix can be written in the form:
\begin{align}
	H &= \sum_{ ( \vec{y_{j}}, \vec{v_j}, \vec{w_j} ) \in \Delta( \mathscr{Y},  \; \mathscr{V}, \; \mathscr{W} ) } \alpha_{j} \prod_{i = 1}^{n} \left( \sigma_{i}^{z} \right)^{y_{ji}} \left( \sigma_{i}^{+} \right)^{v_{ji}} \left( \sigma_{i}^{-} \right)^{w_{ji}} \nonumber \\ 
	&\phantom{= } + \sum_{ ( \vec{y_{j}}, \vec{v_j}, \vec{w_j} ) \in \Delta( \mathscr{Y}, \; \mathscr{V}, \; \mathscr{W} ) } \alpha_{j}^{\dg} \prod_{i = 1}^{n} \left( \sigma_{i}^{z} \right)^{y_{ji}} \left( \sigma_{i}^{+} \right)^{w_{ji}} \left( \sigma_{i}^{-} \right)^{v_{ji}}, \label{eq:hexpand}
\end{align}

Recall the main algebraic result from Ref.~\cite{leipold2021constructing}:

\begin{thm}[Algebraic Condition for Commutativity] \label{thm:algconcom}
	A Hermitian matrix $ H $ commutes with an embedding of a linear constraint $ C = \sum_{j=1}^{n} h_{j} \, \sigma_{j}^{z} $ if and only if $ \vec{h} \cdot (\vec{v}_{j} - \vec{w}_{j}) = 0 $ for all $ \vec{v}_{j}, \vec{w}_{j} \in \Delta( \mathscr{V}, \mathscr{W} ) $. 
\end{thm}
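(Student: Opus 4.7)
The plan is to reduce the claim to a term-by-term commutator computation using the single-qubit identities of Appendix~\ref{app:single_qubit_relations}, then read off the condition via the linear independence of the basis $\{\mathbbm{1},\sigma^z,\sigma^+,\sigma^-\}^{\otimes n}$. First, I would observe that the choice of embedding is essentially free: since $\mathbbm{1}$ lies in the kernel of every commutator, commutation with the natural embedding $\hat{C} = \sum_j c_j\,\sigma^1_j$ is equivalent to commutation with $\hat{C}_z = \sum_j c_j\,\sigma^z_j$, and the latter is more convenient because $\sigma^z$ already belongs to the expansion basis in Eq.~\ref{eq:hexpand}.

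Second, I would establish the key single-term identity for $T(\vec{y},\vec{v},\vec{w}) = \prod_{i=1}^n (\sigma^z_i)^{y_i}(\sigma^+_i)^{v_i}(\sigma^-_i)^{w_i}$: using the relations $\sigma^z\sigma^\pm = \mp\sigma^\pm$ and $\sigma^\pm\sigma^z = \pm\sigma^\pm$ (from Appendix~\ref{app:single_qubit_relations}), positions at which $T$ carries $\mathbbm{1}$ or $\sigma^z$ contribute nothing to $[T,\sigma^z_k]$, while positions carrying $\sigma^+$ (resp.\ $\sigma^-$) contribute $+2$ (resp.\ $-2$) times $T$ itself. Summing against $c_k$ yields the clean scalar relation
\begin{align}
[T(\vec{y},\vec{v},\vec{w}),\,\hat{C}_z] \;=\; 2\,\bigl(\vec{c}\cdot(\vec{v}-\vec{w})\bigr)\,T(\vec{y},\vec{v},\vec{w}).
\end{align}
The Hermitian partner $T^\dagger = T(\vec{y},\vec{w},\vec{v})$ carries the opposite scalar factor $-2(\vec{c}\cdot(\vec{v}-\vec{w}))$. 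Linearity then gives $[H,\hat{C}_z] = 2\sum_j (\vec{c}\cdot(\vec{v}_j-\vec{w}_j))(\alpha_j T_j - \alpha_j^\dagger T_j^\dagger)$.

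Sufficiency is then immediate, and necessity follows from the fact that $\{\mathbbm{1},\sigma^z,\sigma^+,\sigma^-\}^{\otimes n}$ is a genuine linearly independent basis of $\mathbb{C}^{2^n\times 2^n}$ (not overcomplete like the basis used in Sec.~\ref{sec:algecond}); hence $[H,\hat{C}_z]=0$ forces each basis coefficient to vanish separately, and under the standing nontriviality assumptions ($\alpha_j \neq 0$ and only one representative per redundant pair retained) this yields $\vec{c}\cdot(\vec{v}_j-\vec{w}_j)=0$ for every $(\vec{v}_j,\vec{w}_j)\in\Delta(\mathscr{V},\mathscr{W})$. The main obstacle is really just bookkeeping: one must handle the sign flip coming from the Hermitian pairing (so as not to mistakenly treat the $T_j$ and $T_j^\dagger$ contributions as cancelling rather than imposing the same scalar condition) and the edge case $\vec{v}_j = \vec{w}_j$ in which $T_j = T_j^\dagger$ and the condition is trivially satisfied. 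As a sanity check, specializing the general polynomial condition in Thm.~\ref{thm:gencom} to $\mathcal{J}$ consisting of one-hot vectors recovers precisely this linear condition (equivalently, Thm.~\ref{thm:linconcom} under the change of basis between $\sigma^0$ and $\sigma^z$), confirming internal consistency.
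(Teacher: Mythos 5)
Your proof is correct and follows essentially the same route the paper takes: the identity $[T(\vec{y},\vec{v},\vec{w}),\hat{C}_z]=2\,(\vec{c}\cdot(\vec{v}-\vec{w}))\,T(\vec{y},\vec{v},\vec{w})$ you derive from the single-qubit relations is exactly the $|K|=1$ specialization of the paper's $L_{jK}$/$R_{jK}$ computation in Appendix~\ref{app:suffquad}, and your necessity argument via linear independence of $\{\mathbbm{1},\sigma^z,\sigma^+,\sigma^-\}^{\otimes n}$ (valid here because linear constraints do not alter the diagonal part $\vec{y}$ of the output terms) matches the paper's observation that this basis is ``necessarily zero'' for linear constraints. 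Your handling of the Hermitian-pair sign flip and the degenerate case $\vec{v}_j=\vec{w}_j=\mathbf{0}$ is also consistent with the paper's conventions.
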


\quad In this paper, we wish to find similar algebraic conditions for more general constraints. For higher order term constraints, we can write the constraint in the general form:
\begin{align}
    \hat{C} = \sum_{ J_{K} \in \mathcal{J} } J_{K} \prod_{k \in K} \sigma_{k}^{z},
\end{align}

where $ \mathcal{J} $ is a collection of sets of polynomial size in the number of qubits $ n $, each with a coefficient $ J_{K} $ and a set of qubit indices $ K \subseteq [1,\ldots, n] $. Notice that $ \sigma^{+} \sigma^{z} = -\sigma^{z} \sigma^{+} = \sigma^{+} $ and $ \sigma^{-} \sigma^{z} = -\sigma^{z} \sigma^{-} = - \sigma^{-} $. 

\m

\quad Using the form for $ H $ from \cref{eq:hexpand}, define $ T(\vec{y},\vec{v},\vec{w}) = \prod_{i=1}^{n} \left( \sigma_{i}^{z} \right)^{y_{i}} \left( \sigma_{i}^{+} \right)^{v_{i}} \left( \sigma_{i}^{-} \right)^{w_{i}} $ such that $ T^{\dg}(\vec{y},\vec{v},\vec{w}) = T(\vec{y},\vec{w},\vec{v}) $. 

\m

\quad Next, consider the commutation of a general $ H $ with the constraint embedding operator for a general constraint $ C $:
\begin{align}
    \left[ H, \hat{C} \right] &= \sum_{ ( \vec{y_{j}}, \vec{v_j}, \vec{w_j} ) \in \Delta( \mathscr{Y}, \; \mathscr{V}, \; \mathscr{W} ) } \left[ \alpha_{j} \prod_{i = 1}^{n} \left( \sigma_{i}^{z} \right)^{y_{ji}} \left( \sigma_{i}^{+} \right)^{v_{ji}} \left( \sigma_{i}^{-} \right)^{w_{ji}}, \sum_{J_{K} \in \mathcal{J}} J_{K} \prod_{k \in K} \sigma_{k}^{z} \right] \nonumber \\ 
    &\phantom{= } + \sum_{ ( \vec{y_{j}}, \vec{v_j}, \vec{w_j} ) \in \Delta( \mathscr{Y}, \; \mathscr{V}, \; \mathscr{W} ) }  \left[ \alpha_{j}^{\dg} \prod_{i = 1}^{n} \left( \sigma_{i}^{z} \right)^{y_{ji}} \left( \sigma_{i}^{+} \right)^{w_{ji}} \left( \sigma_{i}^{-} \right)^{v_{ji}}, \sum_{J_{K} \in \mathcal{J}} J_{K} \, \prod_{k\in K} \sigma_{k}^{z} \right] \nonumber \\
    &= \sum_{ ( \vec{y_{j}}, \vec{v_j}, \vec{w_j} ) \in \Delta( \mathscr{Y}, \; \mathscr{V}, \; \mathscr{W} ) } \alpha_{j} \, \sum_{J_{K} \in \mathcal{J}} J_{K} \, L_{jK} \nonumber \\ 
    &\phantom{= } + \sum_{ ( \vec{y_{j}}, \vec{v_j}, \vec{w_j} ) \in \Delta( \mathscr{Y}, \; \mathscr{V}, \; \mathscr{W} ) } \alpha_{j}^{\dg} \, \sum_{J_{K} \in \mathcal{J}} J_{K} \, R_{jK}
\end{align}

where:
\begin{adjustwidth}{-3em}{-3em}
\begin{align}
    L_{jK} &=  \left( \prod_{k \notin K} \left( \sigma_{k}^{z} \right)^{y_{jk}} \left( \sigma_{k}^{+} \right)^{v_{jk}} \left( \sigma_{k}^{-} \right)^{w_{jk}} \right) \left[ \prod_{k \in K} \left( \sigma_{k}^{z} \right)^{y_{jk}} \left( \sigma_{k}^{+} \right)^{v_{jk}} \left( \sigma_{k}^{-} \right)^{w_{jk}} , \prod_{k\in K} \sigma_{k}^{z} \right]  \nonumber \\
    &= \left( \prod_{k \notin K} \left( \sigma_{k}^{z} \right)^{y_{jk}} \left( \sigma_{k}^{+} \right)^{v_{jk}} \left( \sigma_{k}^{-} \right)^{w_{jk}} \right) \, \left( \prod_{k \in K} \left( \sigma_{k}^{z} \right)^{y_{jk}} \left( \sigma_{k}^{+} \right)^{v_{jk}} \left( \sigma_{k}^{-} \right)^{w_{jk}} \sigma_{k}^{z} - \prod_{k \in K} \sigma_{k}^{z} \left( \sigma_{k}^{z} \right)^{y_{jk}} \left( \sigma_{k}^{+} \right)^{v_{jk}} \left( \sigma_{k}^{-} \right)^{w_{jk}} \right) \nonumber \\
    &= \left( \prod_{k \notin K} \left( \sigma_{k}^{z} \right)^{y_{jk}} \left( \sigma_{k}^{+} \right)^{v_{jk}} \left( \sigma_{k}^{-} \right)^{w_{jk}} \right) \, \prod_{k\in K} (-1)^{w_{jk}} \left( \sigma_{k}^{z} \right)^{ y_{jk} + \mu_{jk} } \left( \sigma_{k}^{+} \right)^{v_{jk}} \left( \sigma_{k}^{-} \right)^{w_{jk}} \nonumber \\  
    &\phantom{= } -  \left( \prod_{k \notin K} \left( \sigma_{k}^{z} \right)^{y_{jk}} \left( \sigma_{k}^{+} \right)^{v_{jk}} \left( \sigma_{k}^{-} \right)^{w_{jk}} \right) \prod_{k\in K} (-1)^{v_{jk}} \left( \sigma_{k}^{z} \right)^{ y_{jk} + \mu_{jk} } \left( \sigma_{k}^{+} \right)^{v_{jk}} \left( \sigma_{k}^{-} \right)^{w_{jk}} \nonumber \\
    &= \left( \prod_{k \in K} (-1)^{w_{jk}} - \prod_{k \in K} (-1)^{v_{jk}} \right) \prod_{k \in K} \left( \sigma_{k}^{z} \right)^{ \mu_{jk} } \, T(\vec{y_{j}},\vec{v_{j}},\vec{w_{j}}) \nonumber \\ 
    &= \left( \prod_{k \in K} (-1)^{w_{jk}} - \prod_{k \in K} (-1)^{v_{jk}} \right) T(\vec{y_{j}}+\vec{\mu_{j}},\vec{v_{j}},\vec{w_{j}})
\end{align}
\end{adjustwidth}

with $ \mu_{jk} = (1 - v_{jk})(1 - w_{jk}) $ (identity or spin-z were placed on location $ k $ by the $j$-th basis term), and so $ \vec{\mu_{j}} = ( \mu_{j1}, \ldots, \mu_{jn} )$. Likewise, it can be shown:
\begin{align}
    R_{jK} = \left( \prod_{k \in K} (-1)^{v_{jk}} - \prod_{k \in K} (-1)^{w_{jk}} \right)  T(\vec{y_{j}} + \vec{\mu_{j}}, \vec{w_{j}}, \vec{v_{j}}) 
\end{align}

and so:
\begin{align}
    \left[ H, \hat{C} \right] &=\phantom{+} \sum_{ ( \vec{y_{j}}, \vec{v_j}, \vec{w_j} ) \in \Delta( \mathscr{Y}, \; \mathscr{V}, \; \mathscr{W} ) } \alpha_{j} \, \sum_{J_{K} \in \mathcal{J}} J_{K} \, \left( \prod_{k \in K} (-1)^{w_{jk}} - \prod_{k \in K} (-1)^{v_{jk}} \right) T(\vec{y_{j}}+\vec{\mu_{j}},\vec{v_{j}},\vec{w_{j}}) \nonumber \\ 
    &\phantom{= } + \sum_{ ( \vec{y_{j}}, \vec{v_j}, \vec{w_j} ) \in \Delta( \mathscr{Y}, \; \mathscr{V}, \; \mathscr{W} ) } \alpha_{j}^{\dg} \, \sum_{J_{K} \in \mathcal{J}} J_{K} \left( \prod_{k \in K} (-1)^{v_{jk}} - \prod_{k\in K}(-1)^{w_{jk}} \right) T(\vec{y_{j}}+ \vec{\mu_{j}}, \vec{w_{j}}, \vec{v_{j}}) \nonumber \\ 
    &=\phantom{+} \sum_{ ( \vec{y_{j}}, \vec{v_j}, \vec{w_j} ) \in \Delta( \mathscr{Y}, \; \mathscr{V}, \; \mathscr{W} ) } \alpha_{j} \, \sum_{J_{K} \in \mathcal{J}} J_{K} \, \left( (-1)^{\vec{w_{j}} \cdot \mathbf{1}_{K}} -  (-1)^{\vec{v_{j}} \cdot \mathbf{1}_{K}} \right) T(\vec{y_{j}}+\vec{\mu_{j}},\vec{v_{j}},\vec{w_{j}}) \nonumber \\ 
    &\phantom{= } + \sum_{ ( \vec{y_{j}}, \vec{v_j}, \vec{w_j} ) \in \Delta( \mathscr{Y}, \; \mathscr{V}, \; \mathscr{W} ) } \alpha_{j}^{\dg} \, \sum_{J_{K} \in \mathcal{J}} J_{K} \left( (-1)^{\vec{v_{j}} \cdot \mathbf{1}_{K}} - (-1)^{\vec{w_{j}} \cdot \mathbf{1}_{K}} \right) T(\vec{y_{j}}+ \vec{\mu_{j}}, \vec{w_{j}}, \vec{v_{j}}), \label{eq:comeq}
\end{align}
where $ \mathbf{1}_{K} $ is the indicator function over the indices $K$ for each term $J_{K} \prod_{k \in K} \sigma_{k}^{z} $ in the constraint:
\begin{align}
\left( \mathbf{1}_{K} \right)_{j} = \begin{cases} 
1 \text{ if } j \in K \\ 
0 \text{ if } j \notin K 
\end{cases}
\end{align}

Define $ v(\mathcal{J}, \mathcal{Y},\mathcal{V},\mathcal{W}) \in \mathbb{C}^{2^{3n}} $, which appears like \cref{eq:comeq} except $ T(\vec{y},\vec{v},\vec{w}) $ is replaced by $ \vec{y} \otimes \vec{v} \otimes \vec{w} $. Clearly \cref{eq:comeq} is zero if and only if $ v(\mathcal{J}, \mathcal{Y}, \mathcal{V}, \mathcal{W}) = 0  $. We can define $ u_{j} = \vec{v_{j}} - \vec{w_{j}} $ since $ \vec{v_{j}}, \vec{w_{j}} $ never match on an index.

\quad Given that $ H $ has $ \mathcal{O}\left( \text{poly}(n) \right) $ nonzero terms over the basis, there are at most $ \mathcal{O} \left( |\mathcal{J}| \, \text{poly}(n) \right) $ such terms to check, all of which must be zero.

\quad A class of constraints of particular interest are quadratic constraints, which can be used to describe a large class of optimization problems. Certain NP-Hard problems, for example, are more naturally described with quadratic constraints. In this section we give a sufficient condition for quadratic constraints that generalizes naturally from Theorem~\ref{thm:algconcom}.

\quad Let us check that \cref{eq:comeq} matches our result from Theorem~\ref{thm:algconcom} when considering linear constraints, i.e. $ \hat{C} = \sum_{k=1}^{n} h_{k} \sigma_{k}^{z} $:
\begin{align}
    \left[ H, \hat{C} \right] &=\phantom{+} \sum_{ ( \vec{y_{j}}, \vec{v_j}, \vec{w_j} ) \in \Delta( \mathscr{Y}, \; \mathscr{V}, \; \mathscr{W} ) } \alpha_{j} \, \sum_{J_{K} \in \mathcal{J}} J_{K} \, \left( \prod_{k \in K} (-1)^{w_{jk}} - \prod_{k \in K} (-1)^{v_{jk}} \right) T(\vec{y_{j}}+\vec{\mu_{j}},\vec{v_{j}},\vec{w_{j}}) \nonumber \\ 
    &\phantom{= } + \sum_{ ( \vec{y_{j}}, \vec{v_j}, \vec{w_j} ) \in \Delta( \mathscr{Y}, \; \mathscr{V}, \; \mathscr{W} ) } \alpha_{j}^{\dg} \, \sum_{J_{K} \in \mathcal{J}} J_{K} \, \left( \prod_{k \in K} (-1)^{v_{jk}} - \prod_{k \in K} (-1)^{w_{jk}} \right) T(\vec{y_{j}}+ \vec{\mu_{j}}, \vec{w_{j}}, \vec{v_{j}}) \nonumber \\ 
    &=\phantom{+} \sum_{ ( \vec{y_{j}}, \vec{v_j}, \vec{w_j} ) \in \Delta( \mathscr{Y}, \; \mathscr{V}, \; \mathscr{W} ) } \alpha_{j} \, \sum_{k=1}^{n} h_{k} \, \left( (-1)^{w_{jk}} - (-1)^{v_{jk}} \right) T(\vec{y_{j}}+\vec{\mu_{j}},\vec{v_{j}},\vec{w_{j}}) \nonumber \\ 
    &\phantom{= } + \sum_{ ( \vec{y_{j}}, \vec{v_j}, \vec{w_j} ) \in \Delta( \mathscr{Y}, \; \mathscr{V}, \; \mathscr{W} ) } \alpha_{j}^{\dg} \, \sum_{k=1}^{n} h_{k} \, \left( (-1)^{v_{jk}} - (-1)^{w_{jk}} \right) T(\vec{y_{j}}+ \vec{\mu_{j}}, \vec{w_{j}}, \vec{v_{j}}) \nonumber \\ 
    &=\phantom{+} 2 \, \sum_{ ( \vec{y_{j}}, \vec{v_j}, \vec{w_j} ) \in \Delta( \mathscr{Y}, \; \mathscr{V}, \; \mathscr{W} ) } \alpha_{j} \sum_{k=1}^{n} h_{k} \left( v_{jk} - w_{jk} \right) T(\vec{y_{j}}, \vec{w_{j}}, \vec{v_{j}}) \nonumber \\
    &\phantom{= } + 2 \, \sum_{ ( \vec{y_{j}}, \vec{v_j}, \vec{w_j} ) \in \Delta( \mathscr{Y}, \; \mathscr{V}, \; \mathscr{W} ) } \alpha_{j}^{\dg} \sum_{k=1}^{n} h_{k} \left( w_{jk} - v_{jk} \right) T(\vec{y_{j}}, \vec{w_{j}}, \vec{v_{j}}) \nonumber \\
    &=\phantom{+} 2 \, \sum_{ ( \vec{y_{j}}, \vec{v_j}, \vec{w_j} ) \in \Delta( \mathscr{Y}, \; \mathscr{V}, \; \mathscr{W} ) } \alpha_{j} \, \vec{h} \cdot \left( \vec{v_{j}} - \vec{w_{j}} \right) T(\vec{y_{j}}, \vec{w_{j}}, \vec{v_{j}}) \nonumber \\
    &\phantom{= } - 2 \, \sum_{ ( \vec{y_{j}}, \vec{v_j}, \vec{w_j} ) \in \Delta( \mathscr{Y}, \; \mathscr{V}, \; \mathscr{W} ) } \alpha_{j}^{\dg} \, \vec{h} \cdot \left( \vec{v_{j}} - \vec{w_{j}} \right) T(\vec{y_{j}}, \vec{w_{j}}, \vec{v_{j}})
\end{align}

\quad Now, let us consider the case with quadratic and linear terms in the constraint. For such a constraint, the constraint embedding operator has the form $ \hat{C} = \sum_{i} h_{i} \sigma_{i}^{z} + \sum_{ij} J_{ij} \sigma_{i}^{z} \sigma_{j}^{z} $ and so:
\begin{align}
    \left[ H, \hat{C} \right] &=\phantom{+} \sum_{ ( \vec{y_{j}}, \vec{v_j}, \vec{w_j} ) \in \Delta( \mathscr{Y}, \; \mathscr{V}, \; \mathscr{W} ) } \alpha_{j} \, \sum_{J_{K} \in \mathcal{J}} J_{K} \, \left( \prod_{k \in K} (-1)^{w_{jk}} - \prod_{k \in K} (-1)^{v_{jk}} \right) T(\vec{y_{j}}+\vec{\mu_{j}},\vec{v_{j}},\vec{w_{j}}) \nonumber \\ 
    &\phantom{= } + \sum_{ ( \vec{y_{j}}, \vec{v_j}, \vec{w_j} ) \in \Delta( \mathscr{Y}, \; \mathscr{V}, \; \mathscr{W} ) } \alpha_{j}^{\dg} \, \sum_{J_{K} \in \mathcal{J}} J_{K} \, \left( \prod_{k \in K} (-1)^{v_{jk}} - \prod_{k \in K} (-1)^{w_{jk}} \right) T(\vec{y_{j}}+ \vec{\mu_{j}}, \vec{w_{j}}, \vec{v_{j}}) \nonumber \\ 
    &=\phantom{+} 2 \, \sum_{ ( \vec{y_{j}}, \vec{v_j}, \vec{w_j} ) \in \Delta( \mathscr{Y}, \; \mathscr{V}, \; \mathscr{W} ) } \alpha_{j} \sum_{k=1}^{n} h_{k} \left( v_{jk} - w_{jk} \right) T(\vec{y_{j}}, \vec{w_{j}}, \vec{v_{j}}) \nonumber \\
    &\phantom{= } + 2 \, \sum_{ ( \vec{y_{j}}, \vec{v_j}, \vec{w_j} ) \in \Delta( \mathscr{Y}, \; \mathscr{V}, \; \mathscr{W} ) } \alpha_{j}^{\dg} \sum_{k=1}^{n} h_{k} \left( w_{jk} - v_{jk} \right) T(\vec{y_{j}}, \vec{w_{j}}, \vec{v_{j}}) \nonumber \\
    &\phantom{= } + \sum_{ ( \vec{y_{j}}, \vec{v_j}, \vec{w_j} ) \in \Delta( \mathscr{Y}, \; \mathscr{V}, \; \mathscr{W} ) } \alpha_{j} \sum_{kl} J_{kl} \left( (-1)^{w_{jk} + w_{jl}} - (-1)^{v_{jk} + v_{jl}} \right) \left( \sigma_{k}^{z} \right)^{\mu_{jk}} \left( \sigma_{l}^{z} \right)^{\mu_{jl}} T(\vec{y_{j}}, \vec{w_{j}}, \vec{v_{j}}) \nonumber \\
    &\phantom{= } + \sum_{ ( \vec{y_{j}}, \vec{v_j}, \vec{w_j} ) \in \Delta( \mathscr{Y}, \; \mathscr{V}, \; \mathscr{W} ) } \alpha_{j}^{\dg} \sum_{kl}^{n} J_{kl} \left( (-1)^{v_{jk} + v_{jl}} - (-1)^{w_{jk} + w_{jl}} \right) \left( \sigma_{k}^{z} \right)^{\mu_{jk}} \left( \sigma_{l}^{z} \right)^{\mu_{jl}} T(\vec{y_{j}}, \vec{w_{j}}, \vec{v_{j}})
\end{align}

\quad To get a sufficient condition, assume terms associated with $ \{ h_{i} \} $ and $ \{ J_{ij} \} $ are zero respectively. The first condition is the previous result $ \vec{h} \cdot \left( \vec{v_{j}} - \vec{w_{j}} \right) = 0 $. A sufficient condition on the second is 
\begin{align} 
\sum_{kl} J_{kl} \left( (-1)^{w_{jk} + w_{jl}} - (-1)^{v_{jk} + v_{jl}} \right) &= \sum_{kl} J_{kl} \left( (1 - 2 \, w_{jk})(1 - 2 \, w_{jl}) - (1 - 2 \, v_{jk})(1 - 2 \, v_{jl}) \right) \nonumber \\ 
&= (\mathbf{1} - 2 \, \vec{w}_{j})^{T} \cdot J^M  \cdot (\mathbf{1} - 2 \, \vec{w}_{j}) - (\mathbf{1} - 2 \, \vec{v}_{j})^{T} \cdot J^M  \cdot (\mathbf{1} - 2 \, \vec{v}_{j}) \nonumber \\
&= 0,
\end{align}

where $J^{M} \in \R^{n \times n}$ is the symmetric matrix of coefficients $(J^{M})_{jk} = J_{jk} $.

\begin{thm}[Sufficient Condition for Quadratic] \label{algconcom}
	A Hermitian Matrix $ H $ commutes with an embedding of a quadratic constraint $ C $ if $ \vec{h} \cdot \vec{v} = \vec{h} \cdot \vec{w} $ and $ (\mathbf{1} - 2 \, \vec{v})^{T} \cdot J^{M} \cdot (\mathbf{1} - 2 \, \vec{v}) = (\mathbf{1} - 2 \, \vec{w})^{T} \cdot J^{M} \cdot (\mathbf{1} - 2 \, \vec{w}) $ for all $ \vec{v}, \vec{w} $. 
\end{thm}

\quad While this condition is sufficient, it is not necessary for $ H $ to commute with the constraint embedding operator. We consider a simple concrete counterexample. Consider constraints associated with the maximum independent set problem, where we wish to maximize a set of vertices $ S \subseteq V $ such that no $ v \in S $ has an edge to any other $ w \in S $. This can be represented with the quadratic constraints: 
\begin{align}
    \forall v \in V, \; \sum_{(v,w) \in E} x_{v} \, x_{w} = 0. 
\end{align}

\quad Clearly the Hamiltonian $ (\sigma_{v}^{+} + \sigma_{v}^{-}) \prod_{(v,w) \in E} \sigma_{w}^{0} = \left( \sigma_{v}^{+} + \sigma_{v}^{-} \right) \prod_{(v,w) \in E} \left( \mathbbm{1} + \sigma_{w}^{z} \right) $ commutes with the embedded constraint, but does not satisfy this sufficient condition (see Sec.~\ref{sec:algecond}). Note that the expressions found in Sec.~\ref{sec:algecond} are more useful for this case, since $ \sigma^{0} $ and $ \sigma^{1} $ are more naturally expressed in that formulation. 

\section{Algebraic Condition for Commutation and Anticommutation of Linear Constraint Drivers} \label{sec:alg_driver_com}

\newcommand{\Ga}{T(\z,\z,\e_1, \e_2) + \text{h.c.}}
\newcommand{\Gb}{T(\z,\z,\e_2, \e_3) + \text{h.c.}}
\newcommand{\Gc}{T(\z,\z,\e_3, \e_4) + \text{h.c.}}
\newcommand{\Gd}{T(\z,\z,\e_{12}, \e_{34}) + \text{h.c.}}
\newcommand{\Ge}{T(\z,\z,\e_{34}, \e_{56}) + \text{h.c.}}
\newcommand{\Gab}{T(\z,\z,\e_{1},\e_{3}) + \text{h.c.}}

A specific class of interest is the commutation and anticommutation of driver terms that have linear constraints imposed on them. They have a natural reduced representation over \textit{only} $\{ \sigma^{+}, \sigma^{-} \}^{\otimes n} $. Pertinent to this discussion are the identities related to the left and right multiplication of $\sigma^{+}$ and $\sigma^{-}$, specifically that $ \sigma^{+} \sigma^{+} = \sigma^{-} \sigma^{-} = 0 $ while $ \sigma^{+} \sigma^{-} = \sigma^{1} $ and $ \sigma^{-} \sigma^{+} = \sigma^{0} $. 

\subsection{Motivating Examples}\label{subsec:lincom_exams}

We begin with a few motivating examples. Recall that $\z$ is the null vector and one-hot vector $\e_{j} = (\underbrace{0, \ldots, 0}_{1:j-1}, 1,\underbrace{0,\ldots,0}_{j+1:n})$. Consider 
\begin{align} 
\Ga  &= \sigma_{1}^{+} \sigma_{2}^{-} + \sigma_{1}^{-} \sigma_{2}^{+} \\ 
\Gb  &= \sigma_{2}^{+} \sigma_{3}^{-} + \sigma_{2}^{-} \sigma_{3}^{+}
\end{align} 

We have 
\begin{align}
\left( \Ga \right) \, \left( \Gb \right) &= (\sigma_{1}^{+} \sigma_{2}^{-} + \sigma_{1}^{-} \sigma_{2}^{+}) (\sigma_{2}^{+} \sigma_{3}^{-} + \sigma_{2}^{-} \sigma_{3}^{+}) \nonumber  \\
&= \sigma_{1}^{+} \sigma_{2}^{0} \sigma_{3}^{-} + \sigma_{1}^{-} \sigma_{2}^{1} \sigma_{3}^{+} \\ 
\left( \Gb \right) \, \left( \Ga \right) &= (\sigma_{2}^{+} \sigma_{3}^{-} + \sigma_{2}^{-} \sigma_{3}^{+}) (\sigma_{1}^{+} \sigma_{2}^{-} + \sigma_{1}^{-} \sigma_{2}^{+}) \nonumber \\
&= \sigma_{1}^{+} \sigma_{2}^{1} \sigma_{3}^{-} + \sigma_{1}^{-} \sigma_{2}^{0} \sigma_{3}^{+} \\ 
\left[ \left( \Ga \right), \left( \Gb \right) \right] &= \sigma_{1}^{+} ( \sigma_{2}^{0} - \sigma_{2}^{1} ) \sigma_{3}^{-} + \sigma_{1}^{-} ( \sigma_{2}^{1} - \sigma_{2}^{0} ) \sigma_{3}^{+} \nonumber \\ 
&= \sigma_{1}^{+} \sigma_{2}^{z} \sigma_{3}^{-} - \sigma_{1}^{-} \sigma_{2}^{z} \sigma_{3}^{+} \\ 
\left\{ \left( \Ga \right), \left( \Gb \right) \right\} &= \sigma_{1}^{+} ( \sigma_{2}^{0} + \sigma_{2}^{1} ) \sigma_{3}^{-} + \sigma_{1}^{-} ( \sigma_{2}^{0} + \sigma_{2}^{1} ) \sigma_{3}^{+} \nonumber \\ 
&= \sigma_{1}^{+} \sigma_{3}^{-} + \sigma_{1}^{-} \sigma_{3}^{+} \nonumber \\ 
&= T(\z,\z,\e_1,\e_3) + \text{h.c.}  
\end{align}

Consider 
\begin{align}
\Gc &= \sigma_{3}^{+} \sigma_{4}^{-} + \sigma_{3}^{-} \sigma_{4}^{+}
\end{align}

We have 
\begin{align}
\left( \Ga \right) \, \left( \Gc \right) &= \left( \Gc \right) \left( \Ga \right)  \nonumber \\ 
&= \left\{ \left( \Ga \right), \left( \Gc \right) \right\} / 2 \nonumber \\ 
&=  ( \sigma_{1}^{+} \sigma_{2}^{-} + \sigma_{1}^{-} \sigma_{2}^{+} ) ( \sigma_{3}^{+} \sigma_{4}^{-} + \sigma_{3}^{-} \sigma_{4}^{+} ) \nonumber \\
&= \sigma_{1}^{+} \sigma_{2}^{-} \sigma_{3}^{+} \sigma_{4}^{-} + \sigma_{1}^{+} \sigma_{2}^{-} \sigma_{3}^{-} \sigma_{4}^{+} + \sigma_{1}^{-} \sigma_{2}^{+} \sigma_{3}^{+} \sigma_{4}^{-} + \sigma_{1}^{-} \sigma_{2}^{+} \sigma_{3}^{-} \sigma_{4}^{+}  \nonumber \\ 
&= T(\z,\z,\e_{13},\e_{24}) + T(\z,\z,\e_{14},\e_{23}) + \text{h.c.} \\ 
[ \left( \Ga \right), \left( \Gc \right) ] &= 0 
\end{align}

To see how this generalizes for higher order terms, we use the notation $\e_{jk} = \e_{j} + \e_{k} $. Consider the $4$-local terms
\begin{align}
\Gd &= \sigma_{1}^{+} \sigma_{2}^{+} \sigma_{3}^{-} \sigma_{4}^{-} + \sigma_{1}^{-} \sigma_{2}^{-} \sigma_{3}^{+} \sigma_{4}^{+} \\ 
\Ge &= \sigma_{3}^{+} \sigma_{4}^{+} \sigma_{5}^{-} \sigma_{6}^{-} + \sigma_{3}^{-} \sigma_{4}^{-} \sigma_{5}^{+} \sigma_{6}^{+}
\end{align}

This yields
\begin{align}
\left( \Gd \right) \, \left( \Ge \right) &= \left( \sigma_{1}^{+} \sigma_{2}^{+} \sigma_{3}^{-} \sigma_{4}^{-} + \sigma_{1}^{-} \sigma_{2}^{-} \sigma_{3}^{+} \sigma_{4}^{+} \right) \left( \sigma_{3}^{+} \sigma_{4}^{+} \sigma_{5}^{-} \sigma_{6}^{-} + \sigma_{3}^{-} \sigma_{4}^{-} \sigma_{5}^{+} \sigma_{6}^{+} \right) \nonumber \\ 
&= \sigma_{1}^{+} \sigma_{2}^{+} \sigma_{3}^{0} \sigma_{4}^{0} \sigma_{5}^{-} \sigma_{6}^{-} + \sigma_{1}^{-} \sigma_{2}^{-} \sigma_{3}^{1} \sigma_{4}^{1} \sigma_{5}^{+} \sigma_{6}^{+} \nonumber \\ 
&= T(\e_{34}, \z, \e_{12}, \e_{56}) + T(\z, \e_{34}, \e_{56}, \e_{12}) \\ 
\left( \Ge \right) \, \left( \Gd \right) &= \left( \sigma_{3}^{+} \sigma_{4}^{+} \sigma_{5}^{-} \sigma_{6}^{-} + \sigma_{3}^{-} \sigma_{4}^{-} \sigma_{5}^{+} \sigma_{6}^{+} \right) \left( \sigma_{1}^{+} \sigma_{2}^{+} \sigma_{3}^{-} \sigma_{4}^{-} + \sigma_{1}^{-} \sigma_{2}^{-} \sigma_{3}^{+} \sigma_{4}^{+} \right) \nonumber \\ 
&= \sigma_{1}^{+} \sigma_{2}^{+} \sigma_{3}^{1} \sigma_{4}^{1} \sigma_{5}^{-} \sigma_{6}^{-} + \sigma_{1}^{-} \sigma_{2}^{-} \sigma_{3}^{0} \sigma_{4}^{0} \sigma_{5}^{+} \sigma_{6}^{+}  \nonumber \\ 
&= T(\z, \e_{34}, \e_{12}, \e_{56}) + T(\e_{34}, \z, \e_{56}, \e_{12}) \\ 
\left\{ \left( \Gd \right), \left( \Ge \right) \right\} &= \sigma_{1}^{+} \sigma_{2}^{+} ( \sigma_{3}^{0} \sigma_{4}^{0} + \sigma_{3}^{1} \sigma_{4}^{1} ) \sigma_{5}^{-} \sigma_{6}^{-} + \sigma_{1}^{-} \sigma_{2}^{-} ( \sigma_{3}^{0} \sigma_{4}^{0} + \sigma_{3}^{1} \sigma_{4}^{1} ) \sigma_{5}^{+} \sigma_{6}^{+} \nonumber \\ 
&= T(\e_{34}, \z, \e_{12}, \e_{56}) + T(\z, \e_{34}, \e_{12}, \e_{56}) + \text{h.c.} \\ 
[ \left( \Gd \right), \left( \Ge \right) ] &= \sigma_{1}^{+} \sigma_{2}^{+} ( \sigma_{3}^{0} \sigma_{4}^{0} - \sigma_{3}^{1} \sigma_{4}^{1} ) \sigma_{5}^{-} \sigma_{6}^{-} - \sigma_{1}^{-} \sigma_{2}^{-} ( \sigma_{3}^{0} \sigma_{4}^{0} - \sigma_{3}^{1} \sigma_{4}^{1} ) \sigma_{5}^{+} \sigma_{6}^{+} \nonumber \\ 
&= T(\e_{34}, \z, \e_{12}, \e_{56}) - T(\z, \e_{34}, \e_{12}, \e_{56}) \nonumber \\ 
&\phantom{= } - T(\e_{34}, \z, \e_{56}, \e_{12}) + T(\z, \e_{34}, \e_{56}, \e_{12}) 
\end{align}

\pagebreak 
\subsection{Formula for Multiplication, Commutation, and Anticommutation}

\quad Any simple linear constraint driver term has the form (up to coefficient $\alpha$ and its complex conjugate):
\begin{align}
    \hat{d_{1}} = T(\z, \z, \vec{v_{1}}, \vec{w_{1}}) + T( \z, \z, \vec{w_{1}}, \vec{v_{1}})
\end{align}

Then, we consider the multiplication of two linear driver terms:
\begin{align}
\hat{d_1} \hat{d_2} &= \phantom{+}
\left( T(\z, \z, \vec{v_1}, \vec{w_1}) + \text{h.c.} \right) \left( T(\z, \z, \vec{v_2}, \vec{w_2}) + \text{h.c.} \right) \nonumber \\
&= 
\phantom{+} 
\left( \prod_{j=1}^{n} \left( \sigma^{+} \right)^{v_{1j}} \left( \sigma^{-} \right)^{w_{1j}}  + \text{h.c.} \right) 
\left( \prod_{j=1}^{n} \left( \sigma^{+} \right)^{v_{2j}} \left( \sigma^{-} \right)^{w_{2j}} + \text{h.c.} \right)  \nonumber \\ 
&= \phantom{+}  
\prod_{j=1}^{n} \left( \sigma^{+} \right)^{v_{1j}+v_{2j}} \left( \sigma^{-} \right)^{w_{1j}+w_{2j}} 
+ \prod_{j=1}^{n} \left( \sigma^{+} \right)^{v_{1j} + w_{2j}} \left( \sigma^{-} \right)^{w_{1j} + v_{2j}} \nonumber \\
&\phantom{= } + 
\prod_{j=1}^{n} \left( \sigma^{+} \right)^{w_{1j}+v_{2j}} \left( \sigma^{-} \right)^{v_{1j}+w_{2j}} 
+ \prod_{j=1}^{n} \left( \sigma^{+} \right)^{w_{1j}+w_{2j}} \left( \sigma^{-} \right)^{v_{1j}+v_{2j}} \nonumber \\ %
&= \phantom{+} (1 - \Theta(\vec{v_1} \cdot \vec{v_2}))(1 - \Theta(\vec{w_1} \cdot \vec{w_2})) T(\vec{w_1} \circ \vec{v_2}, \vec{v_1} \circ \vec{w_2},\vec{v_1}' +\vec{v_2}', \vec{w_{1}}' + \vec{w_{2}}')  \nonumber \\ 
&\phantom{= } + (1 - \Theta(\vec{v_1} \cdot \vec{w_2})) (1 - \Theta(\vec{w_1} \cdot \vec{v_2})) T(\vec{w_1} \circ \vec{w_2}, \vec{v_1} \circ \vec{v_2}, \vec{v_1}'' + \vec{w_2}'', \vec{w_{1}}'' + \vec{v_{2}}'') \nonumber \\ 
&\phantom{= } + (1 - \Theta(\vec{w_1} \cdot \vec{v_2})) (1 - \Theta(\vec{v_1} \cdot \vec{w_2})) T(\vec{v_1} \circ \vec{v_2}, \vec{w_1} \circ \vec{w_2}, \vec{w_1}'' + \vec{v_2}'', \vec{v_{1}}'' + \vec{w_{2}}'') \nonumber \\ 
&\phantom{= } + (1 - \Theta(\vec{w_1} \cdot \vec{w_2})) (1 - \Theta(\vec{v_1} \cdot \vec{v_2})) T(\vec{v_1} \circ \vec{w_2}, \vec{w_1} \circ \vec{v_2}, \vec{w_1}' +\vec{w_2}', \vec{v_{1}}' + \vec{v_{2}}') \nonumber \\ 
&= \begin{cases} 
T(\vec{w_1} \circ \vec{v_2}, \vec{v_1} \circ \vec{w_2},\vec{v_1}' + \vec{v_2}', \vec{w_{1}}'+ \vec{w_{2}}') 
+ T(\vec{v_1} \circ \vec{w_2}, \vec{w_1} \circ \vec{v_2}, \vec{w_1}' + \vec{w_2}', \vec{v_{1}}' + \vec{v_{2}}'), \\ 
\phantom{========} \text{if } \vec{v_1} \cdot \vec{v_2} + \vec{w_1} \cdot \vec{w_2} = 0  \\
0, \phantom{=======}\text{otherwise} 
\end{cases} \\
&\phantom{=} + \begin{cases} 
T(\vec{w_1} \circ \vec{w_2}, \vec{v_1} \circ \vec{v_2}, \vec{v_1}'' + \vec{w_2}'', \vec{w_{1}}'' + \vec{v_{2}}'') + T(\vec{v_1} \circ \vec{v_2}, \vec{w_1} \circ \vec{w_2}, \vec{w_1}'' + \vec{v_2}'', \vec{v_{1}}'' + \vec{w_{2}}''), \\ 
\phantom{========} \text{if } \vec{v_1} \cdot \vec{w_2} + \vec{w_1} \cdot \vec{v_2} = 0  \\
0, \phantom{=======}\text{otherwise} 
\end{cases}   
\end{align}

With the primed vectors denoting that we subtract the overlap with the opposing dissimilar vector and double primed vectors denoting that we subtract the overlap with the opposing similar vector, such that
\begin{align}
\vec{v_{1}}'  &= \vec{v_{1}} - \vec{v_{1}} \circ \vec{w_{2}} , &
\vec{w_{1}}'  &= \vec{w_{1}} - \vec{w_{1}} \circ \vec{v_{2}} , \nonumber \\  
\vec{v_{2}}'  &= \vec{v_{2}} - \vec{v_{2}} \circ \vec{w_{1}} , &
\vec{w_{2}}'  &= \vec{w_{2}} - \vec{w_{2}} \circ \vec{v_{1}} , \nonumber \\  
\vec{v_{1}}'' &= \vec{v_{1}} - \vec{v_{1}} \circ \vec{v_{2}} , & 
\vec{w_{1}}'' &= \vec{w_{1}} - \vec{w_{1}} \circ \vec{w_{2}} , \nonumber \\ 
\vec{v_{2}}'' &= \vec{v_{2}} - \vec{v_{2}} \circ \vec{v_{1}} , & 
\vec{w_{2}}'' &= \vec{w_{2}} - \vec{w_{2}} \circ \vec{w_{1}} .
\end{align}

\pagebreak

\quad Then commutation leads to the skew-Hermitian term:
\begin{align}
[ \hat{d_1},  \hat{d_2} ] 
&= \hat{d_1} \hat{d_2} - \hat{d_2} \hat{d_1} \nonumber \\
&= \begin{cases} 
\left( T(\vec{w_1} \circ \vec{v_2}, \vec{v_1} \circ \vec{w_2}, \vec{v_1}' + \vec{v_2}', \vec{w_{1}}' + \vec{w_{2}}')
+ T(\vec{v_1} \circ \vec{w_2}, \vec{w_1} \circ \vec{v_2}, \vec{w_1}' + \vec{w_2}', \vec{v_{1}}' + \vec{v_{2}}') \right) 
- \text{h.c.}, \\
\phantom{========}\text{if } \vec{v_1} \cdot \vec{v_2} + \vec{w_1} \cdot \vec{w_2} = 0, \\
0, \phantom{=======} \text{otherwise}
\end{cases} \\ 
&\phantom{=} +
\begin{cases}
\left(
T(\vec{w_1} \circ \vec{w_2}, \vec{v_1} \circ \vec{v_2}, \vec{v_1}'' + \vec{w_2}'', \vec{w_{1}}'' + \vec{v_{2}}'')
+ T(\vec{v_1} \circ \vec{v_2}, \vec{w_1} \circ \vec{w_2}, \vec{w_1}'' + \vec{v_2}'', \vec{v_{1}}'' + \vec{w_{2}}'')
\right) - \text{h.c.}, \\ 
\phantom{========}\text{if } \vec{v_1} \cdot \vec{w_2} + \vec{w_1} \cdot \vec{v_2} = 0, \\
0, \phantom{=======} \text{otherwise.}
\end{cases}
\end{align}

While anticommutation leads to the Hermitian term:
\begin{align}
\left\{ \hat{d_{1}}, \hat{d_{2}} \right\}
&= \hat{d_1} \hat{d_2} + \hat{d_2} \hat{d_1} \nonumber \\
&= \begin{cases} 
\left(
T(\vec{w_1} \circ \vec{v_2}, \vec{v_1} \circ \vec{w_2}, \vec{v_1}' + \vec{v_2}', \vec{w_{1}}' + \vec{w_{2}}')
+ T(\vec{v_1} \circ \vec{w_2}, \vec{w_1} \circ \vec{v_2}, \vec{w_1}' + \vec{w_2}', \vec{v_{1}}' + \vec{v_{2}}')
\right) 
+ \text{h.c.}, \\
\phantom{========}\text{if } \vec{v_1} \cdot \vec{v_2} + \vec{w_1} \cdot \vec{w_2} = 0, \\
0, \phantom{=======}\text{otherwise}
\end{cases} \nonumber \\ 
&\phantom{=} +
\begin{cases}
\left(
T(\vec{w_1} \circ \vec{w_2}, \vec{v_1} \circ \vec{v_2}, \vec{v_1}'' + \vec{w_2}'', \vec{w_{1}}'' + \vec{v_{2}}'')
+ T(\vec{v_1} \circ \vec{v_2}, \vec{w_1} \circ \vec{w_2}, \vec{w_1}'' + \vec{v_2}'', \vec{v_{1}}'' + \vec{w_{2}}'')
\right) + \text{h.c.}, \\ 
\phantom{========}\text{if } \vec{v_1} \cdot \vec{w_2} + \vec{w_1} \cdot \vec{v_2} = 0, \\
0, \phantom{=======}\text{otherwise.}
\end{cases} \label{eq:lin_anticom}
\end{align}

The two terms generated by this anticommutation formula are utilized in the generator reduction algorithm discussed in Sec.~\ref{sec:check_gens}.

\subsection{Reduced Linear Terms}\label{subsec:redlinterms}

Linear Constraint terms do not require $\sigma^{0}$ or $\sigma^{1}$ (see \cref{thm:linconcom}) and so primitives such as Alg.~\ref{alg:find_lincom_terms} do not unnecessarily use them. However, as shown in this section, anticommutation leads to terms that do have such basis terms for some qubits, specifically when there is some overlap between the anticommutants. As an example from this section, recall:
\begin{align} 
\Gd &= \sigma_{1}^{+} \sigma_{2}^{+} \sigma_{3}^{-} \sigma_{4}^{-} + \sigma_{1}^{-} \sigma_{2}^{-} \sigma_{3}^{+} \sigma_{4}^{+} \\ 
\Ge &= \sigma_{3}^{+} \sigma_{4}^{+} \sigma_{5}^{-} \sigma_{6}^{-} + \sigma_{3}^{-} \sigma_{4}^{-} \sigma_{5}^{+} \sigma_{6}^{+} \\ 
\left\{ \left( \Gd \right), \left( \Ge \right) \right\} &= T(\e_{34}, \z, \e_{12}, \e_{56}) + T(\z, \e_{34}, \e_{12}, \e_{56}) + \text{h.c.} \\ 
&= \sigma_{1}^{+} \sigma_{2}^{+} ( \sigma_{3}^{0} \sigma_{4}^{0} + \sigma_{3}^{1} \sigma_{4}^{1} ) \sigma_{5}^{-} \sigma_{6}^{-} + \sigma_{1}^{-} \sigma_{2}^{-} ( \sigma_{3}^{0} \sigma_{4}^{0} + \sigma_{3}^{1} \sigma_{4}^{1} ) \sigma_{5}^{+} \sigma_{6}^{+} 
\end{align}

This is different from the earlier $2$-local examples from \cref{subsec:lincom_exams} which end up not needing such terms (because $\sigma^{0} + \sigma^{1} = \mathbbm{1}_{2}$). This has an important implication for generator reduction algorithms like Alg.~\ref{alg:check_gens}, since we would like to reduce generators that are not necessary. For example, $ \Ga $ and $ \Gb $ can generate $T(\z,\z,\e_{1},\e_{3}) + \text{h.c.}$ and so we can reduce this term. However, when dealing with many terms of higher locality weight than $2$, we would in general expect that terms with overlap simply cannot help reduce since all the generators have no vectors in the first two slots.

If we wish to use this for reduction of linear terms, we have to make a \textit{reduction assumption}. Specifically, given the two terms
\begin{align}
\hat{d_{1}} &= T(\z, \z, \vec{v_{1}}, \vec{w_{1}}) + T( \z, \z, \vec{w_{1}}, \vec{v_{1}}), \\ 
\hat{d_{2}} &= T(\z, \z, \vec{v_{2}}, \vec{w_{2}}) + T( \z, \z, \vec{w_{2}}, \vec{v_{2}}),
\end{align}
the anticommutator term is given by \cref{eq:lin_anticom}, which leads to the \textit{reduced terms} (since $\sigma^{0}$ and $\sigma^{1}$ are in the kernel of the commutator with the constraints):
\begin{align}
T( \z, \z, \vec{v_1}' + \vec{v_2}', \vec{w_{1}}' + \vec{w_{2}}') &+ \text{h.c.}  \nonumber \\
T( \z, \z, \vec{v_1}'' + \vec{w_2}'', \vec{w_{1}}'' + \vec{v_{2}}'') &+ \text{h.c.} \label{eq:redcomterm}
\end{align}

As such, we can add this assumption to algorithms like Alg.~\ref{alg:check_gens} to heuristically prune further generators at the potential cost of reducing the reachability of the generator set.

\section{A Sound Algorithm for Generator Reduction}\label{sec:check_gens}

\begin{figure}[!t]
\centering
\setlength{\intextsep}{0.5em}
\begin{minipage}{1.0\linewidth}
\begin{algorithm}[H]
\caption{\algcheckgen$(h, i, u, G, n)$}\label{alg:check_gens}
\begin{algorithmic}[1]
\Statex \textbf{Inputs: } current term $ h = T(\vec{x}_h, \vec{y}_h, \vec{v}_h, \vec{w}_h) $, current index $ i $, target term $ u = T(\z, \z, \vec{v}_u, \vec{w}_u) $, generator list $ G $, number of variables $ n $ 
\If{$\vec{x}_{h} \neq \z$ or $\vec{y}_{h} \neq \z $} \Comment{Since $\vec{x}_u = \vec{y}_u = \z $, end branch}
    \State \textbf{return False.}
\EndIf 
\If{$ i > n $}
    \State \textbf{return True.} \Comment{$ h $ matched $ u $ on all indices.}
\EndIf 
\If{($ u $ has a term on index $ i $) \textbf{and} ($ h $ has a term that agrees with $ u $ on index $ i $)}
    \State \textbf{return} \algcheckgen$(h, i+1, u, G, n)$. \Comment{$h$ may become $u$, move onto index $i+1$.}
\EndIf 
\If{($ u $ has a term on index $ i $) \textbf{and} ($ h $ has a term that disagrees with $ u $ on index $ i $)}
    \State \textbf{return False.} 
\EndIf 
\If{($ u $ does not have a term on index $ i $) \textbf{and} ($ h $ has a term on index $ i $)} 
    \For{$g \in G$ such that the first non-zero entry of $g$ is on index $i$}
        \State update $h$ to the matching candidate in expansion of $ \{ h + h^{\dg} , g + g^{\dg} \} $ \Comment{Using \cref{eq:lin_anticom}}
        \If{$h$ has no term on $i$ and \algcheckgen$(h,i+1,u,G,n)$} 
            \State \textbf{return True.} 
        \EndIf
        \State revert $h$
    \EndFor 
\EndIf 
\If{($ u $ has a term on index $ i $) \textbf{and} ($ h $ does not have a term on index $ i $)}
\For{$ g \in G $ such that the first non-zero entry of $ g $ is on index $ i $} 
    \State update $ h $ to the matching candidate in expansion of $ \{ h + h^{\dg} , g + g^{\dg} \} $ \Comment{Using \cref{eq:lin_anticom}}
    \If{$h$ has a term that agrees with $u$ on index $i$} %
    \If{\algcheckgen$(h,i+1,u,G,n)$}
        \State \textbf{return True.} \Comment{Covered all indices of term $u$.}
    \EndIf 
    \EndIf
    \State revert $ h $
\EndFor
\EndIf 
\If{($ u $ has no term on index $ i $) \textbf{and} ($ h $ has no term on index $ i $)}
    \If{\algcheckgen$(h,i+1,u,G,n)$}
        \State \textbf{return True.}
    \EndIf 
\EndIf
\State \textbf{return False.}
\end{algorithmic}
\end{algorithm}
\end{minipage}
\vspace{-1em}
\caption*{{Alg. \ref{alg:check_gens}}: \textbf{Reducing Generators.} This algorithm checks if a given term, represented by a vector $ u $, can be generated by a collection of terms $ G $ (also represented by vectors) by attempting to generate candidate terms $ h $ that could match $ u $ through backtracking. Starting from $i=1$, the algorithm recursively compares the current term $h$ with the target term $u$ and updates $h$ using generators $g \in G$ so as to cover or remove support at index $i$. At each recursive step, when the expansion of an anticommutator produces multiple candidate terms, the algorithm follows the candidate that agrees with the already fixed indices $j < i$ and has the required behavior at the current index $i$.}
\setlength{\intextsep}{\ointextsep}
\end{figure}

Alg.~\ref{alg:check_gens} is a sound but incomplete backtracking algorithm for generator reduction, although it can be modified based on \cref{subsec:redlinterms} to be more aggressive and therefore approximate. We describe the algorithm in words. The algorithm explores possible anticommutator generated terms through backtracking, such that for any recursive branch the current expression is $h$. The algorithm continues to update $h$ in an attempt to match $u$ until either it manages to do so or recognizes it can no longer do so. It does this by considering each qubit index. Let
\begin{align}
u &= T(\z, \z, \vec{v}_u, \vec{w}_u) , \\ 
h &= T(\z, \z, \vec{v}_h, \vec{w}_h) , \\  
g &= T(\z, \z, \vec{v}_g, \vec{w}_g) , \\ 
h' &= T(\vec{x}_{h'}, \vec{y}_{h'}, \vec{v}_{h'}, \vec{w}_{h'}) .
\end{align}

We consider a current qubit index $i$, we have $4$ possible scenarios for $u$ and $h$:
\begin{enumerate}[label=\arabic*.]
    \item $u$ and $h$ both have no term on $i$. In this case, $u$ and $h$ agree and we move onto the next index $i+1$ as a continuing recursive branch, returning the return value of that branch. 
    \item $u$ has no term, but $h$ does. Without loss of generality, let $\vec{v}_h[i] = 1$. Then we consider $g$ such that $\vec{v}_{g}[i] = 1$ so that we generate a term $h'$ such that $\vec{v}_{h'}[i] = 0$. Note that \cref{eq:lin_anticom} for $ \left\{ h + h^{\dg}, g + g^{\dg} \right\} $ generates up to two candidate terms, and at most one of them matches this condition. For example, $\Ga $ and $ \Gb $ generate $ \Gab $ Note that if $h'$ has a nonzero vectors $\vec{x}_{h'}$ or $\vec{y}_{h'}$, it will be pruned in the next recursion.
    \item $u$ has a term, but $h$ does not. In this case, we look at $g \in G$ such that it could cover $i$ and doesn't cover any indices less than $i$ (we would have considered such a $g$ earlier). At this point $h = T(\z, \z, \vec{v}_h, \vec{w}_h)$ with $\vec{v}_h[i] = \vec{w}_h[i] = 0$. Let $u = T(\z,\z,\vec{v}_u, \vec{w}_u) $. Without loss of generality, suppose $\vec{v}_{u}[i] = 1 $ and so $\vec{w}_{u}[i] = 0$. Then $g = T(\z,\z,\vec{v}_g, \vec{w}_g)$ must be such that $\vec{v}_{g}[i] = 1$ to cover $i$ congruently to $u$. In particular, the new term is $ h' = T(\vec{x}_{h'}, \vec{y}_{h'}, \vec{v}_{h'}, \vec{w}_{h'}) $ with $\vec{v}_{h'}[i] = 1$. 
    \item $u$ has a term and $h$ does as well. In this case, either they match and we continue by returning the return value of the branch with index $i+1$. If they do not match, we return False.  
\end{enumerate}

First, we show that Alg.~\ref{alg:check_gens} terminates. The first call would be \algcheckgen$(T(\z, \z, \z, \z),1,u,G,n)$ and would return either True or False. Note that steps 1-3 prune terms with $\vec{x}_{h} \neq \z $ or $ \vec{y}_{h} \neq \z $, but an approximate version of Alg.~\ref{alg:check_gens} will simply set them to $\z$. 

\begin{thm}[Termination of Alg.~\ref{alg:check_gens}]
Alg.~\ref{alg:check_gens} eventually terminates. 
\end{thm}

\begin{proof}[Proof Sketch]
In every recursive call, we increment the index from $i$ to $i+1$ and so the algorithm branch will terminate when $i=n+1$ if a recursive branch reaches that far and earlier if not. 
\end{proof}

\begin{thm}[Soundness of Alg.~\ref{alg:check_gens}]
Alg.~\ref{alg:check_gens} returns True only if $u$ can be generated by $G$.
\end{thm}

\begin{proof}[Proof Sketch]
If the algorithm returns True, it must be that a branch reached return True. Then at this point $i=n+1$ and for each index less than $n+1$, the term $h$ agreed with $u$. Then $u=h$ and $h$ was generated through $G$.
\end{proof}

Lastly, we note that Alg.~\ref{alg:check_gens} is not complete in full generality.

\begin{remark}[Incompleteness of Alg.~\ref{alg:check_gens}]
Alg.~\ref{alg:check_gens} is sound but incomplete in general. The source of incompleteness is that the algorithm restricts attention to candidate terms that remain linear, and prunes branches once nonzero first-slot or second-slot vectors appear. While this keeps the recursion within the scope of \cref{eq:lin_anticom}, it may discard branches that could contribute to a valid generator construction in the full anticommutator algebra. A complete version of the algorithm would require the more general update rule for anticommutators of terms of the form $T(\vec{x}, \vec{y}, \vec{v}, \vec{w}) + \mathrm{h.c.}$ with linear generators.
\end{remark}

\quad Note again that simply not reducing the generators and sequencing all $\mathcal{T}_{k}$ is an acceptable approach that does not harm separability or reachability.

\section{Multiplication Tables for Single Qubit Operators}\label{app:single_qubit_relations}

\subsection{For the Operators $\{ \mathbbm{1}_{2}, \sigma^{0}, \sigma^{1}, \sigma^{-}, \sigma^{+} \}$}

Relevant to results derived in Sec.~\ref{sec:algecond} and Sec.~\ref{sec:alg_driver_com} are multiplication identities for the single qubit operators 
\begin{align}
\sigma^{0} &= \ketbra{0}{0} = \left( \mathbbm{1} + \sigma^{z} \right) / 2,      \nonumber\\ 
\sigma^{1} &= \ketbra{1}{1} = \left( \mathbbm{1} - \sigma^{z} \right) / 2,      \nonumber\\ 
\sigma^{-} &= \ketbra{0}{1} = \left( \sigma^{x} + i \, \sigma^{y} \right) / 2,  \nonumber\\ 
\sigma^{+} &= \ketbra{1}{0} = \left( \sigma^{x} - i \, \sigma^{y} \right) / 2. 
\end{align}

Here we list the left-multiplication and right-multiplication identities for each matrix. Only one of $x,y,v,w$ is $1$ while the others are $0$.

\begin{align}
    \sigma^{0} &\text{ left}                \nonumber \\ 
    \sigma^{0} \sigma^{0} &= \sigma^{0}     \nonumber \\
    \sigma^{0} \sigma^{1} &= 0              \nonumber \\
    \sigma^{0} \sigma^{-} &= \sigma^{-}     \nonumber \\
    \sigma^{0} \sigma^{+} &= 0              \nonumber \\
    \sigma^{0} \left( \sigma^{0} \right)^{x} \left( \sigma^{1} \right)^{y} \left( \sigma^{+} \right)^{v} \left( \sigma^{-} \right)^{w} 
    &= \left( 1 - \Theta(y+v) \right) \left( \sigma^{0} \right)^{x} \left( \sigma^{1} \right)^{y} \left( \sigma^{+} \right)^{v} \left( \sigma^{-} \right)^{w} \nonumber \\  
    \sigma^{0} &\text{ right}               \nonumber \\ 
    \sigma^{0} \sigma^{0} &= \sigma^{0}     \nonumber \\
    \sigma^{1} \sigma^{0} &= 0              \nonumber \\
    \sigma^{-} \sigma^{0} &= 0              \nonumber \\
    \sigma^{+} \sigma^{0} &= \sigma^{+}     \nonumber \\
    \left( \sigma^{0} \right)^{x} \left( \sigma^{1} \right)^{y} \left( \sigma^{+} \right)^{v} \left( \sigma^{-} \right)^{w} \sigma^{0} &= \left( 1 - \Theta(y+w) \right) \left( \sigma^{0} \right)^{x} \left( \sigma^{1} \right)^{y} \left( \sigma^{+} \right)^{v} \left( \sigma^{-} \right)^{w}
\end{align}

\begin{align}
    \sigma^{1} &\text{ left}                \nonumber \\ 
    \sigma^{1} \sigma^{0} &= 0              \nonumber \\
    \sigma^{1} \sigma^{1} &= \sigma^{1}     \nonumber \\
    \sigma^{1} \sigma^{-} &= 0              \nonumber \\
    \sigma^{1} \sigma^{+} &= \sigma^{+}     \nonumber \\ 
    \sigma^{1} \left( \sigma^{0} \right)^{x} \left( \sigma^{1} \right)^{y} \left( \sigma^{+} \right)^{v} \left( \sigma^{-} \right)^{w} 
    &= \left( 1 - \Theta(x+w) \right) \left( \sigma^{0} \right)^{x} \left( \sigma^{1} \right)^{y} \left( \sigma^{+} \right)^{v} \left( \sigma^{-} \right)^{w} \nonumber \\ 
    \sigma^{1} &\text{ right}               \nonumber \\ 
    \sigma^{0} \sigma^{1} &= 0              \nonumber \\
    \sigma^{1} \sigma^{1} &= \sigma^{1}     \nonumber \\
    \sigma^{-} \sigma^{1} &= \sigma^{-}     \nonumber \\
    \sigma^{+} \sigma^{1} &= 0              \nonumber \\
    \left( \sigma^{0} \right)^{x} \left( \sigma^{1} \right)^{y} \left( \sigma^{+} \right)^{v} \left( \sigma^{-} \right)^{w} \sigma^{1} 
    &= \left( 1 - \Theta(x+v) \right) \left( \sigma^{0} \right)^{x} \left( \sigma^{1} \right)^{y} \left( \sigma^{+} \right)^{v} \left( \sigma^{-} \right)^{w}  
\end{align}

\begin{align}
    \sigma^{+} &\text{ left}                \nonumber \\ 
    \sigma^{+} \sigma^{0} &= \sigma^{+}     \nonumber \\
    \sigma^{+} \sigma^{1} &= 0              \nonumber \\
    \sigma^{+} \sigma^{-} &= \sigma^{1}     \nonumber \\
    \sigma^{+} \sigma^{+} &= 0              \nonumber \\ 
    \sigma^{+} \left( \sigma^{0} \right)^{x} \left( \sigma^{1} \right)^{y} \left( \sigma^{+} \right)^{v} \left( \sigma^{-} \right)^{w} 
    &= \left( 1 - \Theta(y+v) \right) \left( \sigma^{1} \right)^{y+w} \left( \sigma^{+} \right)^{v+x}                             \nonumber \\
    \sigma^{+} &\text{ right}               \nonumber \\ 
    \sigma^{0} \sigma^{+} &= 0              \nonumber \\
    \sigma^{1} \sigma^{+} &= \sigma^{+}     \nonumber \\
    \sigma^{-} \sigma^{+} &= \sigma^{0}     \nonumber \\
    \sigma^{+} \sigma^{+} &= 0              \nonumber \\
    \left( \sigma^{0} \right)^{x} \left( \sigma^{1} \right)^{y} \left( \sigma^{+} \right)^{v} \left( \sigma^{-} \right)^{w} \sigma^{+} 
    &= \left( 1 - \Theta(x+v) \right) \left( \sigma^{0} \right)^{x+w} \left( \sigma^{+} \right)^{v+y}           
\end{align}

\begin{align}
    \sigma^{-} &\text{ left}                \nonumber \\ 
    \sigma^{-} \sigma^{0} &= 0              \nonumber \\
    \sigma^{-} \sigma^{1} &= \sigma^{-}     \nonumber \\
    \sigma^{-} \sigma^{-} &= 0              \nonumber \\
    \sigma^{-} \sigma^{+} &= \sigma^{0}     \nonumber \\ 
    \sigma^{-} \left( \sigma^{0} \right)^{x} \left( \sigma^{1} \right)^{y} \left( \sigma^{+} \right)^{v} \left( \sigma^{-} \right)^{w} 
    &= \left( 1 - \Theta(x+w) \right) \left( \sigma^{0} \right)^{x+v} \left( \sigma^{-} \right)^{y+w} \nonumber \\ 
    \sigma^{-} &\text{ right}               \nonumber \\ 
    \sigma^{0} \sigma^{-} &= \sigma^{-}     \nonumber \\
    \sigma^{1} \sigma^{-} &= 0              \nonumber \\
    \sigma^{-} \sigma^{-} &= 0              \nonumber \\
    \sigma^{+} \sigma^{-} &= \sigma^{1}     \nonumber \\
    \left( \sigma^{0} \right)^{x} \left( \sigma^{1} \right)^{y} \left( \sigma^{+} \right)^{v} \left( \sigma^{-} \right)^{w} \sigma^{-} 
    &= \left( 1 - \Theta(y+w) \right) \left( \sigma^{1} \right)^{y+v} \left( \sigma^{-} \right)^{w+x}  
\end{align}
\pagebreak
\subsection{For the Operators $\{ \mathbbm{1}_{2}, \sigma^{z}, \sigma^{-}, \sigma^{+} \}$}

Relevant to results derived in App.~\ref{app:suffquad} are multiplication identities for the single qubit operators 
\begin{align}
\mathbbm{1} &= \ketbra{0}{0} + \ketbra{1}{1} , \nonumber \\ 
\sigma^{z}  &= \ketbra{0}{0} - \ketbra{1}{1} , \nonumber \\ 
\sigma^{-} &= \ketbra{0}{1} = \left( \sigma^{x} + i \, \sigma^{y} \right) / 2 ,  \nonumber\\ 
\sigma^{+} &= \ketbra{1}{0} = \left( \sigma^{x} - i \, \sigma^{y} \right) / 2 . 
\end{align}

Here we list the left-multiplication and right-multiplication identities for each matrix. Assuming at most one of $x,v,w$ is $1$ and the rest $0$.

\begin{align}
    \sigma^{z} &\text{ left}                \nonumber \\ 
    \sigma^{z} \sigma^{z} &=  \mathbbm{1}   \nonumber \\
    \sigma^{z} \sigma^{-} &=  \sigma^{-}    \nonumber \\
    \sigma^{z} \sigma^{+} &= -\sigma^{+}    \nonumber \\
    \sigma^{z} \left( \sigma^{z} \right)^{x} \left( \sigma^{+} \right)^{v} \left( \sigma^{-} \right)^{w} 
    &= \left( -1 \right)^{v} \left( \sigma^{+} \right)^{v} \left( \sigma^{-} \right)^{w} \left( \sigma^{z} \right)^{x+1} \nonumber \\  
    \sigma^{z} &\text{ right}               \nonumber \\ 
    \sigma^{z} \sigma^{z} &=  \mathbbm{1}   \nonumber \\
    \sigma^{-} \sigma^{z} &= -\sigma^{-}    \nonumber \\
    \sigma^{+} \sigma^{z} &=  \sigma^{+}    \nonumber \\
    \left( \sigma^{z} \right)^{x} \left( \sigma^{+} \right)^{v} \left( \sigma^{-} \right)^{w} \sigma^{z} &= \left( -1 \right)^{w} \left( \sigma^{+} \right)^{v} \left( \sigma^{-} \right)^{w} \left( \sigma^{z} \right)^{x+1}
\end{align}

\end{document}